\documentclass[a4paper,USenglish,numberwithinsect]{lipics-v2019}

\usepackage{microtype}
\usepackage{amsmath}
\usepackage{mathtools}
\usepackage{proof}

\hideLIPIcs

\usepackage{xspace}
\usepackage{mathpartir}
\usepackage{bussproofs}


\usepackage[table]{xcolor}

\usepackage[T1]{fontenc}
\usepackage{lmodern}

\usepackage{stmaryrd}




\newcommand{\balan}[1]{\ensuremath{\mathsf{balanced}(#1)}}
\newcommand{\subst}[2]{\{^{#1}\!/{\scriptstyle #2}\}}

\newcommand{\lrangle}[1]{\langle #1 \rangle}
\newcommand{\blrangle}[1]{\big\langle #1 \big\rangle}


\newcommand{\brtext}[1]{[\textrm{\small #1}]}

\newcommand{\orule}[1]{{\scriptsize{\brtext{#1}}}}




\newcommand{\bnfis}{\;\;::=\;\;}
\newcommand{\bnfbar}{\;\;\;|\;\;\;}
\newcommand{\sbnfbar}{\;\;|\;\;}








\newcommand{\set}[1]{\{#1\}}
\newcommand{\es}{\emptyset}

\newcommand{\dom}[1]{\mathtt{dom}(#1)}


\newcommand{\freev}[1]{\lrangle{#1}}
\newcommand{\boundv}[1]{(#1)}


\newcommand{\inact}{\mathbf{0}}

\newcommand{\Par}{\;|\;}
\newcommand{\news}[1]{(\nu\, #1)}
\newcommand{\newsp}[2]{(\nu\, #1)(#2)}

\newcommand{\varp}[1]{#1}
\newcommand{\rvar}[1]{#1}
\newcommand{\recp}[2]{\mu \rvar{#1}. #2}


\newcommand{\fn}[1]{\mathtt{fn}(#1)}
\newcommand{\rfn}[1]{\mathtt{rn}(#1)}

\newcommand{\fv}[1]{\mathtt{fv}(#1)}

\newcommand{\fs}[1]{\mathtt{fs}(#1)}



\newcommand{\scong}{\equiv}

\newcommand{\red}{\longrightarrow}

\def\subst#1#2{\{\raisebox{.5ex}{\small$#1$}\! / \mbox{\small$#2$}\}}







\newcommand{\vart}[1]{\mathsf{#1}}


\newcommand{\shsep}{.}
\newcommand{\outses}{!}
\newcommand{\inpses}{?}
\newcommand{\selses}{\triangleleft}
\newcommand{\brases}{\triangleright}
\newcommand{\dual}[1]{\overline{#1}}
\newcommand{\cat}{,}


\newcommand{\bout}[2]{#1 \outses \freev{#2} \shsep}
\newcommand{\about}[2]{#1 \outses \freev{#2}}

\newcommand{\bbout}[2]{#1 \outses \blrangle{#2} \shsep}
\newcommand{\abbout}[2]{#1 \outses \blrangle{#2}}

\newcommand{\abinp}[2]{#1 \inpses \boundv{#2}}
\newcommand{\binp}[2]{#1 \inpses \boundv{#2} \shsep}
\newcommand{\bsel}[2]{#1 \selses #2 \shsep}

\newcommand{\bbra}[2]{#1 \brases \set{#2}}

\newcommand{\tfont}[1]{\mathtt{#1}}
\newcommand{\tsep}{;}

\newcommand{\chtype}[1]{\lrangle{#1}}

\newcommand{\outtype}{\outses}
\newcommand{\inptype}{\inpses}

\newcommand{\trec}[2]{\mu\vart{#1}.#2}
\newcommand{\tvar}[1]{\vart{#1}}

\newcommand{\tinact}{\tfont{end}}



\newcommand{\btout}[1]{\outtype \lrangle{#1} \tsep}

\newcommand{\btinp}[1]{\inptype (#1) \tsep}

\newcommand{\btsel}[1]{\oplus \set{#1}}

\newcommand{\btbra}[1]{\& \set{#1}}






\newcommand{\proves}{\vdash}
\newcommand{\hastype}{\triangleright}









\makeatletter
\newcommand{\xMapsto}[2][]{\ext@arrow 0599{\Mapstofill@}{#1}{#2}}
\def\Mapstofill@{\arrowfill@{\Mapstochar\Relbar}\Relbar\Rightarrow}
\makeatother

\newcommand{\map}[1]{\ensuremath{\llbracket#1\rrbracket}}
\newcommand{\dmap}[1]{\ensuremath{\{\!|#1|\!\}}}


\newcommand{\dualof}{\ \mathsf{dual}\ }


\newcommand{\HOp}{\ensuremath{\mathsf{HO}\pi}\xspace}
\newcommand{\HO}{\ensuremath{\mathsf{HO}}\xspace}

\newcommand{\Proc}{\ensuremath{\diamond}}

\newcommand{\appl}[2]{#1\, {#2}}
\newcommand{\abs}[2]{\lambda #1.\,#2}

\newcommand{\lollipop}{\multimap}
\newcommand{\sharedop}{\rightarrow}

\newcommand{\lhot}[1]{#1\! \lollipop\! \diamond}
\newcommand{\shot}[1]{#1\! \sharedop\! \diamond}
\newcommand{\lhotup}[1]{#1^{\lollipop}}
\newcommand{\shotup}[1]{#1^{\sharedop}}





























\newtheorem{notation}{Notation}






\newif\ifny\nyfalse

\newif\ifdm\dmtrue

\newif\ifrhu\rhutrue

\newif\ifjp\jptrue

\newif\ifjp\jptrue

\newcommand{\newj}[1]{{#1}}
\newcommand{\newjb}[1]{{#1}}

\newcommand{\AT}[2]{#1 \! : \! #2}

\newcommand{\secref}[1]{\S\,\ref{#1}}
\newcommand{\defref}[1]{Def.~\ref{#1}}

\newcommand{\figref}[1]{Fig.~\ref{#1}}
\newcommand{\thmref}[1]{Thm.~\ref{#1}}

\newcommand{\exref}[1]{Exam.~\ref{#1}}

\newcommand{\lemref}[1]{Lem.~\ref{#1}}
\newcommand{\remref}[1]{Rem.~\ref{#1}}

\newcommand{\tabref}[1]{Tab.~\ref{#1}}

\definecolor{lightgray}{gray}{0.75}



\makeatletter
\newcommand*{\rom}[1]{({\expandafter\romannumeral #1})}
\makeatother

\newcommand{\defas}{\triangleq}

\newcommand{\B}[3]{\mathcal{B}^{#1}_{#2}\big(#3\big)}
\newcommand{\mB}[3]{\mathsf{B}^{#1}_{#2}\big(#3\big)}
\newcommand{\V}[3]{\mathcal{V}^{#1}_{#2}\big(#3\big)}
\newcommand{\mV}[3]{\mathsf{V}^{#1}_{#2}\big(#3\big)}
\newcommand{\Gt}[1]{\mathcal{G}(#1)}
\newcommand{\Rt}[1]{\mathcal{R}(#1)}
\newcommand{\Rts}[3]{\mathcal{R}^{\star}(#3)}
\newcommand{\envR}{\Delta_\mu}
\newcommand{\envPropR}{\Phi}
\newcommand{\D}[1]{\mathcal{D}(#1)}
\newcommand{\mD}[1]{\mathsf{D}(#1)}
\newcommand{\len}[1]{|#1|}
\newcommand{\incrname}[2]{\subst{#1_{#2+1}}{#1_#2}}

\newcommand{\prop}{c}
\newcommand{\apropinp}[1]{\abinp{\prop_{#1}}}
\newcommand{\propinp}[1]{\binp{\prop_{#1}}}

\newcommand{\propout}[1]{\bout{\dual{\prop_{#1}}}}
\newcommand{\apropout}[1]{\about{\dual{\prop_{#1}}}}
\newcommand{\proprinp}[1]{\binp{\prop^{#1}}}
\newcommand{\proprout}[1]{\bout{\prop^{#1}}}

\newcommand{\propbout}[1]{\bbout{\dual{\prop_{#1}}}}
\newcommand{\apropbout}[1]{\abbout{\dual{\prop_{#1}}}}
\newcommand{\slhotup}[1]{{#1}^{\leadsto}}
\newcommand{\slhot}[1]{{#1} \leadsto \diamond}
\newcommand{\degree}{l}
\newcommand{\thunk}[1]{\{\!\{#1\}\!\}}
\newcommand{\appthunk}[1]{\mathtt{run}\,{#1}}
\newcommand{\appendx}[1]{#1}

\newcommand{\mugamma}{\gamma}
\newcommand{\misty}{\textsf{MISTY}\xspace}

\EventEditors{Alastair F. Donaldson}
\EventNoEds{1}
\EventLongTitle{33rd European Conference on Object-Oriented Programming (ECOOP 2019)}
\EventShortTitle{ECOOP 2019}
\EventAcronym{ECOOP}
\EventYear{2019}
\EventDate{July 15--19, 2019}
\EventLocation{London, United Kingdom}
\EventLogo{}
\SeriesVolume{134}
\ArticleNo{11}



\bibliographystyle{plainurl}

\title{Minimal Session Types}

\titlerunning{Minimal Session Types}


\author{Alen Arslanagi\'{c}}{University of Groningen, The Netherlands}{}{https://orcid.org/0000-0002-0292-478X}{}
\author{Jorge A. P\'{e}rez}{University of Groningen, The Netherlands}{}{https://orcid.org/0000-0002-1452-6180}{}
\author{Erik Voogd}{University of Groningen, The Netherlands}{}{}{}

\authorrunning{A.\,Arslanagi\'{c}, J.\,A.\,P\'{e}rez, and E.\,Voogd}

\Copyright{A.\,Arslanagi\'{c}, J.\,A. P\'{e}rez, and E.\,Voogd}


\ccsdesc[100]{Theory of computation~Type structures}
\ccsdesc[500]{Theory of computation~Process calculi}
\ccsdesc[100]{Software and its engineering~Concurrent programming structures}
\ccsdesc[100]{Software and its engineering~Message passing}

\keywords{Session types, process calculi, $\pi$-calculus.}




\funding{Work partially supported by the Netherlands Organization for Scientific Research (NWO) under the VIDI Project No.\ 016.Vidi.189.046 (Unifying Correctness for Communicating Software).}

\acknowledgements{
We are grateful to the anonymous reviewers for their remarks and questions. 
P\'{e}rez is also with CWI, Amsterdam and the NOVA Laboratory for Computer Science and Informatics (FCT grant NOVA LINCS PEst/UID/CEC/04516/2013), Universidade Nova de Lisboa, Portugal.}

\nolinenumbers 

\begin{document}

\maketitle
\begin{abstract}
Session types are a type-based approach to the verification of message-passing programs.
They have been
much studied
as type systems for the $\pi$-calculus and for languages such as Java.
A session type specifies what and when should be exchanged through a channel.
Central to session-typed languages are constructs
in types and processes that specify \emph{sequencing} in
 protocols.

Here we study \emph{minimal session types}, session types without sequencing.
This is arguably the simplest form of session types.
By relying on a core process calculus with sessions and higher-order concurrency (abstraction-passing),
we prove that every process typable with usual (non minimal) session types can be compiled down into a process typed
with minimal session types. This means that having sequencing constructs in both processes and session types is redundant; only
sequentiality in processes is indispensable, as it can precisely codify sequentiality in types.

Our developments draw inspiration from work by Parrow on behavior-preserving decompositions of untyped processes.
By casting Parrow's results in the realm of typed processes, our results reveal a conceptually simple formulation of session types and a
principled avenue to the integration of session types into languages without sequencing in types.
\end{abstract}

\section{Introduction}
\label{s:intro}
Session types are a type-based approach to the verification of
message-passing programs. A session type specifies what and when should be exchanged through a channel;
this makes them a useful tool to enforce safety and liveness properties related to communication correctness.
Session types have had a significant impact on the foundations of programming languages~\cite{DBLP:journals/csur/HuttelLVCCDMPRT16}, but also on their practice~\cite{DBLP:journals/ftpl/AnconaBB0CDGGGH16}. In particular, the interplay of session types and
 object-oriented languages has received much attention (cf.~\cite{DBLP:conf/ecoop/Dezani-CiancagliniMYD06,DBLP:conf/fmco/Dezani-CiancagliniGDY06,DBLP:conf/ecoop/HuYH08,DBLP:conf/popl/GayVRGC10,DBLP:conf/agere/BalzerP15,DBLP:conf/ppdp/KouzapasDPG16,DBLP:conf/ecoop/ScalasY16}).
 In this work, our goal is to understand to what extent session types can admit simpler, more fundamental formulations. This foundational question has concrete practical ramifications, as we discuss next.


In session-typed languages,   \emph{sequencing} constructs in types and processes specify the intended structure of  message-passing protocols.
In the session type $S = \btinp{\mathsf{Int}} \btinp{\mathsf{Int}} \btout{\mathsf{Bool}} \tinact$,
sequencing (denoted `$;$') allows us to
specify a protocol for a channel that \emph{first} receives (?) two integers, \emph{then} sends (!) a Boolean, and \emph{finally} ends.
As such, $S$ could type a service that checks for integer equality.
Sequencing in types goes hand-in-hand with sequencing in processes, which is specified using   prefix constructs (denoted~`$.$').
The $\pi$-calculus process $P =  \binp{s}{x_1}  \binp{s}{x_2} \bout{s}{b} \inact$
 is an implementation of the equality service:
it \emph{first} expects two values on name $s$, \emph{then} outputs a Boolean on $s$, and \emph{finally} stops.
Thus, name $s$ in $P$ conforms to  type~$S$.
Session types
can also specify sequencing within labeled choices and recursion; these typed constructs are also in close match with their respective process expressions.

Originally developed on top of the $\pi$-calculus
for the analysis  of message-passing protocols between exactly two parties~\cite{honda.vasconcelos.kubo:language-primitives},
  session types have been extended in many directions.
  We find, for instance,
  multiparty session types~\cite{HYC08} and
  extensions with dependent types, assertions, exceptions, and time (cf.~\cite{DL10,DBLP:journals/csur/HuttelLVCCDMPRT16} for surveys).
All these extensions seek to address natural research questions on the expressivity and applicability of session types theories.

Here we address a different, if opposite, question: \emph{is there a minimal formulation of session types?}
This is an appealing question from a theoretical perspective, but seems particularly relevant to the practice of session types:
identifying the ``core'' of session types could enable their integration in languages
whose type systems do not have
advanced
constructs 
present in session types (such as sequencing).
For instance, the Go programming language offers  primitive support for message-passing concurrency; it comes with a static verification mechanism which can only enforce that messages exchanged along channels correspond with their declared payload types---it cannot ensure essential correctness properties associated to the structure of protocols.
This observation has motivated the development of advanced static verification tools based on session types for Go programs~\cite{DBLP:conf/cc/NgY16,DBLP:conf/icse/LangeNTY18}.


This paper identifies an elementary formulation of session types and studies its properties. We call them \emph{minimal session types}: these are session types without sequencing. That is, in
session types such as `$\btout{U} S$' and `$\btinp{U} S$', we decree that $S$ can only correspond to $\tinact$, the type of the terminated protocol.

Adopting this elementary formulation entails dispensing with sequencing, which is one of the most distinctive features of session types.
While this may appear as a far too drastic restriction, it turns out that it is not:
our main result is that for every process $P$ that is well-typed under standard (non minimal) session types, there is a  \emph{process decomposition} $\D{P}$ that is well-typed using minimal session types.
Intuitively,  $\D{P}$ codifies the sequencing information given by the session types (protocols) of $P$ using additional synchronizations.
This shows that having sequencing in both types and processes is redundant; only sequencing at the level of processes is truly fundamental.
To define $\D{P}$ we draw inspiration from a known result by Parrow~\cite{DBLP:conf/birthday/Parrow00}, who proved that untyped $\pi$-calculus processes can be decomposed as a collection of \emph{trios processes}, i.e., processes with at most three nested prefixes~\cite{DBLP:conf/birthday/Parrow00}.

The question of how to relate session types with other type systems has attracted interest in the past.
Session types have been encoded into generic types~\cite{DBLP:journals/corr/GayGR14} and linear types~\cite{DBLP:conf/concur/DemangeonH11,DBLP:conf/ppdp/DardhaGS12,DBLP:journals/iandc/DardhaGS17}.
As such, these prior studies concern the \emph{relative expressiveness} of session types: where the expressivity of session types stands with respect to that of some other type system.
In sharp contrast, we study the \emph{absolute expressiveness} of session types: how session types can be explained in terms of themselves.
To our knowledge, this is the first study of its kind. 

The process language that we consider for decomposition into minimal session types is \HO, the core process calculus for session-based concurrency studied by Kouzapas et al.~\cite{DBLP:conf/esop/KouzapasPY16,KouzapasPY17}.
\HO is a very small language: it supports abstraction-passing only and lacks name-passing and recursion; still, it is also very expressive, because both features can be expressed in it in a fully abstract way.
As such, \HO is an excellent candidate for a decomposition. Being a higher-order language, \HO is very different from the (untyped, first-order) $\pi$-calculus considered by Parrow in~\cite{DBLP:conf/birthday/Parrow00}. 
Also, the session types of \HO severely constrain the range and kind of conceivable decompositions.
Therefore, our results are not an expected consequence of Parrow's: essential aspects of our decomposition into processes typable with minimal session types are only possible in a higher-order setting, not considered in~\cite{DBLP:conf/birthday/Parrow00}.

\smallskip

Summing up, in this paper we make the following contributions:
\begin{enumerate}
\item We identify the class of \emph{minimal session types} as a simple fragment of standard session types that retains its absolute expressiveness.
\item We show how to decompose processes typable with standard session types into processes typable with minimal session types. We prove that this decomposition satisfies a typability result for a rich typed language that includes labeled choices and recursive types.
\item We develop optimizations of our decomposition that bear witness to its robustness.
\end{enumerate}

\noindent
The rest of the paper is organized as follows.
\secref{s:lang} summarizes the syntax, semantics, and session type system for \HO, the core process calculus for session-based concurrency.
\secref{s:decomp} presents the decomposition of well-typed \HO processes into minimal session types. The decomposition is presented incrementally, starting with a core fragment that is later extended with further features.
\secref{s:opt} presents optimizations of the decomposition.
\secref{s:rw} elaborates further on related works and 
\secref{s:concl} concludes.
\appendx{The appendix contains omitted definitions and proofs.}

\section{The Source Language}
\label{s:lang}
We recall the syntax, semantics, and type system for \HO, the higher-order process calculus for session-based concurrency studied by Kouzapas et al.~\cite{DBLP:conf/esop/KouzapasPY16,KouzapasPY17}.\footnote{We summarize the content from~\cite{DBLP:conf/esop/KouzapasPY16,KouzapasPY17} that concerns \HO; the notions and results given in~\cite{DBLP:conf/esop/KouzapasPY16,KouzapasPY17} are given for \HOp, a super-calculus of \HO.}
\HO is arguably the simplest language for session types: it supports
passing of abstractions (functions from names to processes)
but does not support name-passing nor process recursion.
Still, \HO is very expressive:  it can encode name-passing, recursion, and polyadic communication via type-preserving encodings that are fully-abstract with respect to  contextual equivalence~\cite{DBLP:conf/esop/KouzapasPY16}.

\subsection{Syntax and Semantics}
The syntax of names, variables, values, and \HO processes  is defined as follows:
		\begin{align*}
			n, m  & \bnfis a,b \bnfbar s, \dual{s}
			\qquad
			u,w   \bnfis n \bnfbar x,y,z
			\qquad
			V,W   \bnfis {x,y,z} \bnfbar {\abs{x}{P}}
			\\[1mm]
			P,Q
			 & \bnfis
			\bout{u}{V}{P}  \sbnfbar  \binp{u}{x}{P} \sbnfbar
			\bsel{u}{l} P \sbnfbar \bbra{u}{l_i:P_i}_{i \in I} \sbnfbar {\appl{V}{u}}
			 \sbnfbar P\Par Q \sbnfbar \news{n} P
			\sbnfbar \inact
		\end{align*}

\noindent
We use
$a,b,c, \dots$ to
range over \emph{shared  names},
and
$s, \dual{s}, \dots$
to range over \emph{session names}.
Shared names are used for unrestricted, non-deterministic interactions;
session names are used for linear, deterministic interactions.
We write $n, m$ to denote session or shared names,
and assume that the sets of session and shared names are disjoint.
The \emph{dual} of   $n$ is denoted $\dual{n}$; we define
$\dual{\dual{s}} = s$ and $\dual{a} = a$, i.e., duality is only relevant for session names.
Variables are denoted with $x, y, z, \dots$.
An abstraction 
$\abs{x}{P}$ is a process $P$ with parameter $x$.
 \emph{Values} $V,W, \ldots$ include
variables 
and
abstractions,  
but not names.
A tuple of variables $(x_1, \ldots, x_k)$ is denoted $\widetilde{x}$ (and similarly for names and values). 
We use $\epsilon$ to denote the empty tuple.


Processes $P, Q, \ldots$
include usual $\pi$-calculus
output and input prefixes,
denoted $\bout{u}{V}{P}$ and $\binp{u}{x}{P}$, respectively.
Processes $\bsel{u}{l} P$ and $\bbra{u}{l_i: P_i}_{i \in I}$ are selecting and branching constructs, respectively, commonly used in session calculi to express deterministic choices~\cite{honda.vasconcelos.kubo:language-primitives}.
Process
$\appl{V}{u}$
is the application
which substitutes name $u$ on abstraction~$V$.
Constructs for
inaction $\inact$,  parallel composition $P_1 \Par P_2$, and
name restriction $\news{n} P$ are standard.
\HO lacks name-passing and recursion, but they are expressible in the language (see \exref{ex:np} below).

We sometimes omit trailing $\inact$'s, so we may write, e.g., 
$\about{u}{V}{}$ instead of  $\bout{u}{V}{\inact}$.
Also, we write 
$\bout{u}{}{P}$ and $\binp{u}{}{P}$ whenever the exchanged value is not relevant (cf. \remref{r:prefix}).

Session name restriction $\news{s} P$ simultaneously binds session names $s$ and $\dual{s}$ in $P$.
Functions $\fv{P}$, $\fn{P}$, and $\fs{P}$ denote, respectively, the sets of free
variables, names, and session names in $P$, and are defined as expected.
If $\fv{P} = \emptyset$, we call $P$ {\em closed}.
 We write $P\subst{u}{y}$ (resp.,\,$P\subst{V}{y}$) for the capture-avoiding substitution of name $u$ (resp.,\, value $V$) for $y$ in process $P$.
We identify processes up to consistent renaming of bound names, writing $\scong_\alpha$ for this congruence.
We shall rely on Barendregt's variable convention, which ensures that free and bound names are different in every mathematical context.

The  operational semantics of \HO is defined in terms of a \emph{reduction relation},
denoted $\red$.
Reduction is closed under \emph{structural congruence},
denoted $\scong$, which is defined as the smallest congruence on processes such that:
	\begin{align*}
		P \Par \inact \scong P
		\quad
		P_1 \Par P_2 \scong P_2 \Par P_1
		\quad
		P_1 \Par (P_2 \Par P_3) \scong (P_1 \Par P_2) \Par P_3
		\quad
		\news{n} \inact \scong \inact
		\\[1mm]
		P \Par \news{n} Q \scong \news{n}(P \Par Q)
		\ (n \notin \fn{P})
		\quad
		P \scong Q \textrm{ if } P \scong_\alpha Q
	\end{align*}
	We assume the expected extension of $\scong$ to values $V$.
The reduction relation expresses the behavior of processes; it is defined as follows:
	\begin{align*}
		\appl{(\abs{x}{P})}{u}   & \red  P \subst{u}{x}
		& \orule{App} &
		\\
		\bout{n}{V} P \Par \binp{\dual{n}}{x} Q & \red  P \Par Q \subst{V}{x}
		& \orule{Pass} &
		\\
		\bsel{n}{l_j} Q \Par \bbra{\dual{n}}{l_i : P_i}_{i \in I} & \red Q \Par P_j
~~(j \in I)~~
		& \orule{Sel} &
		\\
		P \red P'\Rightarrow  \news{n} P   & \red    \news{n} P'
		& \orule{Res} &
		\\
		P \red P'  \Rightarrow    P \Par Q  & \red   P' \Par Q
		& \orule{Par} &
		\\
		P \scong Q \red Q' \scong P'  \Rightarrow  P  & \red  P'
		& \orule{Cong} &
	\end{align*}
 Rule $\orule{App}$ defines  name application.
Rule $\orule{Pass}$ defines a shared or session interaction, depending on the nature of $n$.
Rule $\orule{Sel}$ is the standard rule for labelled choice/selection. 
Other rules are standard $\pi$-calculus rules.
We write 
$\red^k$ for a $k$-step reduction, and 
$\red^\ast$ for the reflexive, transitive closure of $\red$.

We illustrate \HO processes and their semantics by means of an example.

\begin{example}[Encoding Name-Passing]
\label{ex:np}
\HO lacks name-passing, and so the reduction
\begin{align}
\bout{n}{m} P \Par \binp{\dual{n}}{x} Q  \red  P \Par Q \subst{m}{x}
\label{red:ex}
\end{align}
is not supported by the language. Still, as explained in~\cite{DBLP:conf/esop/KouzapasPY16}, name-passing can be encoded in a fully-abstract way using abstraction-passing, by
``packing'' the name $m$ in an abstraction. Let $\map{\cdot}$ be the encoding defined as 
\begin{align*}
  \map{\bout{n}{m} P}	&= \bbout{n}{ \abs{z}{\,\binp{z}{x} (\appl{x}{m})} } \map{P} \\
  \map{\binp{n}{x} Q}	&=	 \binp{n}{y} \newsp{s}{\appl{y}{s} \Par \about{\dual{s}}{\abs{x}{\map{Q}}}}
\end{align*}
and as an homomorphism for the other constructs.
Reduction \eqref{red:ex} can be mimicked as
\begin{align*}
\map{\bout{n}{m} P \Par \binp{\dual{n}}{x} Q} =~~ & \bbout{n}{ \abs{z}{\,\binp{z}{x} (\appl{x}{m})} } \map{P}  \Par \binp{n}{y} \newsp{s}{\appl{y}{s} \Par \about{\dual{s}}{\abs{x}{\map{Q}}}}
 \\
 \red~~ &  \map{P}  \Par \newsp{s}{\appl{\abs{z}{\,\binp{z}{x} (\appl{x}{m})}}{s} \Par \about{\dual{s}}{\abs{x}{\map{Q}}}}
 \\
 \red~~ &  \map{P}  \Par \newsp{s}{\binp{s}{x} (\appl{x}{m}) \Par \about{\dual{s}}{\abs{x}{\map{Q}}}}
 \\
 \red~~ &  \map{P}  \Par \appl{(\abs{x}{\map{Q}})}{m}
 \\
 \red~~ &  \map{P}  \Par \map{Q}\subst{m}{x}
\end{align*}\hspace*{\fill} $\lhd$
\end{example}

\begin{remark}[Polyadic Communication]
\label{r:poly}
\HO as presented above allows only for \emph{monadic communication}, i.e., the exchange of tuples of values with length 1.
We will find it convenient to use \HO with \emph{polyadic communication}, i.e., the exchange of tuples of values $\widetilde{V}$, with length $k \geq 1$.
In \HO, polyadicity appears in session synchronizations and applications, but not in synchronizations on shared names. This entails having the following reduction rules:
	\begin{align*}
		\appl{(\abs{\widetilde{x}}{P})}{\widetilde{u}}   & \red  P \subst{\widetilde{u}}{\widetilde{x}}
		\\
		\bout{s}{\widetilde{V}} P \Par \binp{\dual{s}}{\widetilde{x}} Q & \red  P \Par Q \subst{\widetilde{V}}{\widetilde{x}}
\end{align*}
where the simultaneous substitutions
 $P\subst{\widetilde{u}}{\widetilde{x}}$
 and $P\subst{\widetilde{V}}{\widetilde{x}}$
 are as expected.
This polyadic \HO can be readily encoded into (monadic) \HO~\cite{KouzapasPY17}; for this reason, by a slight
 abuse of notation we will often write \HO when we actually mean ``polyadic \HO''.
\end{remark}

\subsection{Session Types for \HO}
\label{sec:types}

We give essential definitions and properties for the session type system for \HO, following~\cite{KouzapasPY17}.

\begin{definition}[Session Types for \HO~\cite{KouzapasPY17}]
\label{d:types}
Let us write $\Proc$ to denote the process type.
The syntax of types for \HO is defined as follows: 
	\begin{align*}
U & \bnfis		\shot{C} \bnfbar \lhot{C}
		\\
		C  & \bnfis		S \bnfbar  {\chtype{U}}
\\
		S & \bnfis 	\tinact  \bnfbar  \btout{U} S \bnfbar \btinp{U} S 
		 \bnfbar \btsel{l_i:S_i}_{i \in I} \bnfbar \btbra{l_i:S_i}_{i \in I}
		 \bnfbar  \trec{t}{S} \bnfbar \vart{t}
	\end{align*}
\end{definition}
\noindent
Value types $U$ include $\shot{C}$ and $\lhot{C}$, which denote {\em shared} and {\em linear} higher-order types, respectively.
Shared channel types are denoted $\chtype{S}$ and $\chtype{U}$.
Session types, denoted by $S$, follow the standard binary session type syntax~\cite{honda.vasconcelos.kubo:language-primitives}.
Type $\tinact$ is the termination type.
The {\em output type}
$\btout{U} S$ 
first sends a value of type $U$ and then follows the type described by $S$.
Dually, $\btinp{U} S$ denotes an {\em input type}.
The {\em branching type} $\btbra{l_i:S_i}_{i \in I}$ and the {\em selection type} $\btsel{l_i:S_i}_{i \in I}$ are used to type the branching and selection constructs that define the labeled choice.
We assume the {\em recursive type} $\trec{t}{S}$ is guarded,
i.e.,  type $\trec{t}{\vart{t}}$ is not allowed.

In session types theories \emph{duality} is a key notion:
implementations derived from dual session types will respect their protocols at run-time, avoiding communication errors.
Intuitively, duality is  obtained by
exchanging $!$ by $?$ (and vice versa) and  $\oplus$ by $\&$ (and vice versa),
including the fixed point construction.
We write $S \dualof T$ if session types $S$ and $T$ are dual according to this intuition;
the formal definition is coinductive, and given in~\cite{KouzapasPY17}.

We consider shared, linear, and session
\emph{environments}, denoted $\Gamma$, $\Lambda$, and $\Delta$, \newjb{resp.}:
\begin{align*}
		\Gamma  & \bnfis  \emptyset \bnfbar \Gamma \cat \varp{x}: \shot{C} 
		\bnfbar \Gamma \cat u: \chtype{U}
\qquad
		\Lambda  \bnfis  \emptyset \bnfbar \Lambda \cat \AT{x}{\lhot{C}}
		 \\
		\Delta    & \bnfis   \emptyset \bnfbar \Delta \cat \AT{u}{S}
\end{align*}
$\Gamma$ maps variables and shared names to value types; 
$\Lambda$ maps variables to
linear
higher-order
types.
 $\Delta$  maps
session names to session types.
While $\Gamma$ admits weakening, contraction, and exchange principles,
both $\Lambda$ and $\Delta$
are
only subject to exchange.
The domains of $\Gamma,
\Lambda$, and $\Delta$ are assumed pairwise distinct.
$\Delta_1\cdot \Delta_2$ is the disjoint union of $\Delta_1$ and $\Delta_2$.

We write $\Gamma\backslash x$ to denote
$\Gamma\backslash \{x:C\}$, i.e.,
the environment obtained from $\Gamma$ by removing the assignment $x : \shot{C}$, for some $C$.
Notations
$\Delta\backslash u$
and
$\Gamma\backslash \widetilde{x}$ will have expected readings.
With a slight abuse of notation, given a tuple of variables $\widetilde x$,
we sometimes write $(\Gamma, \Delta)(\widetilde x)$ to denote the tuple of types assigned to variables in $\widetilde x$.

The typing judgements for values $V$ and processes $P$ are denoted
\[
\Gamma; \Lambda; \Delta \proves V \hastype U \qquad \text{and} \qquad\Gamma; \Lambda; \Delta \proves P \hastype \Proc
\]

\begin{figure}[t]
\[
	\begin{array}{c}
		\inferrule[(Prom)]{
			\Gamma; \emptyset; \emptyset \proves V \hastype
                         \lhot{C}
		}{
			\Gamma; \emptyset; \emptyset \proves V \hastype
                         \shot{C}
		}
		\quad
		\inferrule[(EProm)]{
		\Gamma; \Lambda \cat x : \lhot{C}; \Delta \proves P \hastype \Proc
		}{
			\Gamma \cat x:\shot{C}; \Lambda; \Delta \proves P \hastype \Proc
		} \quad 
		\inferrule[(Abs)]{
			\Gamma; \Lambda; \Delta_1 \proves P \hastype \Proc
			\quad
			\Gamma; \es; \Delta_2 \proves x \hastype C
		}{
			\Gamma\backslash x; \Lambda; \Delta_1 \backslash \Delta_2 \proves \abs{{x}}{P} \hastype \lhot{{C}}
		}
		\quad
		\\[4mm]
		\inferrule[(App)]{
				\Gamma; \Lambda; \Delta_1 \proves V \hastype C \leadsto \diamond \quad
				\leadsto \in \{\multimap,\rightarrow \}
				\quad
				\Gamma; \es; \Delta_2 \proves u \hastype C
		}{
			\Gamma; \Lambda; \Delta_1 \cat \Delta_2 \proves \appl{V}{u} \hastype \Proc
		}
		\\[4mm]
		\inferrule*[left=(Send)]{
					u:S \in \Delta_1 \cat \Delta_2 
					\quad
			\Gamma; \Lambda_1; \Delta_1 \proves P \hastype \Proc
			\quad
			\Gamma; \Lambda_2; \Delta_2 \proves V \hastype U
		}{
			\Gamma; \Lambda_1 \cat \Lambda_2; ((\Delta_1 \cat \Delta_2) \setminus u:S) \cat u:\btout{U} S \proves \bout{u}{V} P \hastype \Proc
		}
		\\[4mm]
		\inferrule*[left=(Rcv)]{
		\begin{array}{c}
			\Gamma; \Lambda_1; \Delta \cat u: S \proves P \hastype \Proc
			\quad
			\Gamma; \Lambda_2; \es \proves {x} \hastype {U}
			\end{array}
		}{
			\Gamma \backslash x; \Lambda_1\backslash \Lambda_2; \Delta \cat u: \btinp{U} S \vdash \binp{u}{{x}} P \hastype \Proc
		}
		\\[2mm] 
		\inferrule[(Req)]{
			\begin{array}{c}
				\Gamma; \es; \es \proves u \hastype \chtype{U}
				\quad
				\Gamma; \Lambda; \Delta_1 \proves P \hastype \Proc
				\\
				\Gamma; \es; \Delta_2 \proves V \hastype U
			\end{array}
		}{
			\Gamma; \Lambda; \Delta_1 \cat \Delta_2 \proves \bout{u}{V} P \hastype \Proc
		}
		~~
		\inferrule[(Acc)]{
			\begin{array}{c}
				\Gamma; \emptyset; \emptyset \proves u \hastype \chtype{U}
				\quad
				\Gamma; \Lambda_1; \Delta \proves P \hastype \Proc
				\\
				\Gamma; \Lambda_2; \es \proves x \hastype U\\
	               \end{array}
		}{
			\Gamma\backslash x; \Lambda_1 \backslash \Lambda_2; \Delta \proves \binp{u}{x} P \hastype \Proc
		}
		\end{array}
\]
\caption{Selected Typing Rules for $\HO$.
See~\cite{KouzapasPY17} for a full account.
\label{fig:typerulesmys}}
\end{figure}

\noindent
\figref{fig:typerulesmys} shows
selected typing rules;
see~\cite{KouzapasPY17}
for a full account.
The shared type $\shot{C}$ 
is derived using Rule~\textsc{(Prom)} only
if the value has a linear type with an empty linear
environment.
Rule~\textsc{(EProm)} allows us to freely use a \newj{shared
type variable as linear}.
Abstraction values are typed with Rule~\textsc{(Abs)}.
Application typing
is governed by Rule~\textsc{(App)}:  
the type $C$ of an application name $u$
must match the type of the application variable $x$ ($\lhot{C}$ or $\shot{C}$).
%
In Rule~\textsc{(Send)},
the type $U$ of value $V$ should appear as a prefix in the session type $\btout{U} S$ of $u$.
Rule~\textsc{(Rcv)} is its dual.
Rules~\textsc{(Req)} and~\textsc{(Acc)} type interaction along shared names;
the type of the sent/received object $V$
(i.e., $U$) should
match the type of the subject $s$
($\chtype{U}$).



To state type soundness, we require two auxiliary definitions on session environments.
First, a session environment $\Delta$ is {\em balanced} (written $\balan{\Delta}$) if whenever $s: S_1, \dual{s}: S_2 \in \Delta$ then $S_1 \dualof S_2$.
Second, we define the reduction relation $\red$ on session environments as:
	\begin{eqnarray*}
			\Delta \cat s: \btout{U} S_1 \cat \dual{s}: \btinp{U} S_2 & \red &
			\Delta \cat s: S_1 \cat \dual{s}: S_2\\
			\Delta \cat s: \btsel{l_i: S_i}_{i \in I} \cat \dual{s}: \btbra{l_i: S_i'}_{i \in I} &\red&
			 \Delta \cat s: S_k \cat \dual{s}: S_k' \ (k \in I)
		\end{eqnarray*}


\begin{theorem}[Type Soundness~\cite{KouzapasPY17}]\label{t:sr}\rm
%
			Suppose $\Gamma; \es; \Delta \proves P \hastype \Proc$
			with
			$\balan{\Delta}$.
			Then $P \red P'$ implies $\Gamma; \es; \Delta'  \proves P' \hastype \Proc$
			and $\Delta = \Delta'$ or $\Delta \red \Delta'$
			with $\balan{\Delta'}$.
\end{theorem}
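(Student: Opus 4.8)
The statement is the standard subject reduction theorem for the \HO{} session type system. The plan is to proceed by induction on the derivation of $P \red P'$, using \thmref{t:sr}'s own structure, i.e.\ case analysis on the last reduction rule applied in $P \red P'$. Before the induction, I would record two routine ingredients: (i) an \emph{inversion lemma} for typing, giving the shape of a derivation from the shape of the typed term (e.g.\ if $\Gamma;\es;\Delta \proves \bout{u}{V}P \hastype \Proc$ then $\Delta$ contains $u:\btout{U}S$ or $u:\chtype{U}$, split as dictated by \textsc{(Send)}/\textsc{(Req)}, with corresponding sub-derivations for $P$ and $V$); and (ii) a \emph{substitution lemma}: if $\Gamma;\Lambda_1;\Delta_1 \proves P \hastype \Proc$ and $\Gamma;\Lambda_2;\Delta_2 \proves V \hastype U$ with $x:U$ ``used'' in the first judgement, then $\Gamma\backslash x;\Lambda_1\backslash\Lambda_2;\Delta_1\cdot\Delta_2 \proves P\subst{V}{x} \hastype \Proc$ (and similarly for name substitution $P\subst{u}{x}$, needed for \orule{App}). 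Both are proved by straightforward inductions on typing derivations and can be cited from \cite{KouzapasPY17}; I would state them explicitly since they carry the weight of the argument.

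\smallskip
\noindent
With these in hand, the main induction is by cases on the rule deriving $P \red P'$.
\begin{itemize}
\item \emph{Case \orule{App}:} $\appl{(\abs{x}{Q})}{u} \red Q\subst{u}{x}$. Inversion on \textsc{(App)} (and \textsc{(Abs)}) yields a typing for $Q$ and for $u:C$; the name-substitution lemma gives $\Gamma;\es;\Delta \proves Q\subst{u}{x} \hastype \Proc$ with $\Delta$ unchanged. Here $\Delta = \Delta'$, and balancedness is trivially preserved.
\item \emph{Case \orule{Pass}:} $\bout{n}{V}P_1 \Par \binp{\dual{n}}{x}P_2 \red P_1 \Par P_2\subst{V}{x}$. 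If $n$ is a session name $s$, inversion on \textsc{(Send)} and \textsc{(Rcv)} (composed via the parallel rule) gives $s:\btout{U}S_1$ and $\dual{s}:\btinp{U}S_2$ in $\Delta$; since $\balan{\Delta}$, these are dual, so $\btout{U}S_1 \dualof \btinp{U}S_2$, forcing the value types to agree and $S_1 \dualof S_2$. The substitution lemma types $P_2\subst{V}{x}$; recomposing in parallel with $P_1$ gives a typing under $\Delta' = (\Delta \setminus \{s:\btout{U}S_1,\dual{s}:\btinp{U}S_2\})\cdot\{s:S_1,\dual{s}:S_2\}$, which is exactly $\Delta \red \Delta'$, and $\balan{\Delta'}$ holds because $S_1 \dualof S_2$. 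If $n$ is a shared name, inversion uses \textsc{(Req)}/\textsc{(Acc)}; the session environment used for $V$ is threaded unchanged, so $\Delta = \Delta'$.
\item \emph{Case \orule{Sel}:} analogous to \orule{Pass}, using inversion on the branching/selection rules and the $\red$-rule for $\btsel{\cdot}/\btbra{\cdot}$ on environments; duality of the chosen branches $S_k \dualof S_k'$ follows from $\balan{\Delta}$.
\item \emph{Cases \orule{Res}, \orule{Par}:} direct from the induction hypothesis, after inversion on \textsc{(Res)}/parallel typing; for \orule{Res} one observes that restricting $s,\dual{s}$ keeps the residual environment balanced precisely because the IH delivers a balanced $\Delta'$.
\item \emph{Case \orule{Cong}:} requires a separate lemma that typing is preserved by $\scong$ (each structural-congruence axiom maps a typed term to a typed term under the same $\Gamma;\es;\Delta$); then the result follows from the IH.
\end{itemize}

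\smallskip
\noindent
The main obstacle is not any single case but the supporting \emph{subject congruence} lemma for $\scong$ together with getting the bookkeeping of the linear and session environments exactly right when splitting and recombining parallel compositions --- in particular ensuring that the disjoint-union conditions on $\Delta_1\cdot\Delta_2$ are met and that the invocation of balancedness in the \orule{Pass}/\orule{Sel} cases is legitimate (this is where the hypothesis $\balan{\Delta}$ is essential and cannot be dropped). The higher-order twist --- that exchanged values $V$ may carry their own session environment $\Delta_2$ --- means the substitution lemma must track $\Delta_1\cdot\Delta_2$ rather than a fixed $\Delta$; once that lemma is stated with the right generality, the cases go through uniformly. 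All of this is routine given \cite{KouzapasPY17}, so I would present the inversion and substitution lemmas as cited auxiliary facts and spend the proof text on the \orule{Pass}/\orule{Sel} cases where the environment reduction $\Delta \red \Delta'$ is actually produced.
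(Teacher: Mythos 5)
This theorem is imported from the cited work of Kouzapas et al.\ and the paper does not reprove it; your plan — induction on the reduction derivation, supported by inversion, value/name substitution, and subject congruence lemmas, with the $\Delta \red \Delta'$ step and the use of $\balan{\Delta}$ isolated to the \orule{Pass}/\orule{Sel} cases — is exactly the standard argument carried out in that reference. Your proposal is correct and takes essentially the same route as the source proof.
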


\begin{remark}[Typed Polyadic Communication]
When using processes with polyadic communication (cf.~\remref{r:poly}),  
we shall assume the extension of the type system defined in~\cite{KouzapasPY17}.
\end{remark}

\begin{notation}[Type Annotations]
We shall often annotate bound names and variables with their respective type.
We will write, e.g., $\news{s:S}P$ to denote that the type of $s$ in $P$ is $S$.
Similarly for values: we shall write $\abs{u:C}{P}$.
Also, letting $\leadsto \in \{\lollipop, \sharedop\}$, we may write
$\abs{u:\slhotup{C}}{P}$ to denote that the value is linear (if $\leadsto = \lollipop$) or shared (if $\leadsto = \sharedop$).
That is, we write $\abs{u:\slhotup{C}}{P}$ if
$\Gamma ; \Lambda; \Delta \proves \abs{{u}}{P} \hastype \slhot{{C}}$, for some $\Gamma$, $\Lambda$, and $\Delta$.
\end{notation}

Having introduced the core session process language \HO, we now move to detail its type-preserving decomposition into minimal session types.

\section{Decomposing Session-Typed Processes}
\label{s:decomp}

\subsection{Key Ideas}
Our goal is to transform an \HO process $P$, typable with the session types in \defref{d:types}, into
another \HO process, denoted $\D{P}$, typable using \emph{minimal session types} (cf. \defref{d:mtypesi} below).
By means of this transformation on processes, which we call a \emph{decomposition}, the sequencing in session types for $P$ is codified in $\D{P}$ by using additional actions.
To ensure that this transformation on $P$ is sound, we must also decompose its session types;
our main result says that if $P$ is well-typed under session types $S_1, \ldots, S_n$,
then $\D{P}$ is typable using the minimal session types $\Gt{S_1}, \ldots, \Gt{S_n}$, where $\Gt{\cdot}$ is a decomposition function that ``slices'' a session type (as in \defref{d:types}) into a \emph{list} of minimal session types (cf. \defref{def:typesdecomp} below).

To define  the decomposition $\D{P}$, in \defref{def:decomp} we rely on a
\emph{breakdown function} that translates $P$ into a composition of \emph{trios processes} (or simply \emph{trios}).
A {trio} is a process with exactly three nested prefixes.
Roughly speaking, if $P$ is a sequential process with $k$ nested actions, then $\D{P}$ will contain $k$  trios running in parallel: each trio in $\D{P}$ will enact exactly one prefix from $P$; the breakdown function must be carefully designed to ensure that trios trigger each other in such a way that $\D{P}$ preserves the prefix sequencing in $P$.

We borrow from Parrow~\cite{DBLP:conf/birthday/Parrow00}
some useful terminology and notation on trios.
The \textit{context} of a trio is a tuple of variables $\widetilde x$, possibly empty, which  makes variable bindings explicit.
We use a reserved set of \textit{propagator names} (or simply \emph{propagators}), denoted with $\prop_k, \prop_{k+1}, \ldots$, to carry contexts and trigger the
subsequent trio.
A process with less than three sequential prefixes is called a \textit{degenerate trio}.
Also, a \emph{leading trio} is the one that receives a context, performs an action, and triggers the next trio; a \emph{control trio} only activates other trios.

The breakdown function works on both processes and values.
The breakdown of process $P$ is denoted by $\B{k}{\tilde x}{P}$, where $k$ is the
index for the propagators $\prop_k$, and $\widetilde x$ is the context to be received by the previous trio.
Similarly,
the breakdown  of a value $V$ is denoted by
$\V{k}{\tilde x}{V}$.

We present the decomposition of well-typed \HO processes (and its associated typability results) incrementally---this is useful to gradually illustrate our ideas and highlight the several ways in which our developments differ from Parrow's.
In \secref{ss:core}, we consider a ``core fragment'' of \HO, which contains output and input prefixes,
application, restriction, parallel composition, and inaction.
Hence, this fragment does not have labeled choice and recursion, nor  recursive types.
In \secref{ss:exti} we shall extend the decomposition functions with selection and branching;
an extension that supports names with recursive types is presented in \secref{ss:extii}.

\subsection{The Core Fragment}
\label{ss:core}
We present our approach for a core fragment of \HO. We start introducing some preliminary definitions, including the definition of breakdown function. Then we give our main result:
\thmref{t:decompcore} (Page~\pageref{t:decompcore}) asserts that if process $P$ is well-typed with standard session types, then  $\D{P}$ is well-typed with minimal session types.
This theorem relies crucially on \thmref{t:typecore} (Page~\pageref{t:typecore}), which specifies the way in which the breakdown function preserves typability.

\subsubsection{Preliminaries}

We start by introducing minimal session types as a fragment of \defref{d:types}:
\begin{definition}[Minimal Session Types]
\label{d:mtypesi}
The syntax of \emph{minimal session types} for \HO is defined as follows: 
	\begin{align*}
U & \bnfis		\shot{\widetilde{C}} \bnfbar \lhot{\widetilde{C}}
		\\
		C  & \bnfis		M  \bnfbar  {\chtype{U}}
\\
		M & \bnfis 	\tinact  \bnfbar  \btout{\widetilde{U}} \tinact \bnfbar \btinp{\widetilde{U}} \tinact
	\end{align*}
\end{definition}
Clearly, this minimal type structure induces a reduced set of typable \HO processes.
We shall implicitly assume a type system for \HO based on these minimal session types by considering the expected specializations of the notions, typing rules, and results summarized in \secref{sec:types}.

We now define how to ``slice'' a session type into a \emph{list} of
minimal session types.

\begin{definition}[Decomposing Session Types] \label{def:typesdecomp}
Let $S$ be a session type, $U$ be a higher-order type, $C$ be a name type, and
$\chtype{U}$ be a shared type, all as in \defref{d:types}.  The \emph{type decomposition
function}  $\Gt{\cdot}$ is defined as: \smallskip
\begin{align*}
  \Gt{\btout{U}{S}} &=
  \begin{cases}
  \btout{\Gt{U}}{\tinact}  &  \text{if $S = \tinact$} \\
  \btout{\Gt{U}}{\tinact}\, ,\Gt{S} & \text{otherwise}
  \end{cases} \\
  \Gt{\btinp{U}{S}} &=
  \begin{cases}
  \btinp{\Gt{U}}{\tinact}  &  \text{if $S = \tinact$} \\
  \btinp{\Gt{U}}{\tinact}\, , \Gt{S} & \text{otherwise}
  \end{cases} \\
    \Gt{\tinact} &= \tinact \\
  \Gt{\lhot{C}} &= \lhot{\Gt{C}} \\
  \Gt{\shot{C}} &= \shot{\Gt{C}} \\
  \Gt{\chtype{U}} &= \chtype{\Gt{U}}\\
  \Gt{S_1,\ldots,S_n} &=
  \Gt{S_1},\ldots,\Gt{S_n}
\end{align*}
\end{definition}
\noindent
Thus, intuitively, if a session type $S$ contains $k$ input/output actions, the list $\Gt{S}$ will contain $k$ minimal session types.
We write $\len{\Gt{S}}$ to denote the length of  $\Gt{S}$.

\begin{example}
Let $S = \btinp{\mathsf{Int}} \btinp{\mathsf{Int}} \btout{\mathsf{Bool}} \tinact$ be the session type given in \secref{s:intro}.
Then $\Gt{S}$ is the list of minimal session types given by $\btinp{\mathsf{Int}} \tinact \, ,  \btinp{\mathsf{Int}} \tinact \, , \btout{\mathsf{Bool}} \tinact$. 
 \hspace*{\fill} $\lhd$
\end{example}

\noindent
The breakdown function $\B{k}{\tilde x}{\cdot}$ will operate on processes with \emph{indexed} names (cf. \defref{d:counterinit}).
Indexes are relevant for session names: a name $s_i$ will execute the $i$-th  action in session $s$. For this reason, to extend the decomposition function $\Gt{\cdot}$ to typing environments, we consider names $u_i$ in $\Gamma$ and $\Delta$. 
To define the decomposition of environments, we rely on the following notation.
Given 
a tuple of names 
$\widetilde{s} = s_1, \ldots, s_n$ and 
a tuple of (session) types
$\widetilde{S} = S_1, \ldots, S_n$ of the same length, 
we write $\widetilde{s}:\widetilde{S}$ to denote a list of typing assignments $s_1:S_1, \ldots, s_n:S_n$.

\begin{definition}[Decomposition of Environments] \label{def:typesdecompenv}
  Let $\Gamma$, $\Lambda$, and $\Delta$ be typing environments.
  We define $\Gt{\Gamma}$, $\Gt{\Lambda}$, and $\Gt{\Delta}$ inductively as follows:
  \begin{align*}
        \Gt{\Delta \cat u_i:S} &= \Gt{\Delta},(u_i,\ldots,u_{i+\len{\Gt{S}}-1}) : \Gt{S} \\
      \Gt{\Gamma \cat u_i:\chtype{U}} &= \Gt{\Gamma} \cat u_i : \Gt{\chtype{U}} \\
    \Gt{\Gamma \cat x:U} &= \Gt{\Gamma} \cat x:\Gt{U} \\
    \Gt{\Lambda \cat x:U} &= \Gt{\Lambda} \cat x:\Gt{U} \\
\Gt{\emptyset} &=  \emptyset
  \end{align*}
\end{definition}

\noindent
In order to determine the required number of propagators ($\prop_k, \prop_{k+1}, \ldots$) required in the breakdown of processes and  values, we mutually define their \emph{degree}:
\begin{definition}[Degree of a Process and Value]
  \label{def:sizeproc}
	Let $P$ be an \HO process.
	The \emph{degree} of $P$, denoted $\len{P}$, is inductively defined as follows:
	$$
	\len{P} =
	\begin{cases}
	\len{V} + \len{Q} + 1 & \text{if $P = \bout{u_i}{V}Q$}
	\\
	\len{Q} + 1 & \text{if $P =\bout{u_i}{y}Q$ or $P=\binp{u_i}{y}Q$}
	\\
	\len{V} + 1 & \text{if $P = \appl{V}{u_i}$}
	\\
	\len{P'} & \text{if $P = \news{s:S}P'$}
	\\
	\len{Q} + \len{R} + 1 & \text{if $P = Q \Par R$}
	\\
	1 & \text{if $P = \appl{y}{u_i}$ or $P = \inact$}
	\end{cases}
	$$
%
	  The \emph{degree} of a value $V$,
	denoted $\len{V}$, is defined as follows:
	$$
	\len{V} = \begin{cases} \len{P} & \text{if $V = \abs{x:\lhotup{C}}{P}$} \\
	0 & \text{if $V = \abs{x:\shotup{C}}{P}$ or $V=y$} 
	\end{cases}
	$$
\end{definition}


\noindent
We define an auxiliary function that ``initializes'' the indices of a tuple of names.

%

\begin{definition}[Name and Process Initialization]
  \label{d:counterinit}
Let $\widetilde u = (a,b,s,s',\ldots)$ be a finite tuple of names.
We shall write $\mathsf{init}(\widetilde u)$ to denote the tuple
$(a_1,b_1,s_1,s'_1,\ldots)$.
We will say that a process has been \emph{initialized} if all of its names have some index.
\end{definition}


\begin{remark}\label{r:prefix}
Recall that we write
`$\apropinp{k}{}$' and  `$\apropout{k}{}$'
to denote input and output prefixes in which the value communicated along $\prop_k$ is not relevant.
While `$\apropinp{k}{}$' stands for `$\apropinp{k}{x}$', 
`$\apropout{k}{}$' stands for `$\apropout{k}{\abs{x}{\inact}}$'.
Their corresponding minimal types are 
$\btinp{\shot{\tinact}}\tinact$ 
and 
$\btout{\shot{\tinact}}\tinact$, 
which are 
denoted by
$\btinp{\cdot}\tinact$ and  $\btout{\cdot}\tinact$, respectively.
\end{remark}

Recall that  $P$ is \emph{closed} if $\fv{P} = \emptyset$.
We now define the \emph{decomposition} of a process. 

\begin{definition}[Decomposing Processes]
  \label{def:decomp}
	Let $P$ be a closed \HO process such that $\widetilde u = \fn{P}$.
  The \emph{decomposition} of $P$, denoted $\D{P}$, is
defined as:
  $$
  \D{P} = \news{\widetilde \prop}\big(\propout{k}{} \inact \Par \B{k}{\epsilon}{P\sigma}\big)
  $$
  where:
  $k > 0$;
  $\widetilde \prop = (\prop_k,\ldots,\prop_{k+\len{P}-1})$;
  $\sigma = \subst{\mathsf{init}(\tilde u)}{\widetilde u}$;
  and
  the \emph{breakdown function} $\B{k}{\tilde x}{\cdot}$, where $\widetilde{x}$ is a tuple of variables, is defined inductively in
  \secref{ss:bdown}.
  \end{definition}

\noindent
The bulk of the decomposition of a process is given by the breakdown function, detailed next.

\subsubsection{The Breakdown Function}
\label{ss:bdown}
Given a context $\widetilde x$ and a  $k>0$,
the breakdown function   $\B{k}{\tilde x}{\cdot}$ is defined on the structure of initialized processes, relying on the  breakdown function on values  $\V{k}{\tilde y}{\cdot}$.
The definition relies on type information; we describe each of its cases next.

\begin{description}
\item[Output]
The decomposition of $\bout{u_i}{V}Q$ is the most interesting case: an output prefix sends a value $V$ (i.e., an
abstracted process) that has to be broken down as well.
We then have:
$$
\B{k}{{\tilde x}}{\bout{u_i}{V}Q} =
  \propinp{k}{\widetilde x}
			\bbout{u_i}{\V{k+1}{\tilde y}{V\sigma}}
			\apropout{k+l+1}{\widetilde z}  \Par
\B{k+\degree+1}{\tilde z}{Q\sigma}
  $$
Process $\B{k}{{\tilde x}}{\bout{u_i}{V}Q}$
consists of a leading trio that mimics an output action
in parallel with the breakdown of the continuation $Q$.
The context  $\widetilde x$ must include the
free variables of $V$ and $Q$, denoted $\widetilde y$ and
$\widetilde z$, respectively.
These tuples are not necessarily disjoint:
variables with shared types can appear free in both $V$ and $Q$.
The output object $V$ is then broken down with parameters $\widetilde y$ and $k+1$; the latter serves
to consistently generate propagators for the trios in the breakdown of $V$, denoted $\V{k+1}{\tilde y}{V\sigma}$ (see below for its definition).
The  substitution $\sigma$  increments the index of session names; it is
applied to both $V$ and $Q$ before they are broken down.
We then distinguish two cases: 
\begin{itemize}
\item If name $u_i$ is linear (i.e., it has a session type) then its future occurrences
are renamed into $u_{i+1}$, and $\sigma = \subst{u_{i+1}}{u_i}$; 
\item Otherwise, if $u_i$ is not linear, then $\sigma = \{\}$.
\end{itemize}
Note that if $u_i$ is
linear then it appears either in $V$ or $Q$  and $\sigma$ affects
only one of them.

The last prefix in the leading trio  activates
the breakdown of  $Q$ with its corresponding context
$\widetilde z$.
To avoid name conflicts with the  propagators used in the breakdown of $V$, we use
$\dual{c_{k+l+1}}$,
with $\degree = \len{V}$ as a trigger for the continuation.

\smallskip

We remark that the same breakdown strategy is used when $V$ stands for a variable $y$. 
Since by definition $\len{y} = 0$, $\V{k}{\tilde y}{y} = y$, and $y \sigma = y$, we have: 
$$
\B{k}{\tilde x}{\bout{u_i}{y}Q} =
\binp{c_k}{\widetilde x} \bout{u_i}{y} \apropout{k+1}{\widetilde z}
\Par \B{k+1}{\tilde z}{Q\sigma}
$$
We may notice that
variable $y$ is not propagated further if it does not appear in $Q$. 

\item[Input]
The breakdown of an input prefix is defined as follows:
$$
\B{k}{\tilde x}{\binp{u_i}{y}Q} =
\propinp{k}{\widetilde x}\binp{u_i}{y} \apropout{k+1}{\widetilde x'}
     \Par \B{k+1}{\tilde x'}{Q\sigma}
$$
\noindent where $\widetilde x' = \fv{Q}$.
A leading trio mimics the input action and possibly extends the context with the received variable $y$.
The substitution $\sigma$ is defined as in the output case.

\item[Application] The breakdown of $\appl{V}{u_i}$ is as follows: 
$$
\B{k}{\tilde x}{\appl{V}{u_i}}
=
\propinp{k}{\widetilde x}\appl{\V{k+1}{\tilde x}{V}}{\widetilde m}
$$

A degenerate trio receives a context $\widetilde x$ and 
then proceeds with
the application. 
We break down $V$  with $\widetilde x$ as a context
since these variables need to be propagated to the abstracted process.
We use $k+1$ as a parameter to avoid name conflicts.
Name $u_i$ is decomposed into a tuple $\widetilde m$ using type information:
if  $u_i:C$ then
$\widetilde m = (u_i,\ldots,u_{i+\len{\Gt{C}}-1})$ and so
the length of $\widetilde m$ is $\len{\Gt{C}}$; each name in $\widetilde m$ will perform exactly one action.
When $V$ is a variable $y$, we have: 
$$
\B{k}{\tilde x}{\appl{y}{u_i}}
=
\propinp{k}{y}\appl{y}{\widetilde m}
$$
Notice that by construction $\widetilde x = y$.

\item[Restriction] We define the breakdown of a restricted process as follows:
$$
\B{k}{\tilde x}{\news{s:C}{P'}}
=
\news{\widetilde{s}:\Gt{C}}{\,\B{k}{\tilde x}
	{P'\sigma}}
$$
By construction, $\widetilde x = \fv{P'}$.
Similarly as in the decomposition of $u_i$ into $\widetilde m$ discussed above, we use the type $C$ of $s$ to
obtain the tuple $\widetilde s$
of length $\len{\Gt{C}}$.
We initialize the index of $s$ in $P'$ by applying the substitution $\sigma$.
This substitution
depends on $C$: if it is a shared type then $\sigma = \subst{s_1}{s}$; otherwise, if $C$ is a session type, then $\sigma = \subst{s_1\dual{s_1}}{s\dual{s}}$.

\item[Composition] The breakdown of a process $Q \Par R$ is as follows:
$$
\B{k}{\tilde x}{Q \Par R}
=
 \propinp{k}{\widetilde x} \propout{k+1}{\widetilde y}
    \apropout{k+\degree+1}{\widetilde z}   \Par
\B{k+1}{\tilde y}{Q} \Par \B{k+\degree+1}{\tilde z}{R}
$$
A control trio triggers the breakdowns of $Q$ and $R$; it
 does not mimic any action of the source process.
The tuple $\widetilde y \subseteq \widetilde x$ (resp. $\widetilde z \subseteq \widetilde x$)
collects the free variables in  $Q$ (resp. $R$).
To avoid name conflicts, the trigger for the breakdown of  $R$
is $\dual {c_{k+\degree+1}}$, with $\degree = \len{Q}$.

\item[Inaction]
To breakdown $\inact$, we define a degenerate trio with only one input prefix
that receives a context that by construction will always be empty, i.e., $\widetilde x = \epsilon$:
$$
\B{k}{\tilde x}{\inact}
=
	\propinp{k}{}\inact
$$

\item[Value] In defining the breakdown function for values 
we distinguish two main cases:
\begin{itemize}
\item If $V = \abs{y:\slhotup{C}}P$, where $\leadsto \in \{\multimap, \sharedop \}$, then we have:
$$
\V{k}{\tilde x}{\abs{y:\slhotup{C}}P}
=
\abs{\widetilde{y}:{\slhotup{\Gt{C}}}}{\news{\widetilde \prop}\big(\apropout{k}{\widetilde x}
   \Par \B{k}{\tilde x}{P \subst{y_1}{y}}}\big)
$$
We use  type $C$ to decompose $y$ into the tuple   $\widetilde{y}$.
We
abstract over $\widetilde{y}$; the body of the abstraction is the  composition of a control trio and the breakdown of
$P$, with name index initialized with the substitution $\subst{y_1}{y}$. 
If $\leadsto = \rightarrow$ then we restrict 
the propagators $\widetilde \prop = (\prop_k,\ldots,\prop_{k+\len{P}-1})$: this enables us to type
the value in a shared environment. When $\leadsto = \multimap$
 we do not have to restrict the 
propagators, and   $\widetilde \prop = \epsilon$. 

\item If $V = y$, then the breakdown function is the identity: 
$\V{k}{\tilde x}{y} = y$.
\end{itemize}

\end{description}

\noindent
\tabref{t:bdowncore} summarizes 
the definition of the breakdown, spelling out the side conditions involved.  
We illustrate it by means of an example:

  \begin{table}[!t]
\begin{tabular}{ |l|l|l|}
  \rowcolor{gray!25}
  \hline
  $P$ &
    \multicolumn{2}{l|}{
  \begin{tabular}{l}
    \noalign{\smallskip}
    $\B{k}{\tilde x}{P}$
    \smallskip
  \end{tabular}
}  \\
  \hline

$\bout{u_i}{V}{Q}$ &
    \begin{tabular}{l}
      \noalign{\smallskip}
      $\propinp{k}{\widetilde x}
			\bbout{u_i}{\V{k+1}{\tilde y}{V\sigma}}
			\apropout{k+l+1}{\widetilde z}  \Par$ \\
      $\B{k+\degree+1}{\tilde z}{Q\sigma}$
     \smallskip
  \end{tabular}
  &
  \begin{tabular}{l}
    \noalign{\smallskip}
    $\widetilde y = \fv{V}, \ \widetilde z = \fv{Q}$ \\
    $\degree = \len{V}$ \\
		$\sigma = \begin{cases}
		\incrname{u}{i} & \text{if } u_i:S \\
		\{\} & \text{otherwise}
		\end{cases}$
    \smallskip
  \end{tabular}
  \\
%
\hline 
$\binp{u_i}{y}Q$
&

  \begin{tabular}{l}
      \noalign{\smallskip}
      $\propinp{k}{\widetilde x}\binp{u_i}{y} \apropout{k+1}{\widetilde x'}
     \Par \B{k+1}{\tilde x'}{Q\sigma}$
      \smallskip
  \end{tabular}
  &
  \begin{tabular}{l}
    \noalign{\smallskip}
	$\widetilde x' = \fv{Q} $ \\
		$\sigma = \begin{cases}
		\incrname{u}{i} & \text{if } u_i:S \\
		\{\} & \text{otherwise}
		\end{cases}$
    \smallskip
  \end{tabular}
    \\
  \hline

    $\appl{V}{u_i}$ &
  \begin{tabular}{l}
    \noalign{\smallskip}
    $\propinp{k}{\widetilde x}\appl{\V{k+1}{\tilde x}{V}}{\widetilde m}$
    \smallskip
  \end{tabular}
  &
  \begin{tabular}{l}
  	\noalign{\smallskip}
  	$u_i:C$ \\
  	$\widetilde x = \fv{V}$ \\
    $\widetilde{m} = (u_i,\ldots,u_{i+\len{\Gt{C}}-1})$
    \smallskip
  \end{tabular}
  \\
  \hline

  $\news{s:C}{P'}$ &
  \begin{tabular}{l}
    \noalign{\smallskip}
 	$\news{\widetilde{s}:\Gt{C}}{\,\B{k}{\tilde x}
 			{P'\sigma}}$
    \smallskip
  \end{tabular}
  &
  \begin{tabular}{l}
    \noalign{\smallskip}
    $\widetilde x = \fv{P'}$ \\
    $\widetilde{s} = (s_1,\ldots,s_{\len{\Gt{C}}})$ \\
	$\sigma = \begin{cases}
				\subst{s_1 \dual{s_1}}{s \dual{s}} & \text{if} \ C = S \\
				\subst{s_1}{s} & \text{if} \ C = \chtype{U}
\end{cases}
$
    \smallskip
  \end{tabular}
  \\
  \hline
  $Q \Par R$ &
  \begin{tabular}{l}
    \noalign{\smallskip}
    $\propinp{k}{\widetilde x} \propout{k+1}{\widetilde y}
    \apropout{k+\degree+1}{\widetilde z}   \Par$ 
    \\
    $\B{k+1}{\tilde y}{Q} \Par \B{k+\degree+1}{\tilde z}{R}$
    \smallskip
  \end{tabular}
  &
  \begin{tabular}{l}
    \noalign{\smallskip}
    $\widetilde y  = \fv{Q}$ \\
    $\widetilde z = \fv{R}$ \\
    $\degree = \len{Q}$
    \smallskip
  \end{tabular}
      \\
  \hline

  $\inact$ &
\begin{tabular}{l}
  \noalign{\smallskip}
  $\propinp{k}{}\inact$
 \smallskip
\end{tabular}
&
\begin{tabular}{l}
	\noalign{\smallskip}
\smallskip
\end{tabular}
  \\
  \hline

\rowcolor{gray!25}
$V$ &
  \multicolumn{2}{l|}{
\begin{tabular}{l}
  \noalign{\smallskip}
  $\V{k}{\tilde x}{V}$
  \smallskip
\end{tabular} } 
\\
\hline
$y$ &
\begin{tabular}{l}
  \noalign{\smallskip}
  $y$
  \smallskip
\end{tabular}
&
\begin{tabular}{l}
  \noalign{\smallskip}
  \smallskip
\end{tabular}
\\
\hline
%
%

$\abs{u:\slhotup{C}}P$ &
\begin{tabular}{l}
  \noalign{\smallskip}
  \begin{tabular}{l}
  $\abs{\widetilde{y}:{\slhotup{\Gt{C}}}}{\news{\widetilde \prop}\big(\apropout{k}{\widetilde x}
   \Par$ 
   \\ 
   \qquad \qquad \qquad \quad \quad $\B{k}{\tilde x}{P \subst{y_1}{y}}}\big)$
   \end{tabular}
  \smallskip
  \\
  $\widetilde{\prop} = \begin{cases} 
        \epsilon & \text{if} \ \leadsto = \multimap \\
     (\prop_k,\ldots,\prop_{k+\len{P}-1}) & \text{if} \ \leadsto = \rightarrow
 \end{cases}$ 
  \smallskip
\end{tabular}
&
\begin{tabular}{l}
  \noalign{\smallskip}
  $\widetilde x = \fv{V}$ \\
  $\widetilde{y} = (y_1,\ldots,y_{\len{\Gt{C}}})$ 
\end{tabular}
\\
\hline
\end{tabular}
\caption{The breakdown function for processes and values (core fragment). \label{t:bdowncore}}
\end{table}

\begin{example}[Breaking Down Name-Passing]
\label{ex:bdnp}
Consider the following process $P$, in which a channel $m$ is passed, through which a Boolean value is sent back: 
\begin{align*}
	P = \news{u}(\bout{u}{m} \binp{\dual m}{b} \inact \Par
		\binp{\dual u}{x} \bout{x}{\textsf{true}} \inact)
\end{align*}
$P$ is not an \HO process as it features name-passing. We then use the encoding described in \exref{ex:np} to construct its encoding into \HO. We thus obtain $\llbracket P \rrbracket = \news{u} (Q \Par R)$, where 
\begin{align*}
	& Q = \bout{u}{V}
		\binp{\dual m}{y}\news{s}{(\appl{y}{s} \Par
		\bout{\dual s}{\abs{b}{\inact}}\inact
		)} & V = \abs{z}\binp{z}{x}(\appl{x}{m})\\
	& R = \binp{\dual u}{y}\news{s}{(\appl{y}{s} \Par
		\bout{\dual s}{W}\inact)} & W =\abs{x}{\bout{x}{W'}\inact} \text{ with }W' = \abs{z}\binp{z}{x}(\appl{x}{\textsf{true})}
\end{align*}
By \exref{ex:np}, we know that $\map{\cdot}$ requires exactly four reduction steps to mimic a name-passing synchronization. We show here part of the reduction chain of $\map{P}$:
\begin{align}
\label{ex:reduction-chain}
&\map{P} \red^4  \map { \binp{\dual m}{b} \inact \Par
\bout{m}{\textsf{true}} \inact } \red^4 \inact
\end{align}
We will now investigate the decomposition of $\llbracket P \rrbracket$ and its reduction chain. 
First, we use \defref{def:sizeproc} to compute $\len V=\len {W'} =2$, and so $\len W = 4$. 
Then $\len{Q}=\len{\appl{y}{s} \Par
	\bout{\dual s}{\abs{b}{\inact}}\inact} + \len{V} + 2 = 9$, and similarly,  $\len R = 9$. 
	Therefore, $\len {\map{P}} = 19$.
Following \defref{def:decomp}, we see that $\sigma = \subst{m_1\dual{m_1}}{m\dual m}$, which we silently apply. Using $k=1$, we then have
the decomposition shown in \tabref{t:examp}.

\begin{table}[t!]
\begin{align*}
	 \D{\map P} & = \news{c_1,\ldots, c_{19}} \Big( \apropout{1}{}   \Par 
	\news{u_1}\big( \propinp{1}{} \propout{2}{} \apropout{11}{}   \Par \B{2}{\epsilon}{Q} \Par \B{11}{\epsilon}{R} \big) \Big)
	\\
	 \B{2}{\epsilon}{Q} & = \propinp{2}{} \bout{u_1}{\V{3}{\epsilon}{V}}
	\apropout{5}{}  \Par \propinp{5}{} \binp{\dual{m_1}}{y} \apropout{6}{y}  \Par 
	\\ 
	&   \quad \news{s_1}(\propinp{6}{y}\propout{7}{y}\apropout{8}{}  \Par 
	\propinp{7}{y}(\appl{y}{s_1}) \Par 
	\propinp{8}{}\bout{\dual {s_1}}{\V{9}{\epsilon}{\abs{b}{\inact}}}\apropout{10}{}  \Par \apropinp{10}{}  )
	\\
	\B{11}{\epsilon}{R} & = \propinp{11}{}\binp{\dual {u_1}}{y}\apropout{12}{y}  \Par  
	\\
	& \quad \news{s_1}\big( \propinp{12}{y}\propout{13}{y}\apropout{14}{} 
	\Par \propinp{13}{y}(\appl{y}{s_1}) \Par 
	\propinp{14}{}\bout{\dual{s_1}}{\V{15}{\epsilon}{W}}\apropout{19}{}  \Par \apropinp{19}{}  \big) 
	\\
	\V{3}{\epsilon}{V} & = \abs{z_1}{(\apropout{3}{}  \Par 
	\propinp{3}{} \binp{z_1}{x} \apropout{4}{x}   \Par 
	\propinp{4}{x} (\appl{x}{m_1}) )}
	\\
	 \V{9}{\epsilon}{\abs{b}{\inact}} & = \abs{b_1}{(\apropout{9}{}  \Par 
		\apropinp{9}{})} 
		\\
	 \V{15}{\epsilon}{W} & = \abs{x_1}{( \apropout{15}{}   \Par \propinp{15}{} \bout{x_1}{\V{16}{\epsilon}{W'}} \apropout{18}{}   \Par \apropinp{18}{}  )} 
	 \\
	 \V{16}{\epsilon}{W'} &  = \abs{z_1}{( \apropout{16}{}  \Par \propinp{16}{} \binp{z_1}{x} \apropout{17}{x}  \Par \propinp{17}{x} (\appl{x}{\textsf{true}}) )}
\end{align*}
\caption{The process decomposition discussed in \exref{ex:bdnp}. \label{t:examp}}
\end{table}
\tabref{t:examp} we have omitted substitutions that have no effect and trailing $\inact$s. The first interesting process appears after synchronizations on $c_1$, $c_2$, and $c_{11}$. At that point, the process will be ready to mimic the first action that is performed by  $\map P $, i.e., $u_1$ will send $\V{3}{\epsilon}{V}$, the breakdown of $V$. Next, $c_{12}$, $c_{13}$, and $c_{14}$ will synchronize, and  $\V{3}{\epsilon}{V}$ is passed further along, until $s_1$ is ready to be applied to it in the breakdown of $R$. At this point, we know that $\map{P} \red^{7} \news{\widetilde c}P'$, where $\widetilde c = (c_3,\ldots,c_{10},c_{15},\ldots,c_{19})$, and 
\begin{align*}
 P' = &~
\propout{5}{}\inact \Par \propinp{5}{} \binp{ \dual{m_1}}{y} \propout{6}{y} \inact 
\\
& \Par  
\news{s_1}(\propinp{6}{y}\propout{7}{y}\propout{8}{}\inact \Par 
\propinp{7}{y}\appl{y}{s_1} \Par 
\propinp{8}{}\bout{\dual {s_1}}{\V{9}{\epsilon}{\abs{b}{\inact}}}\propout{10}{} \inact \Par \propinp{10}{} \inact)\\
& \Par
\news{s_1}\big(  \appl{\V{3}{\epsilon}{V}}{s_1} \Par 
\bout{\dual{s_1}}{\V{15}{\epsilon}{W}}\propout{19}{}\inact \Par \propinp{19}{}\inact \big) 
\end{align*} 
After $s_1$ is applied, the trio guarded by $c_3$ will be activated, where $z_1$ has been substituted by $s_1$. Then $\dual{s_1}$ and $s_1$ will synchronize, and the breakdown of $W$ is passed along. Then $c_4$ and $c_{19}$ synchronize, and now $m_1$ is ready to be applied to $\V{15}{\epsilon}{W}$, which was the input for $c_4$ in the breakdown of $V$. After this application, $c_5$ and $c_{15}$ can synchronize with their duals, and we know that $\news{\widetilde c}P' \red^8 \news{\widetilde{c}'}P''$, where $\widetilde{c}'=(c_6,\ldots,c_{10},c_{16},c_{17},c_{18})$, and
\begin{align*}
P'' = & ~\binp{\dual{m_1}}{y} \propout{6}{y} \inact  
\Par
 \bout{m_1}{\V{15}{\epsilon}{W'}} \propout{17}{} \inact \Par \propinp{17}{}\inact
\\
& \Par \news{s_1}(\propinp{6}{y}\propout{7}{y}\propout{8}{}\inact \Par 
\propinp{7}{y}\appl{y}{s_1} \Par 
\propinp{8}{}\bout{\dual {s_1}}{\V{9}{\epsilon}{\abs{b}{\inact}}}\propout{10}{} \inact \Par \propinp{10}{} \inact)
\end{align*}
Remarkably, $P''$ is standing by to mimic the encoded exchange of value \textsf{true}. Indeed, the decomposition of the four-step reduced process in \eqref{ex:reduction-chain} will reduce in three steps to a process that is equal (up to $\scong_\alpha$) to the process we obtained here. This strongly suggests  a tight operational correspondence between a process and its decomposition. \hspace*{\fill} $\lhd$
\end{example}


We may now state our technical results:

\begin{theorem}[Typability of Breakdown]
\label{t:typecore}
   Let $P$ be an initialized process and $V$ be a value.
  \begin{enumerate}
  \item
  If $\Gamma;\Lambda;\Delta \proves P \hastype \Proc$ then
  $$\Gt{\Gamma_1};\es;\Gt{\Delta} \cat
  \Theta \proves \B{k}{\tilde x}{P} \hastype \Proc 
  \quad 
  (k > 0)
  $$ where:
   $\widetilde x = \fv{P}$;
   $\Gamma_1=\Gamma \setminus \widetilde x$;
   and
   $\balan{\Theta}$ with
    $\text{dom}(\Theta) = \{\prop_k,\prop_{k+1},\ldots,\prop_{k+\len{P}-1}\} \cup \{\dual{\prop_{k+1}},\ldots,\dual{\prop_{k+\len{P}-1}}\}$ and $\Theta(\prop_k)=
  \btinp{\widetilde M} \tinact$,
  where $\widetilde M = (\Gt{\Gamma}\cat\Gt{\Lambda})(\widetilde x)$.

  \item If $\Gamma;\Lambda;\Delta \proves V \hastype \lhot{C}$ then
  $$\Gt{\Gamma};\Gt{\Lambda};\Gt{\Delta} \cat \Theta \proves \V{k}{\tilde
  x}{V} \hastype \lhot{\Gt{C}}
    \quad 
  (k > 0)
  $$ where:
  $\widetilde x = \fv{V}$;
  and $\balan{\Theta}$ with
  $\text{dom}(\Theta) =  \{\prop_k,\ldots,\prop_{k+\len{V}-1}\} \cup \{\dual{\prop_{k}},\ldots,\dual{\prop_{k+\len{V}-1}}\}$
  and $\Theta(\prop_k)= \btinp{\widetilde M} \tinact$
  and $\Theta(\dual{\prop_k})= \btout{\widetilde M} \tinact$,
  where $\widetilde M = (\Gt{\Gamma}\cat\Gt{\Lambda})(\widetilde x)$.

  \item If $\Gamma;\es;\es \proves V \hastype \shot{C}$ then
  $\Gt{\Gamma};\es;\es \proves \V{k}{\tilde x}{V} \hastype \shot{\Gt{C}}$, where $\widetilde x = \fv{V}$
  and $k > 0$.

\end{enumerate}
\end{theorem}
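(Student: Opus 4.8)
The plan is to establish the three statements simultaneously by induction on the structure of the initialized process $P$ (for Part~1) and of the value $V$ (for Parts~2 and~3), following the case analysis of \tabref{t:bdowncore}. The mutuality is essential and harmless: the output case of Part~1 appeals to Part~2 or Part~3 on the transmitted value, while the abstraction case of Parts~2 and~3 appeals to Part~1 on the abstraction body $P\subst{y_1}{y}$; in every direction the argument of the inductive hypothesis is a structural subterm, so the induction is well-founded. Before the case analysis I would record a handful of routine facts: (i) a substitution lemma stating that the index-shifting substitutions $\sigma$ used in the output, input, and restriction cases commute with $\Gt{\cdot}$ on environments---e.g., if $u_i:\btinp{U}S$ occurs in $\Delta$ then $\Gt{\cdot}$ assigns $u_i:\btinp{\Gt{U}}\tinact$ and assigns to $u_{i+1},\dots$ exactly what it would assign from $u_{i+1}:S$, which is just unfolding \defref{def:typesdecompenv}; (ii) the weakening and contraction properties of the shared environment $\Gt{\Gamma}$, inherited from \secref{sec:types}; and (iii) the bookkeeping identities $\len{P\sigma}=\len{P}$ and $\fv{P\sigma}=\fv{P}$ for the shifts in play, so that the claimed domain of $\Theta$ is respected.

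For the base cases, $\B{k}{\tilde x}{\inact}=\propinp{k}{}\inact$ is typed by \textsc{(Rcv)} with an empty object, giving $\Theta=\{\prop_k:\btinp{\epsilon}\tinact\}$ and $\widetilde x=\epsilon$; and $\V{k}{\tilde x}{y}=y$ is immediate. For the inductive cases I would argue uniformly: apply the IH to the sub-breakdown(s), obtaining derivations whose propagator environments are indexed from $k{+}1$ (and whose boundary propagator carries an input type listing the types of the relevant free variables), then type the leading/control trio on top of them. In the input case one types $\propinp{k}{\widetilde x}\binp{u_i}{y}\apropout{k+1}{\widetilde x'}$ by \textsc{(Rcv)} on $\prop_k$ (object type the tuple $\widetilde M$ of types of $\widetilde x$), \textsc{(Rcv)} on $u_i$ (which by (i) carries the minimal type $\btinp{\Gt{U}}\tinact$), and \textsc{(Send)} on $\dual{\prop_{k+1}}$; parallel composition then adds $\dual{\prop_{k+1}}:\btout{\widetilde{M}'}\tinact$, dual to the input type of $\prop_{k+1}$ supplied by the IH, so the composite $\Theta$ is balanced and of the stated shape. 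The output case is analogous but splits on whether $u_i$ is linear: if so, $\sigma=\incrname{u}{i}$ splits the session assignments cleanly between the leading trio and $\B{k+\degree+1}{\tilde z}{Q\sigma}$ by (i), and the transmitted value is handled by Part~2 or Part~3 according to whether it has a linear or shared higher-order type, with $\dual{\prop_{k+\degree+1}}$ chosen to avoid the propagators $\prop_{k+1},\dots,\prop_{k+\degree}$ internal to $\V{k+1}{\tilde y}{V\sigma}$. Application, composition, and restriction are similar; the restriction case uses additionally that $\Gt{C}$ decomposes a session type into a list of minimal types whose endpoints stay pairwise dual, so that $\news{\widetilde s:\Gt{C}}$ binds a balanced block. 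For Parts~2 and~3 with $V=\abs{y}{P}$, one applies Part~1 to $P\subst{y_1}{y}$, wraps the result with the control trio $\apropout{k}{\widetilde x}$ (adding $\dual{\prop_k}:\btout{\widetilde M}\tinact$, dual to the $\prop_k$ introduced by the IH) and the abstraction over $\widetilde y$; when $\leadsto\,=\,\sharedop$ one further restricts all of $\widetilde\prop=(\prop_k,\dots,\prop_{k+\len{P}-1})$ so that \textsc{(Prom)} applies with empty linear and session environments, whereas when $\leadsto\,=\,\multimap$ the pair $\prop_k,\dual{\prop_k}$ is exported into $\Theta$ exactly as claimed.

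The main obstacle is not any individual case but the global consistency of propagator accounting: across all clauses one must check that $\text{dom}(\Theta)$ is exactly the stated interval of indices, that consecutive trios communicate on matching propagators so that $\Theta$ stays balanced, and that the boundary propagators carry precisely $\btinp{\widetilde M}\tinact$ (and $\btout{\widetilde M}\tinact$ for values) with $\widetilde M=(\Gt{\Gamma}\cat\Gt{\Lambda})(\widetilde x)$. This is delicate because the offsets $k{+}1$, $k{+}\degree{+}1$, $k{+}l{+}1$ are governed by the degrees of \defref{def:sizeproc}, and the contexts $\widetilde x,\widetilde y,\widetilde z$ are related by inclusions rather than equalities, since variables of shared type may be duplicated across $V$ and $Q$ (or $Q$ and $R$). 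A secondary subtlety is justifying that the top-level linear environment in Part~1 is empty: every free linear variable of $P$ must occur in $\widetilde x$ and so be absorbed into the input type $\Theta(\prop_k)$ rather than remain dangling, which rests on the invariant that the breakdown never discards a received linear variable unused.
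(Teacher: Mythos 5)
Your proposal matches the paper's proof in essentially every respect: the same mutual structural induction (output appealing to Parts~2/3, abstraction appealing to Part~1), the same auxiliary lemmas (substitution, shared-environment weakening and strengthening), the same per-case strategy of typing the leading/control trio on top of the IH derivations, and the same care over propagator index accounting and duality to keep $\Theta$ balanced. The only cosmetic difference is that the paper types $\propinp{k}{}$ via the dummy object convention of \remref{r:prefix} (type $\btinp{\shot{\tinact}}\tinact$) rather than an "empty object", which does not affect the argument.
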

\begin{proof}
By mutual induction on the structure of $P$ and $V$. 
\appendx{See Appendix~\ref{app:typecore} for details.}
\end{proof}

Using the above theorem, we can prove our main result:

\begin{theorem}[Typability of the Decomposition]
\label{t:decompcore}
	Let $P$ be a closed \HO process with $\widetilde u = \fn{P}$.
	If $\Gamma;\es;\Delta \proves P \hastype \Proc$ then
	$\Gt{\Gamma \sigma};\es;\Gt{\Delta \sigma} \proves \D{P} \hastype \Proc$, where
	$\sigma = \subst{\mathsf{init}(\widetilde u)}{\widetilde u}$.
\end{theorem}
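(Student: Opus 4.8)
The plan is to derive this as a corollary of Theorem~\ref{t:typecore} (Typability of Breakdown), instantiated at the outermost level. First I would recall the definition of the decomposition: $\D{P} = \news{\widetilde \prop}\big(\propout{k}{} \inact \Par \B{k}{\epsilon}{P\sigma}\big)$ with $k > 0$, $\widetilde \prop = (\prop_k,\ldots,\prop_{k+\len{P}-1})$, and $\sigma = \subst{\mathsf{init}(\widetilde u)}{\widetilde u}$. So the goal reduces to typing a parallel composition of a single output trio with the breakdown of the initialized process $P\sigma$, under the restriction on the propagator names $\widetilde \prop$.

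The key steps, in order, are as follows. (1) Observe that since $P$ is closed, $\fv{P} = \emptyset$, hence $\fv{P\sigma} = \emptyset$ as well (applying $\sigma$ only renames free names, not variables). (2) Note that $\sigma$ only initializes indices of names, so from $\Gamma;\es;\Delta \proves P \hastype \Proc$ we get $\Gamma\sigma;\es;\Delta\sigma \proves P\sigma \hastype \Proc$ — this is just a renaming of names in the judgment, which preserves typability; one should remark that $P\sigma$ is then an \emph{initialized} process in the sense of \defref{d:counterinit}, as required by Theorem~\ref{t:typecore}. (3) Apply Theorem~\ref{t:typecore}(1) to $P\sigma$ with context $\widetilde x = \epsilon$: this yields $\Gt{\Gamma\sigma};\es;\Gt{\Delta\sigma} \cat \Theta \proves \B{k}{\epsilon}{P\sigma} \hastype \Proc$, where $\balan{\Theta}$, $\text{dom}(\Theta) = \{\prop_k,\ldots,\prop_{k+\len{P}-1}\} \cup \{\dual{\prop_{k+1}},\ldots,\dual{\prop_{k+\len{P}-1}}\}$, and $\Theta(\prop_k) = \btinp{\widetilde M}\tinact$ with $\widetilde M = (\Gt{\Gamma\sigma}\cat\Gt{\emptyset})(\epsilon) = \epsilon$, i.e.\ $\Theta(\prop_k) = \btinp{\cdot}\tinact$ (in the notation of \remref{r:prefix}). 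Note $\len{P\sigma} = \len{P}$. (4) Type the control trio $\propout{k}{}\inact$, i.e.\ $\apropout{k}{\abs{x}{\inact}}\inact$: by Rule~\textsc{(Send)} (and \textsc{(Abs)}, \textsc{(Prom)}) it is typable with $\dual{\prop_k} : \btout{\cdot}\tinact$ in the session environment, which is exactly $\dual{\Theta(\prop_k)}$. (5) Compose the two derivations with Rule for parallel composition, using the session environment $\Gt{\Delta\sigma} \cat \Theta \cat \dual{\prop_k}:\btout{\cdot}\tinact$; call this extended environment $\Theta'$. Since $\Theta(\prop_k)$ and $\btout{\cdot}\tinact$ are dual and $\Theta$ was balanced on the remaining $\prop_{k+1},\ldots$ pairs, $\Theta'$ is balanced with $\text{dom}(\Theta') = \widetilde\prop \cup \dual{\widetilde\prop}$. (6) Apply the restriction rule $\news{\widetilde\prop}$ to close off all propagator names: since every $\prop_j$ for $j = k,\ldots,k+\len{P}-1$ together with its dual appears in $\Theta'$ with dual types, the restriction is well-typed and removes $\Theta'$ entirely, leaving exactly $\Gt{\Gamma\sigma};\es;\Gt{\Delta\sigma} \proves \D{P} \hastype \Proc$, as required.

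I do not expect a serious obstacle here, since the theorem does essentially all the work; the main thing to get right is bookkeeping. The one point deserving care is matching the propagator environment $\Theta$ produced by Theorem~\ref{t:typecore} against the restriction $\news{\widetilde\prop}$ in \defref{def:decomp}: the theorem gives us $\prop_k$ (input end only) plus matched pairs for $\prop_{k+1},\ldots$, and the control trio $\propout{k}{}\inact$ supplies exactly the missing $\dual{\prop_k}$ output end, so that after adding it all of $\widetilde\prop \cup \dual{\widetilde\prop}$ is present and balanced and can be bound. A secondary subtlety is confirming $\widetilde M = \epsilon$ because the context is empty, so $\Theta(\prop_k)$ is the trivial type $\btinp{\cdot}\tinact$ carrying a dummy (shared, closed) abstraction — this is consistent with the fact that $\propout{k}{}\inact$ is syntactic sugar for $\apropout{k}{\abs{x}{\inact}}\inact$ per \remref{r:prefix}. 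Everything else is a direct, routine assembly, so I would keep the proof short and defer the detailed derivation trees to the appendix.
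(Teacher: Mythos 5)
Your proposal is correct and follows essentially the same route as the paper's own proof: apply the substitution lemma to obtain a typing for the initialized process $P\sigma$, invoke Theorem~\ref{t:typecore} with empty context to type $\B{k}{\epsilon}{P\sigma}$ under a balanced $\Theta$ with $\Theta(\prop_k)=\btinp{\cdot}\tinact$, type the control trio $\propout{k}{}\inact$ to supply the missing $\dual{\prop_k}$ end, compose with the parallel rule, and close with the (polyadic) session restriction rule on $\widetilde\prop$. The bookkeeping points you flag — $\widetilde M = \epsilon$ and the duality of $\Theta(\prop_k)$ with $\btout{\cdot}\tinact$ enabling the restriction — are exactly the ones the paper checks.
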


\begin{proof}
Direct from the definitions, using \thmref{t:typecore}. 
\appendx{See Appendix~\ref{app:decompcore} for details.}
\end{proof}

\subsection{Extensions (I): Select and Branching}
\label{ss:exti}

We now show how to extend the decomposition to handle select and branch processes,
which implement labeled (deterministic) choice in session protocols,
 as well as their corresponding session types. As we will see, in formalizing this extension we shall appeal to the expressive power of abstraction-passing.
We start by extending the syntax of minimal session types:

\begin{definition}[Minimal Session Types (with Labeled Choice)]
\label{d:mtypesii}
The syntax of \emph{minimal session types} for \HO is defined as follows: 
	\begin{align*}
		M & \bnfis 	\tinact  \bnfbar  \btout{\widetilde{U}} \tinact \bnfbar \btinp{\widetilde{U}} \tinact
		\bnfbar \btsel{l_i : M_i}_{i \in I} \bnfbar \btbra{l_i : M_i}_{i \in I}
	\end{align*}
	where
	$U$ and $C$ are defined as in \defref{d:mtypesi}.
\end{definition}

We may then extend \defref{def:typesdecomp} to branch and select types as follows:
  \begin{definition}[Decomposing Session Types, Extended (I)]
  \label{def:typesdecomp-bra}
  The decomposition function on types as given in
  \defref{def:typesdecomp} is
  extended as follows:
    \begin{align*}
      \Gt{\btbra{l_i : S_i}_{i \in I}} &= \btbra{l_i:
\btout{\lhot{\Gt{S_i}}} \tinact}_{i \in I}
\\
      \Gt{\btsel{l_i : S_i}_{i \in I}} &=
      \btsel{l_i:\btinp{\lhot{\Gt{\dual {S_i}}}} \tinact}_{i \in I}
      \end{align*}
    \end{definition}
    The above definition for decomposed types already suggests our strategy to breakdown branching and selection processes: we will exploit abstraction-passing to exchange one abstraction per each branch of the labeled choice. This   intuition will become clearer shortly.

We now extend the definition of the degree of a process/value (cf. \defref{def:sizeproc}) to account for branch and select processes:
\begin{definition}[Degree of a Process, Extended]
	\label{def:sizeproc-bra}
	The \emph{degree} of a process $P$, denoted $\len{P}$, is as given in \defref{def:sizeproc}, extended as follows:
	$$
	\len{P} =
	\begin{cases}
	1 & \text{if $P = \bbra{u_i}{l_j:P_j}_{j \in I}$}
	\\
	\len{P'} + 2 & \text{if $P = \bsel{u_i}{l_j}P'$}
	\end{cases}
	$$
\end{definition}
The definition of process decomposition (cf. \defref{def:decomp}) does not require modifications; it relies on the extended definition of the breakdown function for processes $\B{k}{\tilde x}{\cdot}$ that combines the definitions in \tabref{t:bdowncore} with those in \tabref{t:bdown-selbra} (see below).
The breakdown of values $\V{k}{\tilde x}{\cdot}$ is as before, and relies on the extended definition of $\B{k}{\tilde x}{\cdot}$.

We now present and describe the breakdown of branching and selection processes:
%
\begin{description}

\item[Branching]
The breakdown of a branching process $\bbra{u_i}{l_j:P_j}_{j \in I}$ is as follows:
 \begin{align*}
 \B{k}{\tilde x}{\bbra{u_i}{l_j:P_j}_{j \in I}} = ~&
 \propinp{k}{\widetilde x} \bbra{u_i}{l_j :
   \abbout{u_i}{N_{u,j}
  	}
  }_{j \in I} 
  \\
  & \text{where} 
  \ \ N_{u,j} =  \abs{\widetilde y^{u}_{j} : \Gt{S_j}}{
  	\news{\widetilde \prop_j}
  	 \big(\apropout{k+1}{\widetilde x} 
  	\Par
  	\B{k+1}{\tilde x}{P_j \subst{y^{u}_{1}}{u_i}}\big)}
 \end{align*}
The first prefix receives the context $\widetilde x$.
The next two prefixes are along $u_i$: the first one mimics the branching action of $P$, whereas
the second outputs an abstraction $N_{u,j}$.
This output does not have a counterpart in $P$; it is meant to synchronize with an input in the breakdown of the corresponding selection process (see below).
$N_{u,j}$ encapsulates the breakdown of subprocess $P_j$.
It  has the same structure as the breakdown of a value $\abs{y:\shotup{C}}{P}$ in \tabref{t:bdowncore}:
it is a composition of a control trio and the breakdown of $P_j$; the generated propagators, denoted $\widetilde \prop_j$, are restricted.
We use types to define $N_{u,j}$:
we assume $S_j$ is the session type of $u_i$ in the $j$-th branch of $P$.
We abstract over  $\widetilde y^{u}_j = (y^{u}_1,\ldots,y^{u}_{\len{\Gt{S_j}}})$.
We substitute $u_i$ with $y^{u}_1$ in $P_j$ before breaking it down: this way, $u_i$ is decomposed and bound by abstraction.

\item[Selection]
The breakdown of a selection process $\bsel{u_i}{l_j}P'$ is as follows:
\begin{align*}
\B{k}{\tilde x}{\bsel{u_i}{l_j}P'} = ~&
 	\propinp{k}{\widetilde x}\apropbout{k+1}{M_j}
  \Par \news{\widetilde u:\Gt{S_j}}{(\propinp{k+1}{y}{\appl{y}{\widetilde{\dual u}}} \Par
  \B{k+2}{\tilde x}{P'\incrname{u}{i}})}
  \\
  &\text{where~} M_j = \abs{\widetilde y}{\bsel{u_i}{l_j}{\binp{u_i}{z}	\propout{k+2}{\widetilde x}\appl{z}{\widetilde y}}}
\end{align*}

 After receiving the context $\widetilde x$, 
 the abstraction $M_j$ is sent along $\dual{\prop_{k+1}}$, and is to be received by the second subprocess in the composition.
 This sequence of actions allows us to preserve the intended trio structure. We use $S_j$,  the type of $u_i$ in $P'$, 
  to construct a corresponding tuple $\widetilde u$, with type $\Gt{S_j}$.
 We apply the abstraction $M_j$, received along $\prop_{k+1}$, to $\widetilde{\dual u}$ (the duals of $\widetilde u$).
 At this point, the selection action in $P$ can be mimicked, and so label $l_j$ is chosen from the breakdown of a corresponding branching process. As discussed above, such a breakdown will send an abstraction $N_{u,j}$  with type $\lhot{\dual{S_j}}$, which encapsulates the breakdown of the chosen subprocess. Before running $N_{u,j}$ with names $\widetilde{\dual u}$, we trigger the breakdown of $P'$ with an appropriate substitution.
\end{description}

\begin{table}[!t]
\begin{tabular}{ |l|l|}
  \rowcolor{gray!25}
  \hline
  \multicolumn{2}{|l|}{
  \begin{tabular}{l}
  \noalign{\smallskip}
    $\B{k}{\tilde x}{\bbra{u_i}{l_j{:}P_j}_{j \in I}}$
    \smallskip
  \end{tabular} 
  \begin{tabular}{l}
  \end{tabular} 
  }
  \\
  \hline 
  \begin{tabular}{l}
    \noalign{\smallskip}
    $\propinp{k}{\widetilde x} \bbra{u_i}{l_j :
   \abbout{u_i}{N_{u,j}
  	}
  }_{j \in I}$ 
  \\
  where: 
  \\ $N_{u,j} =  \abs{\widetilde y^{u}_{j} : \Gt{S_j}}{
  	\news{\widetilde \prop_j}
  	 \big(\apropout{k+1}{\widetilde x} 
  	\Par
  	\B{k+1}{\tilde x}{P_j \subst{y^{u}_{1}}{u_i}}\big)}$ 
  \smallskip
  \end{tabular} & 
  \begin{tabular}{l}
      $\widetilde y^{u}_{j} = (y^{u}_{1},\ldots,y^{u}_{\len{\Gt{S_j}}})$ \\
    $\widetilde \prop_j = (\prop_{k+1},\ldots,\prop_{k+\len{P_j}})$
  \end{tabular} 
  \\
  \hline
  \rowcolor{gray!25}
    \multicolumn{2}{|l|}{
  \begin{tabular}{l}
    \noalign{\smallskip}
    $\B{k}{\tilde x}{\bsel{u_i}{l_j}P'}$
    \smallskip 
  \end{tabular}  
  }
  \\
  \hline 
  \begin{tabular}{l}
    \noalign{\smallskip}
  $\propinp{k}{\widetilde x}\apropbout{k+1}{M_j}
  \Par$ \\
  $\news{\widetilde u:\Gt{S_j}}{\big(\propinp{k+1}{y}{\appl{y}{\widetilde{\dual u}}} \Par
  \B{k+2}{\tilde x}{P'\incrname{u}{i}}\big)}$
  \\
  where: 
  \\
  $M_j = \abs{\widetilde y}{\bsel{u_i}{l_j}{\binp{u_i}{z}\propout{k+2}{\widetilde x}\appl{z}{\widetilde y}}}$
  \smallskip
 \end{tabular} & 
  \begin{tabular}{l}
      	\noalign{\smallskip}
  	$\widetilde y = (y_1,\ldots,y_{\len{\Gt{S_j}}})$ \\
	$\widetilde u=(u_{i+1},\ldots,u_{i+\len{\Gt{S_j}}})$
  \\
  $\widetilde {\dual {u}}=(\dual {u_{i+1}},\ldots,\dual {u_{i+\len{\Gt{S_j}}}})$ 
  \smallskip
  \end{tabular} \\
  \hline 
\end{tabular}
\caption{The breakdown function for processes  (extension with selection and branching).}
\label{t:bdown-selbra}
\end{table}

\noindent
Summing up, our strategy for breaking down labeled choices exploits higher-order concurrency to uniformly handle the fact that
the subprocesses of a branching process have a different session type and degree.
Interestingly, it follows the intuition that branching and selection correspond to a form of output and input actions involving labels, respectively.

\begin{remark}
Theorems~\ref{t:typecore} and \ref{t:decompcore} hold also for the extension with selection and branching
\appendx{(see
Appendix \ref{app:exti} for details)}.
\end{remark}


\begin{example}[Breaking down Selection and Branching]
	We illustrate the breaking down of selection and branching processes by considering a basic mathematical server $Q$ that allows clients to add or subtract two integers. The server contains two {branches}: one sends an abstraction $V_+$ that implements integer {addition}, the other sends an abstraction $V_-$ implementing {subtraction}. A client $R$ {selects} the first option to add  integers 16 and 26:
	\begin{align*}
	& Q \defas \bbra{u}{\textsf{add} : \bout{u}{V_+} \inact, \textsf{sub} : \bout{u}{V_-} \inact} \\
	& R \defas \bsel{\dual{u}}{\textsf{add}} \binp{\dual {u}}{x} \appl{x}{(\text{16,26})}
	\end{align*}
	The composition $P \defas \news{u}(Q \Par R)$ reduces in two steps to a process $\appl{V_+}{(\text{16,26})}$:
	\begin{align}\label{ex:reduction}
	 P ~\red~ \news{u} ( \bout{u}{V_+} \inact \Par \binp{\dual {u}}{x} \appl{x}{(\text{16,26})}) ~\red~ 
	 \appl{V_+}{(\text{16,26})}
	 \end{align}
	We will investigate the decomposition of $P$, and its reduction chain. 
	First, by \defref{def:sizeproc} and \defref{def:sizeproc-bra}, we have: $\len{Q}=1$, $\len{R}=4$, and $\len{P}=6$. 
	Following the extension of \defref{def:decomp}, using $k=1$, and observing that $\sigma_1 = \{\}$, we obtain:
	\begin{align*}
	\D{P} = \news{c_1\ldots c_6} \big( \propout{1}{} \inact \Par \news{u_1} (\propinp{1}{} \propout{2}{} \propout{3}{} \inact \Par \B{2}{\epsilon}{Q\sigma_2} \Par \B{3}{\epsilon}{R\sigma_2}) \big)
	\end{align*}
	where $\sigma_2=\subst{u_1\dual{u_1}}{u\dual u}$. 
	The breakdown of $Q$ is obtained by applying the first rule in \tabref{t:bdown-selbra}:
	\begin{align*}
	\B{2}{\epsilon}{Q\sigma_2} = \propinp{2}{} u_1 \brases \{  \textsf{add}: & ~\news{c_{3}c_{4}} u_1 \outses \big\langle \abs{y_1}{
		\propout{3}{} \inact
		\Par \B{3}{\epsilon}{\bout{u_1}{V_+} \inact \subst{y_1}{u_1}}
	} \big\rangle \shsep
	\inact,  \\
	\textsf{sub}: & ~\news{c_{3}c_{4}} u_1 \outses \big\langle \abs{y_1}{
		\propout{3}{} \inact
		\Par \B{3}{\epsilon}{\bout{u_1}{V_-} \inact \subst{y_1}{u_1}}
	} \big\rangle \shsep
	\inact \}
	\end{align*}
	The breakdown of $R$ is obtained by applying the second rule in \tabref{t:bdown-selbra}:
	\begin{align*}
	\B{3}{\epsilon}{R\sigma_2} = ~& \propinp{3}{}\propout{4}{\abs{y_1}{\bsel{\dual{u_1}}{\textsf{add}}{\binp{\dual{u_1}}{z}\propout{5}{}\appl{z}{y_1}}}}\inact \\
	& \Par \news{u_2}{(\propinp{4}{y}{\appl{y}{\dual {u_2}}} \Par
		\B{5}{\epsilon}{ \binp{\dual{u_1}}{x} \appl{x}{(\text{16,26})} \subst{u_2}{\dual{u_1}}})}
	\end{align*}
	We will now follow the chain of reductions of the process $\D{P}$. First, $c_1$, $c_2$, and $c_3$ will synchronize, after which $c_4$ will pass the abstraction. Let $\D{P} \red^4 P'$, then we know: \\
		\setlength{\abovedisplayskip}{0pt} \setlength{\abovedisplayshortskip}{0pt}
		\setlength{\belowdisplayskip}{0pt} \setlength{\belowdisplayshortskip}{0pt}
		\begin{align*}
	P' = \news{c_5 c_6} \news{u_1} \big(
	u_1 \brases \{  \textsf{add}: & ~\news{c_{3}c_{4}} u_1 \outses \big\langle \abs{y_1}{
		\propout{3}{} \inact
		\Par \B{3}{\epsilon}{\bout{y_1}{V_+} \inact}
	} \big\rangle \shsep
	\inact,  \\
	\textsf{sub}: & ~\news{c_{3}c_{4}} u_1 \outses \big\langle \abs{y_1}{
		\propout{3}{} \inact
		\Par \B{3}{\epsilon}{\bout{y_1}{V_-} \inact}
	} \big\rangle \shsep
	\inact \}
	\end{align*}
	\begin{align*}
	\qquad \qquad \qquad \quad \Par \news{u_2}{({\appl{\abs{y_1}{\bsel{\dual{u_1}}{\textsf{add}}{\binp{\dual{u_1}}{z}\propout{5}{}\appl{z}{y_1}}}}{\dual {u_2}}} \Par
		\B{5}{\epsilon}{ \binp{u_2}{x} \appl{x}{(\text{16,26})} })} \big )
	\end{align*} \\
	\setlength{\abovedisplayskip}{12pt} \setlength{\abovedisplayshortskip}{12pt}
	\setlength{\belowdisplayskip}{12pt} \setlength{\belowdisplayshortskip}{12pt}
	In $P'$, $\dual{u_2}$ will be applied to the abstraction with variable $y_1$. After that, the choice for the process labeled by $\textsf{add}$ is made. Process $P'$ will reduce further as  $P' \red^2 P'' \red^2 P'''$, where: \\
	\setlength{\abovedisplayskip}{0pt} \setlength{\abovedisplayshortskip}{0pt}
	\setlength{\belowdisplayskip}{0pt} \setlength{\belowdisplayshortskip}{0pt}
	\begin{align*}
	P'' = \news{c_5 c_6} \news{u_1} \big( &
	\news{c_{3}c_{4}} u_1 \outses \big\langle \abs{y_1}{
		\propout{3}{} \inact
		\Par \B{3}{\epsilon}{\bout{y_1}{V_+} \inact}
	} \big\rangle \shsep
	\inact \\
	& \Par \news{u_2}{(\binp{\dual{u_1}}{z}\propout{5}{}\appl{z}{\dual {u_2}} \Par
		\B{5}{\epsilon}{ \binp{u_2}{x} \appl{x}{(\text{16,26})} })} \big ) 
	\end{align*} 
	\begin{align*}
	P''' = \news{c_3c_4c_5 c_6} \big(  \news{u_2}{\propout{5}{}
			\propout{3}{} \inact
			\Par \B{3}{\epsilon}{\bout{u_2}{V_+} \inact }
		  \Par
		\B{5}{\epsilon}{ \binp{u_2}{x} \appl{x}{(\text{16,26})} })} \big ) 
	\end{align*}\\
	\setlength{\abovedisplayskip}{12pt} \setlength{\abovedisplayshortskip}{12pt}
	\setlength{\belowdisplayskip}{12pt} \setlength{\belowdisplayshortskip}{12pt}
	Interestingly,  $P'''$ strongly resembles a decomposition of the one-step reduced process in \eqref{ex:reduction}. This advocates the operational correspondence between a process and its decomposition. 
	\hspace*{\fill} $\lhd$
\end{example}

\subsection{Extensions (II): Recursion}
\label{ss:extii}

We extend the decomposition to handle \HO processes in which names can be typed with recursive session types $\trec{t}{S}$. 
We consider recursive types which are \textit{simple} and \textit{contractive}, i.e., in $\trec{t}{S}$, the body $S \neq \vart{t}$ does not contain recursive types. Unless stated otherwise, we shall handle \emph{tail-recursive} session types such as, e.g., 
$S = \trec{t}\btinp{\mathsf{Int}}\btinp{\mathsf{Bool}}\btout{\mathsf{Bool}}\vart{t}$.  
Non-tail-recursive session types such as $\trec{t}{\btinp{\shot{(\widetilde T,\vart{t})}}\tinact}$, which is essential in the 
fully abstract encoding of \HOp into \HO~\cite{DBLP:conf/esop/KouzapasPY16}, can also be accommodated; see \remref{r:ntrsts} below. 

We start by extending minimal session types
(\defref{d:mtypesi}) with 
minimal recursive types: 

\begin{definition}[Minimal Recursive Session Types]
	\label{d:mtypesi-rec}
	The syntax of \emph{minimal recursive session types} 
	for \HO is defined as follows: 
	 \begin{align*}
	 M & 
	 \bnfis 	\mugamma  \bnfbar  \btout{\widetilde{U}} \mugamma 			\bnfbar \btinp{\widetilde{U}} \mugamma \bnfbar \trec{t}{M }	\\
	 \mugamma & \bnfis \tinact \bnfbar \vart{t}
	 \end{align*}


\end{definition}
Thus, types  such as $\trec{t}{\btout{U}\vart{t}}$ and
$\trec{t}{\btinp{U}\vart{t}}$
are minimal recursive session types: in fact they are  tail-recursive session types with exactly one 
session prefix.    
We extend \defref{def:typesdecomp} as follows:

\begin{definition}[Decomposing Session Types, Extended (II)]
 \label{def:recurdecomptypes}
Let $\trec{t}{S}$ be a recursive session type.
The  decomposition
function given in \defref{def:typesdecomp} is extended as:
	\label{def:decomptyp-rec}
 \begin{align*}
 	\Gt{\vart{t}} &= \vart{t}
	& 
    \Gt{\trec{t}{S}} &= \begin{cases} 	\Rt{S} & \text{if $\trec{t}{S}$ is tail-recursive} \\ 
    \trec{t}{\Gt{S}} & \text{otherwise}
 \end{cases} 
	\\
\Rt{\vart{t}} &= \epsilon 
&
		\Rt{\btout{U}S} &= \trec{t}{\btout{\Gt{U}}} \tvar{t}, \Rt{S} 
		\\
& & \Rt{\btinp{U}S} &= \trec{t}{\btinp{\Gt{U}}} \tvar{t}, \Rt{S} 
 \end{align*}
We shall also use the function $\Rts{}{s}{\cdot}$, which is defined as follows:
 \begin{align*}
 \Rts{}{s}{\btinp{U}S} = \Rts{}{s}{S} \qquad
\Rts{}{s}{\btout{U}S} = \Rts{}{s}{S} \qquad
\Rts{}{s}{\trec{t}{S}} = \Rt{S}
 \end{align*}
\end{definition}

Hence, $\Gt{\trec{t}{S}}$ is 
 a list of minimal recursive session types, obtained using the auxiliary function $\Rt{\cdot}$ on   $S$: if $S$ has $k$ prefixes then the list   
 $\Gt{\trec{t}{S}}$ will contain $k$ minimal recursive session types. 
 The auxiliary function $\Rts{}{s}{\cdot}$ decomposes
 \emph{guarded}  recursive session types: it skips session prefixes until a type of form $\trec{t}{S}$ is encountered; 
 when that occurs, the recursive type is decomposed using $\Rt{\cdot}$. 
 We illustrate \defref{def:recurdecomptypes} with two examples:

\begin{example}[Decomposing a Recursive Type]
\label{ex:rtype}
Let
$S = \trec{t}S'$ be a recursive session type, with $S'=\btinp{\mathsf{Int}}\btinp{\mathsf{Bool}}\btout{\mathsf{Bool}}\vart{t}$.
By \defref{def:recurdecomptypes},
 since $S$ is tail-recursive,  $\Gt{S} = \Rt{S'}$. 
Further, 
$\Rt{S'} = \trec{t}\btinp{\mathsf{\Gt{Int}}} \vart{t}, \Rt{\btinp{\mathsf{Bool}}\btout{\mathsf{Bool}}\vart{t}}$. 
By definition of $\Rt{\cdot}$, we obtain $\Gt{S} = \trec{t}\btinp{\mathsf{Int}} \vart{t}, \trec{t}\btinp{\mathsf{Bool}} \vart{t}, 
\trec{t}\btout{\mathsf{Bool}} \vart{t}, \Rt{t}$ (using $\Gt{\mathsf{Int}} = \mathsf{Int}$ and $\Gt{\mathsf{Bool}} = \mathsf{Bool}$).
Since 
$\Rt{\vart{t}} = \epsilon$, we obtain 
$\Gt{S} = \trec{t}\btinp{\mathsf{Int}} \vart{t}, \trec{t}\btinp{\mathsf{Bool}} \vart{t}, 
\trec{t}\btout{\mathsf{Bool}} \vart{t}$.  	
\hspace*{\fill} $\lhd$
\end{example}


\begin{example}[Decomposing an Unfolded Recursive Type]
 Let $T = \btinp{\mathsf{Bool}}\btout{\mathsf{Bool}}S$ be a derived unfolding of  $S$ from 
  \exref{ex:rtype}. Then, by \defref{def:recurdecomptypes}, $\Rts{}{s}{T}$ is the list of minimal recursive 
  types obtained as follows:  first, 
  $\Rts{}{s}{T} = \Rts{}{s}{\btout{\mathsf{Bool}}\trec{t}S'}$ and after one more step, $\Rts{}{s}{\btout{\mathsf{Bool}}\trec{t}S'} = \Rts{}{s}{\trec{t}S'}$. Finally, we have $\Rts{}{s}{\trec{t}S'} = \Rt{S'}$. 
  We get the same list of minimal types as in \exref{ex:rtype}: $\Rts{}{s}{T} = \trec{t}{\btinp{\mathsf{Int}}\vart{t}}, \trec{t}{\btinp{\mathsf{Bool}}\vart{t}}, \trec{t}{\btout{\mathsf{Bool}}
  \vart{t}}$. 
  \hspace*{\fill} $\lhd$
\end{example}


%

%
%

We now explain how to decompose processes whose names are typed with recursive types. 
In the core fragment, we decompose a name $u$ into a sequence of names $\tilde u = (u_1,\ldots,u_n)$: 
each $u_i \in \tilde u$ is used exactly by one trio to perform exactly one action; 
the session associated to $u_i$ ends after its single use, as prescribed by 
its minimal session type. The situation is different when names can have recursive types, 
for the names $\tilde u$ should be propagated in order to be used infinitely many times. 
As a simple example, consider the process
\begin{align*}
R = \binp{r}{x}\bout{r}{x}\appl{V}{r}    
\end{align*}
\noindent where name $r$ has type $S=\trec{t}{\btinp{\mathsf{Int}}\btout{\mathsf{\mathsf{Int}}}}{\vart{t}}$
and the higher-order type of $V$ is $\shot{S}$. 
Processes of this form are key in the encoding of recursion given in~\cite{DBLP:conf/esop/KouzapasPY16}.
A naive decomposition of $R$, using the approach we defined 
for processes without recursive types, would result into 
\begin{align*}
\B{1}{\epsilon}{R}= \propinp{1}{}&\binp{r_1}{x}\propout{2}{x}\inact \Par 
 \propinp{2}{x}\bout{r_2}{x}\propout{3}{}\inact \Par 
 \propinp{3}{}\appl{\V{}{\epsilon}{V}}{(r_3,r_4)} 
\end{align*}
There are several issues with this breakdown. One of them is typability: we have that 
$r_1 : \trec{t}{\btinp{\mathsf{Int}}}\vart{t}$, but subprocess $\propout{2}{x}\inact$  is not typable under a linear environment containing such a judgment. Another, perhaps more central, issue concerns $\tilde r$: 
the last trio (which mimics application) should apply to the sequence of names 
$(r_1,r_2)$, rather than to $(r_3,r_4)$.
We address both issues by devising a mechanism that propagates names with recursive types (such as $(r_1,r_2)$) among the trios that use some of them. This entails decomposing $R$ in such a way that the first two trios 
propagate $r_1$ and $r_2$ after they have used them; the trio simulating $\appl{V}{r}$ should then have a way to access the propagated names $(r_1,r_2)$.

We illustrate the key insights underpinning our solution by means of two examples.
The first one illustrates how to break down input and output actions on names with recursive types (the ``first part'' of 
$R$). 
The second example shows how to break down an application where a value is applied 
to a tuple of names with recursive names (the ``second part'' of $R$). 
 
\begin{example}[Decomposing Processes with Recursive Names (I)]
\label{ex:recdec}    
Let $P = \binp{r}{x}\bout{r}{x}P'$ be a process 
where $r$ has type $S=\trec{t}{\btinp{\mathsf{Int}}\btout{\mathsf{\mathsf{Int}}}}{\vart{t}}$
and $r \in \fn{P'}$. To define $\B{1}{\epsilon}{P}$ in a compositional way, names $(r_1,r_2)$
should be provided to its first trio; they cannot be known beforehand. 
To this end, we introduce a new  control trio that will hold these names:
\begin{align*} 
    \proprinp{r}{b}{\appl{b}{(r_1, r_2)}}
\end{align*}
where the \emph{shared} name $\prop^r$ provides a decomposition of the (recursive) name $r$.
The intention is that each name with a recursive type $r$ will get its own dedicated propagator channel $\prop^r$. 
Since there is only one recursive name in $P$, its decomposition will be of the following form:
\begin{align*}
	\D{P} = \news{\tilde \prop}\news{\prop^r} 
	\big( \proprinp{r}{b}{\appl{b}{(r_1,r_2)}} \Par
\propout{1}{}\inact \Par  \B{1}{\epsilon}{P} \big)
\end{align*}
The new control trio can be seen as a server that provides names: each trio that mimics some action on $r$ should request the sequence $\tilde r$ from the server on $\prop^r$. This request will be realized by a higher-order communication: trios should 
send an abstraction to the server; such an abstraction will contain further actions of a trio and it will be applied to the sequence $\tilde r$. Following this idea, 
we may refine the definition of $\D{P}$ by expanding $\B{1}{\epsilon}{P}$:
\begin{align*}
	\D{P} = \news{\tilde \prop}\news{\prop^r} 
	\big( \proprinp{r}{b}{\appl{b}{(r_1,r_2)}} \Par
\propout{1}{}\inact \Par  \propinp{1}{} \proprout{r}{N_1} \inact \Par 
	\propinp{2}{y} \proprout{r}{N_2} \inact  
	\Par \B{3}{\epsilon}{P'}
\big)
\end{align*}
The trios involving names with recursive types have now a different shape. 
After being triggered by a previous trio, rather than immediately mimicking an action, they will send
an abstraction to the server available on $\prop^r$. The abstractions $N_1$ and $N_2$ are defined as follows:
\begin{align*}
N_1 = \abs{(z_1, z_2)}\binp{z_1}{x}\propout{2}{x}\proprinp{r}{b}{
\appl{b}{(z_1,z_2)}}	 
\quad
N_2 = \abs{(z_1, z_2)}\bout{z_2}{x}\propout{3}{}
\proprinp{r}{b}{\appl{b}{(z_1,z_2)}}	
\end{align*}
Hence, the formal arguments for these values are meant to correspond to $\tilde r$. The server on name $\prop^r$ will appropriately instantiate these names. Notice that all names in $\tilde r$ are propagated, even if the abstractions only use some of them. For instance, $N_1$ only uses $r_1$, whereas $N_2$ uses $r_2$. After simulating an action on $r_i$ and activating the next trio, these values reinstate the server on $\prop^r$ for the benefit of future trios mimicking actions on $r$. 
		\hspace*{\fill} $\lhd$
\end{example}

\begin{example}
[Decomposing Processes with Recursive Names (II)]
\label{ex:rec2}
Let 
$S = \trec{\vart{t}}\btinp{\mathsf{Int}}\btout{\mathsf{Int}} \vart{t}$
and
$T = \trec{\vart{t}}\btinp{\mathsf{Bool}}\btout{\mathsf{Bool}} \vart{t}$, and 
define $Q = \appl{V}{(u,v)}$ as a process where $u : S$ and $v : T$, where $V$ is some value of type $\shot{(S, T)}$.  
The  decomposition of $Q$ is as in the previous example, except that now we need two servers, one for $u$ and one for $v$:
\begin{align*}
	\D{Q} = \news{\prop_1 \tilde \prop}\news{\prop^u \prop^v} \big( 
	\binp{\prop^u}{b}\appl{b}{(u_1,u_2)} \Par 
	\binp{\prop^v}{b}\appl{b}{(v_1,v_2)} \Par 
	\propout{1}{}\inact \Par 
	\B{1}{\epsilon}{Q}
	\big)	
\end{align*}

\noindent where $\tilde \prop = (\prop_2,\ldots,\prop_{\len{Q}})$. 
We should break down $Q$ in such a way that it could communicate 
with both servers to collect sequences $\tilde u$ and $\tilde v$.  
To  this end, we define a process in which abstractions are nested using output prefixes and whose innermost process is an application. After successive communications with multiple servers this innermost application will have collected all names in $\tilde u$ and $\tilde v$. 
We apply this idea to breakdown $Q$:
\begin{align*}
	\B{1}{\epsilon}{Q} = \propinp{1}{}\bbout{\prop^u}{\abs{(x_1,x_2)}{
	\proprout{v}{\abs{(y_1,y_2)}{\appl{\V{2}{\epsilon}{V}}{(x_1,x_2,y_1,y_2)}}}\inact}}\inact
\end{align*}
Observe that we use two nested outputs, one for each name with recursive types in $Q$. 
We now look at the reductions of $\D{Q}$ to analyze how the communication of nested abstractions allows us to collect all name sequences needed. After the first reduction along $\prop_1$ we have: 
\begin{align*}
	\D{Q} \red & \news{\tilde \prop}\news{\prop^u \prop^v} \big( 
	\binp{\prop^u}{b}\appl{b}{(u_1,u_2)} \Par 
	\binp{\prop^v}{b}\appl{b}{(v_1,v_2)} \Par  
	\\
	& \proprout{u}{\abs{(x_1,x_2)}{
	\proprout{v}{\abs{(y_1,y_2)}{\appl{\V{2}{\epsilon}{V}}{(x_1,x_2,y_1,y_2)}}}\inact}}\inact
 = R^1
\end{align*}
From $R^1$ we have a synchronization along  name $\prop^u$: 
\begin{align*}
	R^1 \red & \news{\tilde \prop}\news{\prop^u \prop^v} \big( 
	\appl{\abs{(x_1,x_2)}{
	\proprout{v}{\abs{(y_1,y_2)}{\appl{\V{2}{\epsilon}{V}}{(x_1,x_2,y_1,y_2)}}}\inact}}{(u_1,u_2)} \Par \\
	& \binp{\prop^v}{b}\appl{b}{(v_1,v_2)} \big)
 = R^2
\end{align*}
Upon receiving the value, the server applies it to $(u_1,u_2)$ 
obtaining the following process: 
\begin{align*}
	R^2 \red & \news{\tilde \prop}\news{\prop^u \prop^v} \big( 
		\proprout{v}{\abs{(y_1,y_2)}{\appl{\V{2}{\epsilon}{V}}{(u_1,u_2},y_1,y_2)}}\inact \Par 
	\binp{\prop^v}{b}\appl{b}{(v_1,v_2)}\big)  = R^3
\end{align*}
Up to here, we have partially  instantiated name variables of a value 
with the sequence $\tilde u$. Next, the first trio in $R^3$ can communicate with the server on name $\prop^v$:  
\begin{align*}
	R^3 \red & \news{\tilde \prop}\news{\prop^u \prop^v} \big( \appl{\abs{(y_1,y_2)}{\appl{\V{2}{\epsilon}{V}}{(u_1,u_2,y_1,y_2)}}}{(v_1,v_2)}\big)  
	\\
	\red & \news{\tilde \prop}\news{\prop^u \prop^v} \big( {\appl{\V{2}{\epsilon}{V}}{(u_1,u_2,v_1,v_2)}}\big) 
\end{align*}
This completes the instantiation of name variables with appropriate sequences of names with recursive types.
At this point, $\D{Q}$ can proceed to mimic the application in $Q$. 
\hfill $\lhd$
\end{example}

These two examples illustrate the main ideas of the decomposition of processes that involve names with recursive types.
\tabref{t:bdownrecur} presents a formal account of the extension of 
the definition of process decomposition given in \defref{def:decomp}.
Before explaining the table in detail, we require an auxiliary definition.


Given an unfolded recursive session type $S$,  
the  auxiliary function $f(S)$   returns the position of the top-most prefix of $S$ within its body.
(Whenever $S = \trec{t}{S'}$, we have $f(S)=1$.)

\begin{definition}[Index function]
\label{def:indexfunction}
Let $S$ be an (unfolded) recursive session type. The function $f(S)$ is defined as follows:  
\begin{align*}
	f(S) = \begin{cases}
 	f'_0(S'\subst{S}{\vart{t}}) & \text{if} \ S =\trec{t}{S'} \\
 	f'_0(S) & \text{otherwise}
 \end{cases}
\end{align*}
\noindent where:
$f'_l(\trec{t}{S}) = \len{\Rt{S}} - l + 1$,
		\qquad
		$f'_l(\btout{U}S)  = f'_{l+1}(S)$, 
		\qquad
	$f'_l(\btinp{U}S)  = f'_{l+1}(S)
	$.
\end{definition}


\begin{example}
\label{ex:fs}
		Let  
		$S' = \btinp{\mathsf{Bool}}\btout{\mathsf{Bool}}S$ where 
		$S$ is as in \exref{ex:rtype}.
		Then $f(S') = 2$ since the top-most prefix of $S'$ (`$\btinp{\mathsf{Bool}}$') is 
		the second prefix in the body of $S$. 
		\hspace*{\fill} $\lhd$
\end{example}

Given a typed process $P$, we write $\rfn{P}$ to denote the set of free names of $P$ whose types are recursive.
As mentioned above, for each 
$r \in \rfn{P}$ with $r : {S}$
we shall rely on a  control trio of the form  $\binp{\prop^r}{b}\appl{b}{\widetilde r}$, 
where $\widetilde{r} = r_1, \ldots,  r_{\len{\Gt{S}}}$.

\begin{definition}[Decomposition of a Process with Recursive Session Types]
	\label{def:decomp-rec}
	Let $P$ be a closed \HO process with $\widetilde u = \fn{P}$ and $\widetilde v = \rfn{P}$.
  The \emph{decomposition} of $P$, denoted $\D{P}$, is
defined as:
$$
  \D{P} = \news{\widetilde \prop}\news{\widetilde \prop_r}\Big(
  \prod_{r \in \tilde{v} } \binp{\prop^r}{b}\appl{b}{\widetilde r} \Par
  \propout{k}{} \inact \Par \B{k}{\epsilon}{P\sigma}\Big)
  $$
  \noindent where: $k >0$;
  $\widetilde \prop = (\prop_k,\ldots,\prop_{k+\len{P}-1})$;
  $\widetilde{\prop_r} = \bigcup_{r\in \tilde{v}}\prop^r$;
    $\sigma = \subst{\mathsf{init}(\widetilde u)}{\widetilde u}$.
\end{definition}

\noindent We now describe the required extensions for the function $\B{k}{\tilde x}{\cdot}$. We will use predicate $\mathsf{tr}(S)$ on types to indicate that  $S$ is a tail-recursive session type. 
\tabref{t:bdownrecur} describes the breakdown of prefixes  whose type is recursive; all other prefixes can be treated as in 
\tabref{t:bdowncore}.

  \begin{table}[!t]
\begin{tabular}{ |l|l|l|}
  \rowcolor{gray!25}
  \hline 
    \multicolumn{2}{|l|}{
    \begin{tabular}{l}
        \noalign{\smallskip}
        $\B{k}{\tilde x}{\bout{r}{V}{Q}}$
        \smallskip
    \end{tabular} } 
    \\
  \hline 
    \begin{tabular}{l}
        \noalign{\smallskip}
        $\propinp{k}{\widetilde x}
      		\abbout{\prop^r}{N_V}   \Par$ 
        $\B{k+\degree+1}{\tilde w}{Q}$
        \smallskip
        \\
        where:
        \\
        $N_V = \abs{\widetilde z}
      		{\bbout{z_{f(S)}}{\V{k+1}{\tilde y}{V}}}{}$ \\ 
      		\quad \qquad \qquad \qquad $\propout{k+\degree+1}
      		{\widetilde w} \binp{\prop^r}{b}
			(\appl{b}{\widetilde z}$)
			\smallskip
    \end{tabular} & 
    \begin{tabular}{l}
           \noalign{\smallskip}
        $r:S \wedge \mathsf{tr}(S)$ \\
        $\widetilde y = \fv{V}, \ \widetilde w = \fv{Q}$ \\
        $\degree = \len{V}$ \\
        $\widetilde z = (z_1,\ldots,z_{\len{\Rts{}{s}{S}}})$
        \smallskip
    \end{tabular} 
    \\
  \rowcolor{gray!25}
  \hline 
    \multicolumn{2}{|l|}{
    \begin{tabular}{l}
    \noalign{\smallskip}
        $\B{k}{\tilde x}{\binp{r}{y}Q}$
        \smallskip
    \end{tabular} 
  } 
  \\
  \hline 
    \begin{tabular}{l}
      \noalign{\smallskip}
      $\propinp{k}{\widetilde x}\abbout{\prop^r}
      {N_y} 
      \Par \B{k+1}{\tilde x'}{Q}$
      \\
      where:
      \\
      $N_y = \abs{\widetilde z}{\binp{z_{f(S)}}{y}\propout{k+1}{\widetilde x'}
      \binp{\prop^r}{b}(\appl{b}{\widetilde z})}$
      \smallskip
    \end{tabular} & 
    \begin{tabular}{l}
        \noalign{\smallskip}
        $r:S \wedge \mathsf{tr}(S)$ \\
        $\widetilde x' = \fv{Q}$ \\
	   $\widetilde z = (z_1,\ldots,z_{\len{\Rts{}{s}{S}}})$ 
	\smallskip
    \end{tabular} \\
    \rowcolor{gray!25}
  \hline 
    \multicolumn{2}{|l|}{
    \begin{tabular}{l}
    \noalign{\smallskip}
        $\B{k}{\tilde x}{\appl{V}{(\widetilde r, u_i)}}$
            \smallskip
    \end{tabular} 
} 
\\
  \hline 
    \begin{tabular}{l}
        \noalign{\smallskip}
        $\propinp{k}{\widetilde x}\overbracket{\prop^{r_1}!\big\langle 
        \lambda \widetilde z_1. \prop^{r_2}!\langle\lambda \widetilde z_2.\cdots. 
	 	\prop^{r_n}!\langle \lambda \widetilde z_n.}^{n = |\tilde r|} 
	 	Q \rangle \,\rangle \big\rangle$ \\
		where:
		\\
	 	$Q = \appl{\V{k+1}{\tilde x}{V}}{(\widetilde z_1,\ldots,
	 	\widetilde z_n, \widetilde m)}$
    \smallskip
    \end{tabular} & 
    \begin{tabular}{l}
          \noalign{\smallskip}
  		$\forall r_i \in \widetilde r.(r_i: S_i \wedge \mathsf{tr}(S_i) \wedge$\\
        \qquad $\widetilde{z_i} = (z^i_1,\ldots,z^i_{\len{\Rts{}{s}{S_i}}}))$\\
        $u_i : C$ \\ 
        $\widetilde m = (u_i, \ldots, u_{i+\len{\Gt{C}}-1})$
        \smallskip
    \end{tabular} \\
    \rowcolor{gray!25}
  \hline 
    \multicolumn{2}{|l|}{
    \begin{tabular}{l}
    \noalign{\smallskip}
        $\B{k}{\tilde x}{\news{s:\trec{t}{S}}{P'}}$
            \smallskip
    \end{tabular} 
    } \\
  \hline 
    \begin{tabular}{l}
        \noalign{\smallskip}
 	      $\news{\widetilde{s}:\mathcal{R}(S)}
 		\news{c^s}\binp{\prop^s}{b}(\appl{b}{\widetilde s}) \Par$ \\ 
 		\qquad \qquad \quad 
 		$\news{c^{\bar{s}}}\binp{\prop^{\bar{s}}}{b}(\appl{b}{\widetilde{\dual s}}) \Par$ 
 		$\B{k}{\tilde x}{P'}$
    \smallskip
    \end{tabular} & 
    \begin{tabular}{l}
        \noalign{\smallskip}
        $\mathsf{tr}(\trec{t}{S})$ \\
        $\widetilde{s} = (s_1,\ldots,s_{\len{\Rt{S}}})$ \\
        $\widetilde {\dual{s}} = (\dual{s_1},\ldots,\dual{s_{\len{\Rt{S}}}})$ \\
    \smallskip
    \end{tabular} \\
    \rowcolor{gray!25}
  \hline 
    \multicolumn{2}{|l|}{
    \begin{tabular}{l}
        \noalign{\smallskip}
        $\V{k}{\tilde x}{\abs{(\widetilde y, z):\slhotup{(\widetilde S, C)}}P}$
        \smallskip
    \end{tabular} 
    } \\
  \hline 
    \begin{tabular}{l}
        \noalign{\smallskip}
        $\abs{(\widetilde{y^1},\ldots,\widetilde{y^n}, \widetilde z):
  	 \slhotup{(\widetilde{T})}}{N}$ 
	 \\
	 \smallskip
	   where:
	   \\
	   $\widetilde{T} = (\Gt{S_1},\ldots,\Gt{S_n}, \Gt{C})$
	   \\
            $N = \news{\widetilde \prop}
        \prod_{i \in \len{\widetilde y}}(\binp{\prop^{y_i}}{b}(\appl{b}{\widetilde y^i})) \Par \apropout{k}{\widetilde x}
    \Par$ \\
  \quad \quad \qquad \qquad \qquad  \qquad  \ \  $\B{k}{\tilde x}{P \subst{z_1}{z}}$
  \\
  \smallskip
    \end{tabular} & 
    \begin{tabular}{l}
           \noalign{\smallskip}
        $\forall y_i \in \widetilde y.(y_i: S_i \wedge \mathsf{tr}(S_i) \wedge$\\
         \qquad $\widetilde{y^i} = (y^i_1,\ldots,y^i_{\len{\Gt{S_i}}}))$\\
         $\widetilde z = (z_1,\ldots,z_{\len{\Gt{C}}})$ \\
         $\widetilde{\prop} = \begin{cases} 
        \epsilon & \text{if} \ \leadsto = \multimap \\
     (\prop_k,\ldots,\prop_{k+\len{P}-1}) & \text{if} \ \leadsto = \rightarrow
 \end{cases}$ 
        \smallskip
    \end{tabular} \\
    \hline 
  \end{tabular}
  \caption{The breakdown function for processes and values (extension with
recursive types). \label{t:bdownrecur}}
  \end{table}

\begin{description}
	\item[Output] The breakdown of process $\bout{r}{V}Q$, when  
	$r$ has a recursive type $S$, is as follows: 
	\begin{align*}
		\B{k}{\tilde x}{\bout{r}{V}Q}=~~&\propinp{k}{\widetilde x}
      		\abbout{\prop^r}{N_V} \Par
     	 \B{k+\degree+1}{\tilde w}{Q} \\
    	 &\text{where} \ N_V = \abs{\widetilde z}
      		{\bbout{z_{f(S)}}{\V{k+1}{\tilde y}{V}}}\propout{k+\degree+1}
      		{\widetilde w} \binp{\prop^r}{b}
			(\appl{b}{\widetilde z}) 
	\end{align*}
	The decomposition consists of a leading trio that mimics the output action running in parallel with the breakdown of  $Q$. 
After receiving the context $\widetilde x$, the leading trio sends an abstraction $N_V$ along $\prop^r$. 
Value $N_V$ performs several tasks.
First, it collects the sequence $\tilde r$; 
then, it mimics the output action of $P$ along one of them ($r_{f(S)}$) and 
triggers the next trio, with context $\widetilde w$; finally, it 
reinstates the server on $\prop^r$ for the next trio that uses~$r$.  
 Notice that differently from what is done in \tabref{t:bdowncore}, indexing is not relevant when breaking down names with recursive types. 
 	Also, since by definition $\V{k}{\tilde y}{y}=y$, $y \sigma = y$, and 
	$\len{y}=0$, when the communicated value $V$ is a variable $y$ we obtain the following: 
	$$
	\B{k}{\tilde x}{\bout{r}{y}Q} = \propinp{k}{\widetilde x} \abbout{\prop^r}
	{\abs{\widetilde z}{\bout{z_{f(S)}}{y}}
	\propout{k+1}{\widetilde w}\binp{c^r}{b}(\appl{b}{\widetilde z})} 
	\Par \B{k+1}{\tilde w}{Q}
	$$
	
%

\item[Input] The breakdown of process $\binp{r}{y}Q$, when $r$ has recursive session type $S$, is as follows: 
$$
\B{k}{\tilde x}{\binp{r}{y}Q}=\propinp{k}{\widetilde x}\abbout{\prop^r}
      {\abs{\widetilde z}{\binp{z_{f(S)}}{y}\propout{k+1}{\widetilde x'}
      \binp{\prop^r}{b}(\appl{b}{\widetilde z})}} 
      \Par \B{k+1}{\tilde x'}{Q}
$$
The breakdown follows the lines of the output case, but also of the linear case  in \tabref{t:bdowncore},
with additional structure needed to implement the reception of   
$\tilde r$, using one of the received names ($r_{f(S)}$) as a subject for the input action and propagating 
those names further.  

\item[Application] 
For simplicity we consider applications $\appl{V}{(\widetilde r, u_i)}$, where names in $\widetilde r$ have 
recursive types and only name $u_i$ has a non-recursive type; the general case involving different orders in names and multiple names with non-recursive types is as expected. We have:  
\begin{align*}
 \B{k}{\tilde x}{\appl{V}{(\widetilde r, u_i)}} = &
 \propinp{k}{\widetilde x}\overbracket{\prop^{r_1}!\big\langle 
        \lambda \widetilde z_1. \prop^{r_2}!\langle\lambda \widetilde z_2.\cdots. 
	 	\prop^{r_n}!\langle \lambda \widetilde z_n.}^{n = |\tilde r|} 
	 	\appl{\V{k+1}{\tilde x}{V}}{(\widetilde z_1,\ldots,
	 	\widetilde z_n, \widetilde m)} \rangle \,\rangle \big\rangle
\end{align*}

We rely on types  to decompose 
every name in $(\widetilde r, u_i)$. 
Letting $|\tilde r| = n$ and $i \in \{1,\ldots,n\}$, 
for each $r_i \in \widetilde r$  (with $r_i:S_i$)
we generate a sequence 
$\widetilde z_i=(z^i_1,\ldots,z^i_{\len{\Rts{}{s}{S_i}}})$ as in the output case.
Since name $u_i$ has a non-recursive session type, we decompose it as in \tabref{t:bdowncore}. 
Subsequently, we define an output action on propagator 
$\prop^{r_1}$ that sends a value containing $n$ abstractions that occur nested within output prefixes:
for each $j \in \{1,\ldots,n-1\}$, each abstraction binds $\widetilde z_j$ and sends the next abstraction along $\prop^{r_{j+1}}$. 
The innermost abstraction abstracts over $\widetilde z_n$ and encapsulates process 
$\appl{\V{k+1}{\tilde x}{V}}{(\widetilde z_1,\ldots,
	 	\widetilde z_n, \widetilde m)}$, which mimics the application in the source process.
 By this abstraction nesting we bind all variables $\widetilde z_i$ in $Q$.
This structure can be seen as an encoding of partial application: by virtue of a single synchronization on $\prop^{r_i}$  
part of variables (i.e., $\widetilde z_i$)  will be instantiated.   

\smallskip
The breakdown of a value application of the form 
$\appl{y}{(\widetilde r, u_i)}$ results into a specific form of the breakdown:
\begin{align*}
	 \B{k}{\tilde x}{\appl{y}{(\widetilde r, u_i)}} = &
	 \propinp{k}{\widetilde x}\overbracket{\prop^{r_1}!\big\langle 
        \lambda \widetilde z_1. \prop^{r_2}!\langle\lambda \widetilde z_2.\cdots. 
	 	\prop^{r_n}!\langle \lambda \widetilde z_n.}^{n = |\tilde r|} 
	 	\appl{y}{(\widetilde z_1,\ldots,
	 	\widetilde z_n, \widetilde m)} \rangle \,\rangle \big\rangle
\end{align*}

\item[Restriction] The restriction process $\news{s:\trec{t}S}{P'}$ is 
translated as follows: 
\begin{align*}
\B{k}{\tilde x}{\news{s:\recp{t}S}{P'}} = \news{\widetilde{s}:\mathcal{R}(S)}
 		 \news{c^s}\binp{\prop^s}{b}(\appl{b}{\widetilde s}) \Par 
 	\news{c^{\bar{s}}}\binp{\prop^{\bar{s}}}{b}(\appl{b}{\widetilde{\dual s}}) \Par
 		\B{k}{\tilde x}{P'}
\end{align*}

We decompose $s$ into $\widetilde{s} = (s_1,\ldots,s_{\len{\Rt{S}}})$ and 
$\dual{s}$ 
 into $\widetilde{\dual{s}} = (\dual{s_1},\ldots,\dual{s_{\len{\Rt{S}}}})$ . 
 The breakdown introduces two servers in parallel with the breakdown of
  $P'$; these servers provide names for $s$ and $\dual{s}$ along $\prop^s$ and $\prop^{\dual{s}}$, respectively.
  The server on $\prop^s$ (resp. $\prop^{\dual{s}}$) receives a value   
and applies it to the sequence $\widetilde s$ (resp. $\widetilde{\dual{s}}$).
 We restrict over $\widetilde s$ and propagators $\prop^s$ and $\prop^{\dual{s}}$.

\item[Value] The polyadic value $\abs{(\widetilde y, z):\slhotup{(\widetilde S, C)}}P$, where $\leadsto \in \{\lollipop, \sharedop\}$,  is 
decomposed as follows: 
\begin{align*}
\V{k}{\tilde x}{\abs{(\widetilde y, z):\slhotup{(\widetilde S, C)}}P} = & \abs{(\widetilde{y^1},\ldots,\widetilde{y^n}, \widetilde z):
  	\slhotup{(\Gt{S_1},\ldots,\Gt{S_n}, \Gt{C})}}{N}  
	\\
  	\text{where:} &  ~~N = \news{\widetilde \prop}
   \prod_{i \in \len{\widetilde y}}(\binp{\prop^{y_i}}{b}(\appl{b}{\widetilde y^i})) \Par \apropout{k}{\widetilde x}
   \Par \B{k}{\tilde x}{P \subst{z_1}{z}}
\end{align*}  
We assume variables in $\widetilde y$
have recursive session types $\widetilde S$ and variable $z$ has some non-recursive 
session type $C$; 
the general case involving different orders in variables and multiple variables with non-recursive types is as expected.
Every variable $y_i$ (with $y_i : S_i$) is decomposed into 
$\widetilde y^i=(y_1,\ldots,y_{\len{\Gt{S_i}}})$. 
Variable $z$ is decomposed as in \tabref{t:bdowncore}.
The breakdown is similar to the (monadic) shared value given in \tabref{t:bdowncore}. 
In this case, for every $y_i \in \widetilde y$ there is a server $\binp{\prop^{y_i}}{b}(\appl{b}{\widetilde y^i})$ as a subprocess in the abstracted composition.  The rationale for these servers is as described in previous cases. 

\end{description}

To sum up, each trio using a name with a recursive session type first receives a sequence of names;  
then, it 
uses one of such names to mimic the appropriate action; 
finally, it propagates the entire sequence by reinstating a server defined as a control trio. 
Interestingly, this scheme for name propagation follows the implementation of the encoding of name-passing in \HO.

\begin{example}[Breakdown of Recursion Encoding]
	Consider the recursive  process
	$P = \recp{X}\binp{a}{m}\bout{a}{m}X$, which  is not an $\HO$ process.  $P$ can be encoded into $\HO$ as follows~\cite{DBLP:conf/esop/KouzapasPY16}:
	\begin{align*}
	\map{P} = \binp{a}{m}\bout{a}{m} \news{s}
	{(\appl{V}{(a,s)} \Par \bout{s}{V}\inact)}
	\end{align*}
	where the value $V$ is an abstraction that potentially reduces to $\map{P}$: 
	\begin{align*}
	V = \abs{(x_a,y_1)}\binp{y_1}{z_x}\binp{x_a}{m}
	\bout{x_a}{m}\news{s}
	{(\appl{z_x}{(x_a,s)} \Par \bout{\dual s}{z_x}\inact)} 
	\end{align*}
	We compose $\map P$ with an appropriate client process to illustrate the encoding of recursion:
	\begin{align*}
	& \map{P} \Par \bout{a}{W} \binp{a}{b}R \\
	& \red^2 \news{s} 
	{(\appl{V}{(a,s)} \Par \bout{s}{V}\inact)} \Par R \\
	&  \red  \news{s} {(\binp{s}{z_x}\binp{a}{m}
		\bout{a}{m}\news{s'}
		{(\appl{z_x}{(a,s')} \Par \bout{\dual {s'}}{z_x}\inact)} \Par \bout{s}{V}\inact)} \Par R \\
	& \red \binp{a}{m}
		\bout{a}{m}\news{s'}
		{(\appl{V}{(a,s')} \Par \bout{\dual {s'}}{V}\inact)} \Par R = \map{P} \Par R
	\end{align*}
	\noindent where $R$ is some unspecified process such that $a \in \rfn{R}$. 
	We now analyze $\D{\map{P}}$ and its reduction chain. By \defref{def:sizeproc}, we have $\len{\map P} = 7$, and $\len{V} = 0$. Then, we choose $k=1$ and observe that $\sigma = \subst{a_1\dual{a_1}}{a\dual a}$. Following \defref{def:decomp-rec}, we get:
	\begin{align*}
	&\D{\map P} = \news{\prop_1,\ldots,\prop_7}
	\news{\prop^a}(\binp{\prop^a}{b}\appl{b}{(a_1,a_2)} \Par 
	\propout{1}{} \inact \Par \B{1}{\epsilon}{\map P\sigma})\\
	&\B{1}{\epsilon}{\map P} =
	\propinp{1}{} \bout{\prop^a}{\abs{(z_1,z_2)}\binp{z_1}
		{m}\propout{2}{m}\binp{\prop^a}{b}\appl{b}{(z_1,z_2)}}\inact
	\\
	&\qquad \qquad 
	\Par
	\propinp{2}{m}
	\bout{\prop^a}{\abs{(z_1,z_2)}\bout{z_2}
		{m}\propout{3}{}\binp{\prop^a}{b}\appl{b}{(z_1,z_2)}}\inact
	 \\
	&\qquad \qquad \Par \news{s_1}\big(\propinp{3}{} \propout{4} {} 
		\propout{5}{} \inact
		\Par
		\propinp{4}{}\bout{\dual{\prop^a}}
		{\abs{(z_1,z_2)}\appl{\V{5}{\epsilon}{V}}{(z_1,z_2,s_1)}} \inact
	\\
	&\qquad \qquad \qquad \quad 
	\Par \propinp{5}{}\bout{\dual s_1}{\V{6}{\epsilon}{V}}\propout{7}{}\inact 
		\Par \propinp{7}{}\inact\big)
	\end{align*}
	The decomposition relies twice on the breakdown of value $V$, so we give $\V{k}{\epsilon}{V}$ here for arbitrary $k>0$. For this, we observe that $V$ is an abstraction of a process $Q$ with $\len Q = 7$. 
	\begin{align*}
	&\V{k}{\epsilon}{V} = \abs{(x_{a_1},x_{a_2},y_1)}
	{	\news{\prop_{k},\ldots,\prop_{k+6}} (
		\binp{\prop^{x_{a}}}{b}\appl{b}{(x_{a_1},x_{a_2})} 
		\Par \propout{k}{}\inact \Par \B{k}{\epsilon}{Q}}) \\
	&\B{k}{\epsilon}{Q} = \propinp{k}{}\binp{y_1}{z_x}
	\propout{k+1}{z_x}\inact \\
	& \qquad
	\Par
	\propinp{k+1}{z_x}\bout{\prop^a_1}
	{\abs{(z_1,z_2)}{\binp{z_1}{m}\propout{k+2}{z_x,m}
			\binp{\prop^a_2}{b}
			\appl{b}{(z_1,z_2)}}}\inact \\
	& \qquad \Par
	 \propinp{k+2}{z_x}\bout{\prop^a_2}
	{\abs{(z_1,z_2)}{\bout{z_2}{m}} \propout{k+3}{z_x}\binp{\prop^a_3}
		{b}\appl{b}{(z_1,z_2)}} \inact  \\
	& \qquad \Par \news{s_1}\big(\propinp{k+3}{x_z} \propout{k+4}{z_x}
		\propout{k+5}{z_x}\inact  \\
	&\quad \quad \Par \propinp{k+4}{z_x}
	\bout{\prop^a_3}{\abs{(z_1,z_2)}\appl{z_x}{(z_1,z_2,s_1)}}
	\inact \Par \propinp{k+5}{z_x} \bout{\dual{s_1}}{z_x}
	\propout{k+6}{}\inact \Par \propinp{k+6}{}\inact\big)
	\end{align*}
	We follow the reduction chain on $\D{\map P}$ until it is ready to mimic the first action with channel $a$, which is an input. First, $c_1$ will synchronize, after which $c^a$ sends the abstraction to which then $(a_1,a_2)$ is applied. We obtain $\D{\map P} \red^3 \news{c_2,\ldots, c_7,c^a}P'$, where
	\begin{align*}
		& P' = \binp{a_1}
		{m}\propout{2}{m}\binp{\prop^a}{b}\appl{b}{(a_1,a_2)}
		\\
		&\qquad \qquad \Par
		\propinp{2}{m}
		\bout{\prop^a}{\abs{(z_1,z_2)}\bout{z_2}
			{m}\propout{3}{}\binp{\prop^a}{b}\appl{b}{(z_1,z_2)}}\inact
		\\
		&\qquad \qquad \Par \news{s_1}\big(\propinp{3}{} \propout{4} {} 
		\propout{5}{} \inact
		\Par
		\propinp{4}{}\bout{\dual{\prop^a}}
		{\abs{(z_1,z_2)}\appl{\V{5}{\epsilon}{V}}{(z_1,z_2,s_1)}} \inact
		\\
		&\qquad \qquad \qquad \quad \Par \propinp{5}{}\bout{\dual s_1}{\V{6}{\epsilon}{V}}\propout{7}{}\inact 
		\Par \propinp{7}{}\inact\big)
	\end{align*} 
	Note that this process is awaiting an input on channel $a_1$, after which $c_2$ can synchronize with its dual. At that point, $c^a$ is ready to receive another abstraction that mimics an input on $a_1$. This strongly suggests a tight operational correspondence between a process $P$ and its decomposition in the case where $P$ performs higher-order recursion.
$\hfill \lhd$
\end{example}


Below we write $\envR$ to denote a session environment that concerns only recursive types.
We state our main results:

\begin{theorem}[Typability of Breakdown]
\label{t:typerec}
   Let $P$ be an initialized \HO process and $V$ be a value.
  \begin{enumerate}
  \item
  If $\Gamma;\Lambda;\Delta \cat \envR \proves P \hastype \Proc$ then
  $\Gt{\Gamma_1}\cat \envPropR;\es;\Gt{\Delta} \cat \Theta \proves \B{k}{\tilde x}{P} \hastype \Proc$
  where: $\widetilde r = \text{dom}(\envR)$;
  $\envPropR = \prod_{r \in \tilde r} c^r:\chtype{\lhot{\Rts{}{s}{\envR(r )}}}$;
  $\widetilde x = \fv{P}$; 
  $k > 0$;
  $\Gamma_1=\Gamma \setminus \widetilde x$; 
  and
  $\balan{\Theta}$ with
  $\text{dom}(\Theta) =
  \{\prop_k,\ldots,\prop_{k+\len{P}-1}\} \cup
  \{\dual{\prop_{k+1}},\ldots,\dual{\prop_{k+\len{P}-1}}\}$ such that $\Theta(\prop_k)=
  \btinp{\widetilde M} \tinact$, where $\widetilde M =
  (\Gt{\Gamma}\cat\Gt{\Lambda})(\widetilde x)$.

  \item If $\Gamma;\Lambda;\Delta \proves V \hastype \lhot{C}$ then
  $\Gt{\Gamma};\Gt{\Lambda};\Gt{\Delta} \cat \Theta \proves \V{k}{\tilde
  x}{V} \hastype \lhot{\Gt{C}}$, 
  where: 
  $\widetilde x = \fv{V}$; 
  $k > 0$; and
   $\balan{\Theta}$ with
   $\text{dom}(\Theta) =  \{\prop_k,\ldots,\prop_{k+\len{V}-1}\} \cup
  \{\dual{\prop_{k}},\ldots,\dual{\prop_{k+\len{V}-1}}\}$ such that $\Theta(\prop_k)=
  \btinp{\widetilde M} \tinact$ and $\Theta(\dual{\prop_k})= \btout{\widetilde
  M} \tinact$, where $\widetilde M = (\Gt{\Gamma}\cat\Gt{\Lambda})(\widetilde x)$.

  \item If $\Gamma;\es;\es \proves V \hastype \shot{C}$ then
  $\Gt{\Gamma};\es;\es \proves \V{k}{\tilde x}{V} \hastype \shot{\Gt{C}}$ where $\widetilde x = \fv{V}$
  and $k > 0$.

\end{enumerate}
\end{theorem}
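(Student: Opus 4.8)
The plan is to proceed by mutual induction on the structure of the process $P$ and the value $V$, exactly paralleling the proof of \thmref{t:typecore}, but now carrying the additional recursive session environment $\envR$ and the new propagator environment $\envPropR$ through every case. For each syntactic form we assume the typing derivation of $P$ (or $V$) in the hypothesis, invoke the induction hypothesis on the immediate subterms, and then reassemble a typing derivation for the corresponding breakdown $\B{k}{\tilde x}{P}$ (or $\V{k}{\tilde x}{V}$) using the typing rules of \figref{fig:typerulesmys} specialised to minimal (recursive) session types. The cases for prefixes on names with \emph{non-recursive} types, for parallel composition, and for inaction are essentially unchanged from \thmref{t:typecore}, so I would dispatch them quickly by appeal to that proof, noting only that $\envR$ and $\envPropR$ are simply threaded along unaffected. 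The genuinely new work is in the cases of \tabref{t:bdownrecur}: output/input on a recursive name, application to a tuple containing recursive names, restriction of a recursive session, and the polyadic value with recursively-typed parameters.

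For the output case $\bout{r}{V}Q$ with $r:S$ and $\mathsf{tr}(S)$, I would first use the source typing to split the derivation into premises for $V$ and for $Q$; the crucial observation is that, because $S$ is tail-recursive, the continuation type of $r$ after the output is again (an unfolding of) $S$, so $r$'s decomposition $\widetilde r$ has a \emph{fixed} minimal-recursive type list $\Rts{}{s}{S}$ throughout. I then show that the abstraction $N_V = \abs{\widetilde z}{\bbout{z_{f(S)}}{\V{k+1}{\tilde y}{V}}\propout{k+\degree+1}{\widetilde w}\binp{\prop^r}{b}(\appl{b}{\widetilde z})}$ is typable at type $\lhot{(\Rts{}{s}{S})}$: the body binds $\widetilde z$ with exactly the types in $\Rts{}{s}{S}$, performs one output on $z_{f(S)}$ (whose minimal type $\trec{t}{\btout{\Gt{U}}}\vart{t}$ permits exactly this single action, since after unfolding the continuation is again $\vart{t}\mapsto$ the same type), uses the induction hypothesis for $\V{k+1}{\tilde y}{V}$ to type the transmitted value, then triggers the next trio on $\prop_{k+\degree+1}$ and reinstates the server by sending $\appl{b}{\widetilde z}$ on $\prop^r$ — this last step is exactly where $\envPropR$ supplies the type $\chtype{\lhot{\Rts{}{s}{S}}}$ for $\prop^r$, which is a \emph{shared} channel type and hence freely reusable. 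Sending $N_V$ along $\prop^r$ is then typed by Rule \textsc{(Req)}. The input case is symmetric. For restriction $\news{s:\trec{t}{S}}{P'}$, I would check that the two server trios $\binp{\prop^s}{b}(\appl{b}{\widetilde s})$ and $\binp{\prop^{\bar s}}{b}(\appl{b}{\widetilde{\dual s}})$ are typable against $\Gt{\Delta}$ extended with $\widetilde s : \Rt{S}$ and $\widetilde{\dual s} : \Rt{S}^{\text{dual}}$ (appealing to balancedness of the decomposed recursive environment, which one proves by a straightforward structural argument using \defref{def:recurdecomptypes}), and that $\prop^s, \prop^{\bar s}$ get added to $\envPropR$ and then restricted away.

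The application case is the most delicate: for $\appl{V}{(\widetilde r, u_i)}$ the breakdown builds a chain of nested outputs $\prop^{r_1}!\langle \lambda\widetilde z_1.\,\prop^{r_2}!\langle\cdots\rangle\rangle$ whose innermost term is $\appl{\V{k+1}{\tilde x}{V}}{(\widetilde z_1,\ldots,\widetilde z_n,\widetilde m)}$. Here I would argue by an inner induction on $n = |\widetilde r|$ that each nesting level is well-typed: the outermost output on $\prop^{r_1}$ (a shared channel of type $\chtype{\lhot{\Rts{}{s}{S_1}}}$ drawn from $\envPropR$) carries an abstraction binding $\widetilde z_1 : \Rts{}{s}{S_1}$ whose body is the $(n-1)$-level chain, and so on, until the innermost abstraction binds $\widetilde z_n : \Rts{}{s}{S_n}$ and applies $\V{k+1}{\tilde x}{V}$ — which by the induction hypothesis has type $\lhot{(\Gt{S_1},\ldots,\Gt{S_n},\Gt{C})}$ — to the full tuple $(\widetilde z_1,\ldots,\widetilde z_n,\widetilde m)$, matching the expected argument types via Rule \textsc{(App)}. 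I anticipate that \textbf{this application case is the main obstacle}, because it requires reconciling the type of the decomposed value (which expects the \emph{full} flattened tuple $(\Gt{S_1},\ldots,\Gt{S_n},\Gt{C})$, a list whose length may differ from $|\widetilde r|+1$ since $\Gt{S_i}$ may itself be a list) with the staged, partial-application structure of the nested outputs, and checking that the minimal recursive channel types on the $\prop^{r_i}$ are reused consistently across all trios that touch $r_i$. The value case for $\abs{(\widetilde y,z):\slhotup{(\widetilde S,C)}}{P}$ is then handled along the same lines as the corresponding case in \thmref{t:typecore}: the body $N$ composes one server $\binp{\prop^{y_i}}{b}(\appl{b}{\widetilde y^i})$ per recursive parameter with a control trio and $\B{k}{\tilde x}{P\subst{z_1}{z}}$, and the propagators $\widetilde\prop$ are restricted precisely when $\leadsto = \sharedop$ so that \textsc{(Prom)} applies. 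Finally, for part (3), the shared value case, the claim follows from part (2) together with Rules \textsc{(Prom)}/\textsc{(EProm)}, since a closed value with empty linear and session environments breaks down to one with empty $\Theta$ after the propagators are restricted. All technical details are deferred to the appendix.
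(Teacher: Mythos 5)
Your proposal follows essentially the same route as the paper's proof: mutual induction on the structure of $P$ and $V$, dispatching the non-recursive cases by appeal to the earlier typability theorem, typing the abstraction $N_V$ at $\lhot{\Rts{}{s}{S}}$ against the shared propagator type $\chtype{\lhot{\Rts{}{s}{S}}}$ drawn from $\envPropR$ (the paper isolates your observation about $z_{f(S)}$ as \lemref{lem:indexcor}), and handling the nested-output application case by an inner induction on $|\widetilde r|$, which is exactly the paper's family of parametrized derivations $d(k)$. The length mismatch you flag between the flattened tuple $(\Gt{S_1},\ldots,\Gt{S_n},\Gt{C})$ and $|\widetilde r|+1$ is resolved by the polyadic typing rules, as in the paper.
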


\begin{proof}
  By mutual induction on the structure of $P$ and $V$.
\appendx{See Appendix~\ref{app:typerec} for details.}
\end{proof}

\begin{theorem}[Typability of the Decomposition with Recursive Types]
	\label{t:decomprec}
	Let $P$ be a closed \HO process with $\widetilde u = \fn{P}$ and $\widetilde v = \rfn{P}$.
	If $\Gamma;\es;\Delta \cat \envR \proves P \hastype \Proc$, where $\envR$ only involves recursive session types,  then
	$\Gt{\Gamma \sigma};\es;\Gt{\Delta\sigma}  \cat \Gt{\envR \sigma} \proves \D{P} \hastype \Proc$, where
	$\sigma = \subst{\mathsf{init}(\widetilde u)}{\widetilde u}$. 
\end{theorem}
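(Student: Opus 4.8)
The plan is to derive \thmref{t:decomprec} directly from \thmref{t:typerec}, exactly as \thmref{t:decompcore} was derived from \thmref{t:typecore}, but now keeping careful track of the servers for recursive names introduced in \defref{def:decomp-rec}. First I would unfold the definition of $\D{P}$ from \defref{def:decomp-rec}:
\[
\D{P} = \news{\widetilde \prop}\news{\widetilde \prop_r}\Big(
\prod_{r \in \tilde{v}} \binp{\prop^r}{b}\appl{b}{\widetilde r} \Par
\propout{k}{} \inact \Par \B{k}{\epsilon}{P\sigma}\Big),
\]
with $k>0$, $\widetilde\prop = (\prop_k,\ldots,\prop_{k+\len{P}-1})$, $\widetilde{\prop_r} = \bigcup_{r\in\tilde v}\prop^r$, and $\sigma = \subst{\mathsf{init}(\widetilde u)}{\widetilde u}$. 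Since renaming free names preserves typing, from $\Gamma;\es;\Delta\cat\envR\proves P\hastype\Proc$ we get $\Gamma\sigma;\es;\Delta\sigma\cat\envR\sigma\proves P\sigma\hastype\Proc$, and $P\sigma$ is initialized, so \thmref{t:typerec}(1) applies.

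The core of the argument is then to assemble the typing derivation for the parallel composition. Applying \thmref{t:typerec}(1) to $P\sigma$ with $\widetilde x = \fv{P\sigma} = \es$ (as $P$ is closed) yields
\[
\Gt{\Gamma\sigma}\cat\envPropR;\es;\Gt{\Delta\sigma}\cat\Theta \proves \B{k}{\epsilon}{P\sigma}\hastype\Proc,
\]
where $\envPropR = \prod_{r\in\tilde v} c^r:\chtype{\lhot{\Rts{}{s}{\envR\sigma(r)}}}$ and $\Theta$ is balanced with $\mathrm{dom}(\Theta) = \{\prop_k,\ldots,\prop_{k+\len{P}-1}\}\cup\{\dual{\prop_{k+1}},\ldots,\dual{\prop_{k+\len{P}-1}}\}$ and $\Theta(\prop_k) = \btinp{\epsilon}\tinact = \btinp{\cdot}\tinact$ (since $\widetilde M$ is the empty tuple). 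Then: (i) $\propout{k}{}\inact$ is typed with $\prop_k:\btout{\cdot}\tinact$ using Rule~\textsc{(Send)} (following \remref{r:prefix}), which is dual to $\Theta(\prop_k)$, so composing it with $\B{k}{\epsilon}{P\sigma}$ in parallel keeps the session environment balanced; (ii) for each $r\in\tilde v$, the control trio $\binp{\prop^r}{b}\appl{b}{\widetilde r}$ is typed under $\Gt{\Gamma\sigma}, c^r:\chtype{\lhot{\Rts{}{s}{\envR\sigma(r)}}}$ together with $\widetilde r : \Gt{\envR\sigma(r)}$ in the session environment --- this is where I would check that the type of the server input $b$, namely $\lhot{\Rts{}{s}{\envR\sigma(r)}}$, matches the decomposed recursive type $\Gt{\envR\sigma(r)}$ carried by $\widetilde r$, using Rules~\textsc{(Acc)} and~\textsc{(App)}. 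Finally I close the propagator names $\widetilde\prop$ by Rule for session restriction (the residual $\Theta$ is balanced, so restriction is sound) and the shared names $\widetilde{\prop_r}$ by shared restriction, absorbing $\envPropR$; the session names $\widetilde r$ provided by the servers are absorbed inside $\news{\widetilde s:\mathcal{R}(S)}$ within $\B{k}{\epsilon}{P\sigma}$ at the corresponding restriction cases of \tabref{t:bdownrecur}, so the only surviving judgment is $\Gt{\Gamma\sigma};\es;\Gt{\Delta\sigma}\cat\Gt{\envR\sigma}\proves\D{P}\hastype\Proc$, which is the claim (noting $\Gt{\Gamma\sigma}=\Gt{\Gamma}\sigma'$ etc., with the bookkeeping that initialization and $\Gt{\cdot}$ commute appropriately).

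I expect the main obstacle to be not the parallel-composition bookkeeping itself but rather reconciling the session environments for recursive names: the decomposed names $\widetilde r = r_1,\ldots,r_{\len{\Gt{S}}}$ assigned types $\Gt{S}$ by \defref{def:typesdecompenv} must line up, via duality, with the types expected both by the server $\binp{\prop^r}{b}\appl{b}{\widetilde r}$ and by every trio in $\B{k}{\epsilon}{P\sigma}$ that requests $\widetilde r$ through $\prop^r$ --- in particular when $r$ is a free recursive name (handled by a top-level server in $\D{P}$) versus a bound one (handled by the restriction case of \tabref{t:bdownrecur}, which introduces its own pair of servers for $s$ and $\dual s$). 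Checking that $\envR\sigma$ restricted to free recursive names, decomposed by $\Gt{\cdot}$, exactly matches the session types consumed at the top level, and that balancedness is preserved throughout (recalling that recursive minimal types $\trec{t}{\btinp{\Gt U}}\tvar t$ and $\trec{t}{\btout{\Gt U}}\tvar t$ are self-dual in the relevant sense used by the servers), is the delicate point. Everything else is a routine induction-free rearrangement once \thmref{t:typerec} is in hand; details are deferred to \appref{app:decompcore} (the recursive analogue).
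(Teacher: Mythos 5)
Your proposal is correct and follows essentially the same route as the paper's own proof: apply the substitution lemma, invoke \thmref{t:typerec} on the initialized process $P\sigma$ with empty context, type each server $\binp{\prop^r}{b}(\appl{b}{\widetilde r})$ and the trigger $\propout{k}{}\inact$ separately, compose in parallel, and close $\widetilde{\prop}$ and $\widetilde{\prop_r}$ by session and shared restriction respectively. The ``delicate point'' you flag is discharged in the paper exactly as you anticipate, via the identity $\Gt{\envR}(r) = \Rt{\envR(r)} = \Rts{}{s}{\envR(r)}$ for (unfolded) tail-recursive types, which makes the type of the server argument $b$ line up with the decomposed types $\Gt{\envR\sigma}$ carried by $\widetilde r$ in the surviving session environment.
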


\begin{proof}
Directly from the definitions, using \thmref{t:typerec}.
\appendx{See Appendix~\ref{app:decomprec} for details.}
\end{proof}

\begin{remark}[Non-Tail-Recursive Session Types]
\label{r:ntrsts}
Our definitions and results apply to tail-recursive session types.
We can accommodate the non-tail-recursive type $\trec{t}{\btinp{\shot{(\widetilde T,\vart{t})}}\tinact}$ into our approach: in 
 \defref{def:recurdecomptypes}, we need to have $\Gt{\trec{t}{S}} = \trec{t}{\Gt{S}}$ if $\trec{t}{S}$ is non-tail-recursive.
The decomposition functions for non-recursive session types suffice in this case.
\end{remark}


\section{Optimizations of the Decomposition}\label{s:opt}
Here we briefly discuss two optimizations of the decompositions. They simplify the structure of trios and the underlying communication discipline. Interestingly, they are both enabled by the higher-order nature of $\HO$. 
In fact, they  hinge on \emph{thunk processes}, i.e., inactive processes that can be activated upon reception.
We write $\thunk{P}$
to stand for 
the thunk process
$\abs{x:\chtype{\shot{\tinact}}}P$, with $x \not\in \fn{P}$. 
We write $\appthunk{\thunk{P}}$ 
to denote the application of a thunk to a (dummy) name of type $\shot{\tinact}$. 
This way, we have 
$\appthunk{\thunk{P}} \red P$.

	\subparagraph*{From Trios to Duos} We can simplify the breakdown functions by replacing trios with \emph{duos}, i.e., processes with exactly two sequential prefixes.
	The idea is to transform trios such as $\binp{c_k}{\widetilde x}\bout{u}{V}
			\about{c_{k+1}}{\widetilde y}{}$ into the composition of a 
			 	  duo with a control trio: 
				  			\begin{equation}
			\binp{c_k}{\widetilde x}\abbout{c_{k+1}}{\thunk{\bout{u}{V}\about{c_{k+2}}{\widetilde z}}} \Par \binp{c_{k+1}}{b}{(\appthunk{b})}
			\label{eq:duo}
			\end{equation}
		The first action is as before; the
	two remaining prefixes are encapsulated into a thunk. This thunk  is sent via a propagator to the control trio that  activates it upon reception. This transformation involves an additional propagator, denoted $c_{k+2}$ above. This requires minor modifications in the definition of the degree  function 
	  $\len{\cdot}$ (cf. \defref{def:sizeproc}).

	In some cases, the breakdown function in \secref{ss:core} already produces duos. Breaking down input and output prefixes and  parallel composition involves proper trios; following the scheme illustrated by \eqref{eq:duo}, we can 
	define a map $\dmap{\cdot}$ to 
	transform these trios into duos:
	\begin{align*} 
			\dmap{\propinp{k}{\widetilde x}\bout{u_i}{V}
			\apropout{k+1}{\widetilde z}{}} & =
			\propinp{k}{\widetilde x}\apropbout{k+1}{\thunk{\bout{u_i}{V}\apropout{k+2}{\widetilde z}}} \Par \propinp{k+1}{b}{(\appthunk{b})} 
			\\
			\dmap{\propinp{k}{\widetilde x}\binp{u_i}{y}
			\apropout{k+1}{\widetilde x'}{}} & =  
			\propinp{k}{\widetilde x}\apropbout{k+1}{\thunk{\binp{u_i}{y}\apropout{k+2}{\widetilde x'}}} \Par \propinp{k+1}{b}{(\appthunk{b})} 
			\\
			\dmap{\propinp{k}{\widetilde x}\propout{k+1}{\widetilde y}
			\apropout{k+\degree+1}{\widetilde z}} &= 
			\propinp{k}{\widetilde x}\apropbout{k+1}{
			\thunk{
			\propout{k+2}{\widetilde y}  
			\apropout{k+\degree+2}{\widetilde z}}}  \Par 
			\propinp{k+1}{b}{(\appthunk{b})} 
	\end{align*} 
	  
	In the breakdown given in \secref{ss:exti} there is a 
	proper trio, which can be transformed as follows:
	\begin{align*}
		\dmap{\bsel{u_i}{l_j}\binp{u_i}{z}\propbout{k}{\widetilde x}
			{(\appl{z}{\widetilde y})}} =  
			\bsel{u_i}{l_j}\apropbout{k}{\thunk{\binp{u_i}{z}
			\propout{k+1}{\widetilde x}{\appl{z}{\widetilde y}}}} \Par \propinp{k}{b}(\appthunk{b})
	\end{align*}
	Similarly, 
	in the breakdown function extended with recursion (cf. \secref{ss:extii}) there is only one 
	trio pattern, which can be transformed into a duo following the very same idea.  
	
	\subparagraph*{From Polyadic to Monadic Communication}
	Since we consider \emph{closed} \HO processes, 
	we can dispense with polyadic communication in the breakdown function. 
	We can define a \emph{monadic decomposition}, $\mD{P}$, that simplifies  \defref{def:decomp} as follows: 
	\begin{align*}
	   \mD{P} =  \news{\widetilde c}\big(\propinp{k}{b}{(\appthunk{b})} \Par \mB{k}{}{P\sigma}\big)
	\end{align*}
	\noindent where $k > 0$, $\widetilde \prop = (\prop_k,\dots,\prop_{k+\len{P}-1})$, and $\sigma$ is  as in \defref{def:decomp}. 
	Process $\propinp{k}{b}{(\appthunk{b})}$ activates a thunk received from $\mB{k}{}{\cdot}$, the \emph{monadic} breakdown function that simplifies the one in \tabref{t:bdowncore} by using only one parameter, namely $k$:  
	\begin{align*}
		\mB{k}{}{\binp{u_i}{x}Q} & = 
		\news{\prop_x}\big(\apropbout{k}{\thunk{\binp{u_i}{x}\propinp{k+1}{b}({\about{\prop_{x}}{x} \Par (\appthunk{b})})}} \Par  \mB{k+1}{}{Q\sigma}\big)
		\\
		\mB{k}{}{\bout{u_i}{x}Q} & = 
		\apropbout{k}{\thunk{\binp{\prop_x}{x} \bout{u_i}{x}\propinp{k+1}{b}(\appthunk{b})}} 	\Par 		\mB{k+1}{}{Q\sigma} 
		\\
		\mB{k}{}{\bout{u_i}{V}Q} & = 
		\apropbout{k}{\thunk{\bout{u_i}{\mV{k+1}{}{V\sigma}}\propinp{k+1}{b}(\appthunk{b})}}
		\Par 
		\mB{k+1}{}{Q\sigma} 
		\\
	\mB{k}{}{\appl{x}{u}} & = \apropbout{k}{\thunk{\propinp{x}{x}(\appl{x}{\widetilde u})}}
	\\ 
		\mB{k}{}{\appl{V}{u}} & = \apropbout{k}{\thunk{\appl{\mV{k+1}{}{V}}{\widetilde u})}}
		\\
	\mB{k}{}{\news{s}{P'}} & = \news{\widetilde s}\mB{k}{}{P'\sigma}	
	\\
	\mB{k}{}{Q\Par R} & = \apropout{k}{\thunk{\propinp{k+1}{b}\appthunk{b} 
	\Par \propinp{k+\len{Q}+1}{b}\appthunk{b}}} 
	\Par \mB{k+1}{}{Q} \Par \mB{k+\len{Q}+1}{}{R}  	
	\end{align*}
	\noindent 
	Above, $\sigma$ is as in \tabref{t:bdowncore}.
	 $\mB{k}{}{\cdot}$ propagates values using thunks and a dedicated propagator  $\prop_x$ for each variable $x$.
	 	We describe only the definition of $\mB{k}{}{\binp{u_i}{x}Q}$: it illustrates key ideas common to all other cases. It consists of 
	an output of a thunk on $\prop_k$ composed in parallel with $\mB{k+1}{}{Q\sigma}$. 
	The thunk will be activated by a process $\propinp{k}{b}{(\appthunk{b})}$ at the top-level; this activation triggers the input action on $u_i$, and prepares the activation for the next thunk (exchanged on name $\prop_{k+1}$).
	Upon reception, such a thunk is activated in parallel with $\about{\prop_x}{x}$, which 
	 propagates the value received on $u_i$. The scope of $\prop_x$ 
	 is restricted to include input actions on $\prop_x$  in
	  $\mB{k+1}{}{Q\sigma}$; such actions are the first in the thunks present in, e.g.,
	  $\mB{k}{}{\bout{u_i}{x}Q}$. 
	 We also need to revise the breakdown 
	 function for values $\V{k}{\tilde x}{\cdot}$. 
%
	The breakdown functions  given in \secref{ss:exti} and \secref{ss:extii} (cf. Tables~\ref{t:bdown-selbra} and \ref{t:bdownrecur})
	can be made monadic following similar lines.

\smallskip
	These two optimizations can be combined by transforming the 
	trios of the monadic breakdown into duos, following the key idea of 
	the first optimization (cf. \eqref{eq:duo}).

\section{Related Work}
\label{s:rw}
Our developments are related to results by Parrow~\cite{DBLP:conf/birthday/Parrow00}, who showed that every process 
in the untyped, summation-free $\pi$-calculus with replication is weakly bisimilar to its decomposition into trios processes (i.e., $P \approx \D{P}$). 
We draw inspiration from insights developed in~\cite{DBLP:conf/birthday/Parrow00}, but pursuing different goals in a different technical setting: our decomposition treats processes from a calculus without name-passing but with higher-order concurrency (abstraction-passing), supports labeled choices, and accommodates recursive types.
Our goals are different than those in~\cite{DBLP:conf/birthday/Parrow00} because trios processes are relevant to our work in that  they allow us to formally justify minimal session types; however, they are not an end in themselves. Still, we opted to retain the definitional style and terminology for trios from~\cite{DBLP:conf/birthday/Parrow00}, which are elegant and clear.

Our main  result connects the typability of a process with that of its decomposition; this is a \emph{static guarantee}.
Based on our examples,
we conjecture that the \emph{behavioral guarantee} given by $P \approx \D{P}$ in~\cite{DBLP:conf/birthday/Parrow00} holds in our setting too, under an appropriate  \emph{typed} weak bisimilarity. 
An obstacle here is that known notions of typed bisimilarity for session-typed processes, such as those given by Kouzapas et al.~\cite{KouzapasPY17}, are not adequate: they only relate processes typed under the \emph{same} typing environments. 
We need a relaxed equivalence that (i)~relates processes typable under different environments (e.g., $\Delta$ and $\Gt{\Delta}$) and (ii)~admits that actions along $s$ from $P$ can be  matched by $\D{P}$ using actions along $s_k$, for some $k$  (and viceversa).
Defining this notion precisely and studying its properties goes beyond the scope of this paper.

Our apporach is broadly related to works that relate session types with other type systems for the $\pi$-calculus (cf.~\cite{DBLP:conf/unu/Kobayashi02,DBLP:conf/ppdp/DardhaGS12,DBLP:journals/iandc/DardhaGS17,DBLP:conf/concur/DemangeonH11,DBLP:journals/corr/GayGR14}).
Kobayashi~\cite{DBLP:conf/unu/Kobayashi02} encoded a finite session $\pi$-calculus into 
a $\pi$-calculus with linear types with usages (without sequencing); this encoding uses a continuation-passing style to codify  
a session name using multiple linear channels. 
Dardha et al.~\cite{DBLP:conf/ppdp/DardhaGS12,DBLP:journals/iandc/DardhaGS17} formalize and extend Kobayashi's approach.
They use two separate encodings, one for processes and one for types. 
The former uses a freshly generated linear name to mimic each session action; this fresh name becomes an additional argument in communications. Polyadicity is thus an essential ingredient in~\cite{DBLP:conf/ppdp/DardhaGS12,DBLP:journals/iandc/DardhaGS17}, whereas in our work it is convenient but not indispensable (cf. \secref{s:opt}). 
The encoding of types in~\cite{DBLP:conf/ppdp/DardhaGS12,DBLP:journals/iandc/DardhaGS17} codifies sequencing in session types by  nesting  payload types. In contrast, we ``slice'' the $n$ actions occurring in a session $s$ along indexed names $s_1, \ldots, s_n$ with minimal session types, i.e., slices of the type for $s$. 
All in all, an approach based on minimal session types appears simpler than that in~\cite{DBLP:conf/ppdp/DardhaGS12,DBLP:journals/iandc/DardhaGS17}.
Works by Padovani~\cite{Padovani17A} 
and
Scalas et al.~\cite{DBLP:conf/ecoop/ScalasY16} is also related: they
rely on~\cite{DBLP:conf/ppdp/DardhaGS12,DBLP:journals/iandc/DardhaGS17} to develop  verification techniques based on session types for OCaml and Scala programs, respectively. 

Gay et al.~\cite{DBLP:journals/corr/GayGR14} formalize how to encode
a  monadic $\pi$-calculus, equipped with a finite variant of the binary session types of~\cite{DBLP:journals/acta/GayH05},
into a polyadic $\pi$-calculus with an instance of the generic process types of~\cite{DBLP:journals/tcs/IgarashiK04}.
The work of Demangeon and Honda~\cite{DBLP:conf/concur/DemangeonH11}  encodes a session $\pi$-calculus into 
a linear/affine $\pi$-calculus with subtyping based on choice and selection types.
Our developments differ from these previous works in an important respect: we relate two formulations of session types, namely standard session types and minimal session types. Indeed, while~\cite{DBLP:conf/unu/Kobayashi02,DBLP:conf/ppdp/DardhaGS12,DBLP:journals/iandc/DardhaGS17,DBLP:conf/concur/DemangeonH11,DBLP:journals/corr/GayGR14} target the \emph{relative expressiveness} of session-typed process languages, our work emerges as the first study of \emph{absolute expressiveness} in this context. 

Finally, we elaborate further on our choice of \HO as source language. 
\HO is a sub-calculus of \HOp, whose basic theory and expressivity were studied by Kouzapas et al.~\cite{DBLP:conf/esop/KouzapasPY16,KouzapasPY17} as a hierarchy of session-typed calculi based on relative expressiveness. 
Our developments enable us to include \HO with minimal session types within this hierarchy.
Still, our approach does not rely on having \HO as source language, and can be adapted 
to other typed frameworks based on session types, such as the type discipline for first-order $\pi$-calculus processes in~\cite{DBLP:journals/iandc/Vasconcelos12}.

\section{Concluding Remarks}
\label{s:concl}

Session types are a class of \emph{behavioral types} for message-passing  programs. We presented a \emph{decomposition} of session-typed processes in~\cite{DBLP:conf/esop/KouzapasPY16,KouzapasPY17} using \emph{minimal} session types, in which there is no sequencing.
The decomposition of a process $P$, denoted  $\D{P}$,  is a collection of \emph{trios processes} that trigger each other 
mimicking its sequencing.
We prove that typability of $P$ using standard session types implies the typability of $\D{P}$ with minimal session types. 
Our results hold for all session types constructs, including labeled choices and recursive types.

Our contributions can be interpreted in three ways.
\emph{First}, 
from a foundational standpoint, our study of minimal session types is a conceptual contribution to the theory of behavioral types, in that we precisely identify sequencing as a source of redundancy in all preceding session types theories. As remarked in \secref{s:intro}, there are many session types variants, and their expressivity often comes at the price of an involved underlying theory. Our work contributes in the opposite direction, as we identified a simple yet expressive fragment of an already minimal session-typed framework~\cite{DBLP:conf/esop/KouzapasPY16,KouzapasPY17}, which allows us to justify session types in terms of themselves. Understanding further the underlying theory of minimal session types (e.g., notions such as type-based compatibility) is an exciting direction for future work.

\emph{Second},  
our work can be seen as a new twist on Parrow's decomposition results in the \emph{untyped} setting~\cite{DBLP:conf/birthday/Parrow00}. While Parrow's 
work indeed does not consider types, in fairness we must observe that when Parrow's work appeared (1996) the study of types for the $\pi$-calculus was rather incipient (for instance, binary session types appeared in 1998~\cite{honda.vasconcelos.kubo:language-primitives}). 
That said, we should stress that our results are not merely an extension of Parrow's with session types, for
 types in our setting drastically narrow down the range of conceivable decompositions. Also, we exploit features not supported in~\cite{DBLP:conf/birthday/Parrow00}, most notably higher-order concurrency.

\emph{Last but not least}, from a practical standpoint, we believe that our approach paves a new avenue to the integration of session types in programming languages whose type systems lack sequencing, such as Go.
It is natural to envision program analysis tools which,
given a message-passing program that should conform to protocols specified as session types,
exploit our decomposition as an intermediate step in the verification of communication correctness. Remarkably, our decomposition lends itself naturally to an implementation---in fact, we generated our examples automatically using \misty, an associated artifact written in Haskell.



\bibliography{session}

\appendix

\section{Appendix to \secref{ss:core}}
\label{app:core}

\subsection{Auxiliary Results}
\begin{remark}
	We derive polyadic rules for typing \HOp as an expected extension of
	\HO typing rules:

  \begin{prooftree}
    \AxiomC{}
    \LeftLabel{\scriptsize PolyVar}
    \UnaryInfC{$\Gamma, \widetilde x : \widetilde U_x;\widetilde y :
                  \widetilde U_y; \es \proves
                  \widetilde x \widetilde y :\widetilde U_x \widetilde U_y$}
  \end{prooftree}

  \begin{prooftree}
  	\AxiomC{}
  	\LeftLabel{\scriptsize PolySess}
  	\UnaryInfC{$\Gamma;\es;\widetilde u:\widetilde S
  				\proves \widetilde u \hastype \widetilde S$}
  \end{prooftree}

	\begin{prooftree}
		\AxiomC{$\Gamma;\Lambda_1;\Delta \cat u:S \proves P \hastype \Proc$}
		\AxiomC{$\Gamma;\Lambda_2;\es \proves \widetilde x \hastype \widetilde U$}
		\LeftLabel{\scriptsize PolyRcv}		\BinaryInfC{$\Gamma \setminus \widetilde x;\Lambda_1 \setminus \Lambda_2; \Delta \cat
		u:\btinp{\widetilde U}S
		\proves
		\binp{u}{\widetilde x}P \hastype \Proc
		$}
	\end{prooftree}

	\begin{prooftree}
		\AxiomC{$\Gamma;\Lambda_1;\Delta \proves P$}
		\AxiomC{$\Gamma;\Lambda_2;\es \proves
						\widetilde x \hastype \widetilde U$}
		\LeftLabel{\scriptsize PolySend}
		\BinaryInfC{$\Gamma;\Lambda_1 \cat \Lambda_2;
							(\Delta \setminus u:S) \cat
							u:\btout{\widetilde U}S \proves
							\bout{u}{\widetilde x}P$}
	\end{prooftree}
	\begin{prooftree}
		\AxiomC{$\leadsto \in \{\multimap,\rightarrow \}$}
		\AxiomC{$\Gamma;\Lambda;\Delta_1 \proves V \hastype \widetilde C \leadsto \diamond$}
		\AxiomC{$\Gamma;\es;\Delta_2 \proves \widetilde u \hastype \widetilde C$}
		\LeftLabel{\scriptsize PolyApp}
		\TrinaryInfC{$\Gamma;\Lambda;\Delta_1 \cat \Delta_2 \proves
						\appl{V}{\widetilde u}$}
	\end{prooftree}

	\begin{prooftree}
		\AxiomC{$\Gamma;\Lambda;\Delta_1 \proves P \hastype \Proc$}
		\AxiomC{$\Gamma;\es;\Delta_2 \proves \widetilde x \hastype \widetilde C$}
		\LeftLabel{\scriptsize PolyAbs}
		\BinaryInfC{$\Gamma \setminus \widetilde x;\Lambda;\Delta_1 \setminus \Delta_2 \proves \abs{\widetilde x}{P} \hastype \lhot{\widetilde C}$}
	\end{prooftree}

	\begin{prooftree}
		\AxiomC{$\Gamma \cat \widetilde a : \widetilde {\chtype{U}};\Lambda;\Delta \proves
					P$}
		\LeftLabel{\scriptsize PolyRes}
		\UnaryInfC{$\Gamma;\Lambda;\Delta \proves
						\news{\widetilde a}{P}$}
	\end{prooftree}

	\begin{prooftree}
		\AxiomC{$\Gamma;\Lambda;\Delta \cat \widetilde s:\widetilde S_1
					\cat \dual{\widetilde s}:\widetilde S_2 \proves
					P$}
		\AxiomC{$\widetilde S_1 \ \dualof \ \widetilde S_2$}
		\LeftLabel{\scriptsize PolyResS}
		\BinaryInfC{$\Gamma;\Lambda;\Delta \proves
						\news{\widetilde s}{P}$}
	\end{prooftree}

\end{remark}

\begin{lemma}[Substitution Lemma~\cite{DBLP:conf/esop/KouzapasPY16}]
  \label{lem:subst}
  $\Gamma;\Lambda;\Delta \cat x : S \proves P \hastype \Proc$
  and $u \notin \text{dom}(\Gamma,\Lambda,\Delta)$ implies
  $\Gamma;\Lambda;\Delta \cat u : S \proves P \subst{u}{x} \hastype \Proc$.
\end{lemma}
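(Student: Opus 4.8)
The plan is to prove the statement by induction on the structure of the typing derivation of $\Gamma;\Lambda;\Delta \cat x : S \proves P \hastype \Proc$ (equivalently, on the structure of $P$), carried out simultaneously with the analogous claim for values: if $\Gamma;\Lambda;\Delta \cat x:S \proves V \hastype U$ and $u \notin \dom(\Gamma,\Lambda,\Delta)$, then $\Gamma;\Lambda;\Delta \cat u:S \proves V\subst{u}{x} \hastype U$. In each case I would invert the last typing rule and rebuild the derivation after renaming $x$ to $u$. The observation that organizes the whole argument is that, because $x$ carries a \emph{session} type $S$, the linearity discipline on the session environment $\Delta \cat x:S$ forces $x$ to occur exactly once in $P$, and only in ``name position'': as the subject of a communication prefix ($\bout{x}{V}Q$, $\binp{x}{y}Q$, $\bsel{x}{l_j}Q$, or $\bbra{x}{l_i:Q_i}_{i\in I}$), or as the argument of an application ($\appl{V}{x}$). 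In particular $x$ can never appear as a transmitted object (values carry higher-order types, never session types) nor be bound by a restriction (which binds names, not variables); this prunes the case analysis considerably.

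First I would treat the cases in which $x$ is the visible subject. For $P = \bout{x}{V}Q$ the derivation ends with Rule~\textsc{(Send)}, so $S$ (after unfolding, if it is a recursive type) has the form $\btout{U} S'$; inverting the rule yields sub-derivations for $Q$, in which $x$ now carries the residual type $S'$, and for $V$. I apply the induction hypothesis to $Q$, replacing $x:S'$ by $u:S'$, and reassemble with Rule~\textsc{(Send)} using $u$ as subject; the hypothesis $u\notin\dom(\Gamma,\Lambda,\Delta)$ is exactly what guarantees the reconstructed environments stay well-formed (disjoint domains, balancedness). The input, selection, and branching cases (Rule~\textsc{(Rcv)} and the two choice rules) are symmetric, and the base case $P=\appl{V}{x}$, typed with Rule~\textsc{(App)}, uses the axiom $\Gamma;\es;u:S\proves u\hastype S$ for the renamed argument in place of the corresponding axiom for $x$.

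For the remaining constructs---parallel composition, restriction, inaction, prefixes whose subject is not $x$, and the value cases $V=y$ and $V=\abs{z}{R}$---the substitution commutes with the top-level construct, and I recurse with the induction hypothesis into the unique immediate subterm (or sub-value) in which $x$ occurs free, being careful to thread the environment splittings of Rules~\textsc{(Send)}/\textsc{(Rcv)}/\textsc{(Par)} so that $x:S$ (and afterwards $u:S$) travels into exactly the component that types that subterm. The promotion rules~\textsc{(Prom)} and~\textsc{(EProm)} act only on higher-order-typed variables, so they commute trivially with $\subst{u}{x}$.

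The step I expect to be the main obstacle is this environment bookkeeping in the cases with explicit splittings: in every sub-case one must check, using linearity of the session assignments, that $x:S$ lies in exactly one part of the split, apply the induction hypothesis there, and recombine while preserving the invariants of the type system. Handling this uniformly---together with the simultaneous induction over the value judgement, needed because session-typed variables may occur free inside transmitted abstractions---is the only genuinely delicate point; everything else is a mechanical traversal of the typing rules. Since the statement already appears in~\cite{DBLP:conf/esop/KouzapasPY16}, an equally valid route is simply to refer to the proof given there.
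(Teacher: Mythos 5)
The paper does not prove this lemma at all: it is imported verbatim from Kouzapas et al.~\cite{DBLP:conf/esop/KouzapasPY16}, so your closing remark---that one may simply refer to the proof given there---is exactly what the paper does. Your induction sketch is nonetheless sound and standard (the key observations, that a session-typed variable occurs linearly and only in subject/argument position or free inside a transmitted abstraction, and that the value judgement must be handled simultaneously, are the right ones), so there is no gap to report.
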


\begin{lemma}[Shared environment weakening]
	\label{lem:weaken}
	If $\Gamma;\Lambda;\Delta \proves P \hastype \Proc$
	then
	$\Gamma \cat x:\shot{C};\Lambda;\Delta \proves P \hastype \Proc$ and
	$\Gamma \cat u:\chtype{U};\Lambda;\Delta \proves P \hastype \Proc$.
\end{lemma}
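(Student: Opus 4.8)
The plan is to prove the statement by induction on the derivation of $\Gamma;\Lambda;\Delta \proves P \hastype \Proc$, strengthening the induction hypothesis to cover value judgements $\Gamma;\Lambda;\Delta \proves V \hastype U$ as well, so that the process and value statements are established by a simultaneous induction (value and process rules are mutually dependent through \textsc{(App)}, \textsc{(Send)}, \textsc{(Req)} and \textsc{(Acc)}). By Barendregt's variable convention I may assume that the weakening name/variable---$x$ in the first claim, $u$ in the second---occurs neither free nor bound in $P$ (resp.\ $V$) and lies outside $\dom{\Gamma} \cup \dom{\Lambda} \cup \dom{\Delta}$; this keeps every freshness side condition of the typing rules valid after weakening. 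Write $\Gamma^{+}$ for the extended environment in either case.

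For the base cases---the axiom for $\inact$, variable lookups, and name judgements of the form $\Gamma;\es;\es \proves n \hastype \chtype{U'}$---the conclusion holds verbatim with $\Gamma^{+}$, since the corresponding axioms only require the relevant assignment to be present in $\Gamma$, a property preserved by extension. For the inductive step I inspect the last rule used. In the rules that merely thread $\Gamma$ through their premises, possibly splitting it or removing an assignment via $\backslash x'$ (namely \textsc{(Abs)}, \textsc{(App)}, \textsc{(Send)}, \textsc{(Rcv)}, \textsc{(Req)}, \textsc{(Acc)}, parallel composition, restriction, selection and branching), I apply the induction hypothesis to each premise with $\Gamma^{+}$ in place of $\Gamma$ and re-apply the same rule; freshness of $x$ (resp.\ $u$) ensures it survives the operations $\Gamma\backslash x'$ and $\Gamma(\widetilde x)$ and is irrelevant to the shared-name lookups performed by \textsc{(Req)}, \textsc{(Acc)} and \textsc{(Prom)}.

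The only cases worth spelling out are \textsc{(EProm)} and \textsc{(Prom)}, which themselves insert an assignment into $\Gamma$ in their conclusions. For \textsc{(EProm)}, from a premise $\Gamma;\Lambda\cat x':\lhot{C'};\Delta \proves P \hastype \Proc$ one derives $\Gamma\cat x':\shot{C'};\Lambda;\Delta \proves P \hastype \Proc$; the induction hypothesis applied to the premise gives $\Gamma^{+};\Lambda\cat x':\lhot{C'};\Delta \proves P \hastype \Proc$ (with $x' \neq x,u$ by freshness), and a further application of \textsc{(EProm)} followed by exchange in $\Gamma$ yields the required $\Gamma^{+}\cat x':\shot{C'};\Lambda;\Delta \proves P \hastype \Proc$; the value rule \textsc{(Prom)} is handled analogously. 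I do not anticipate a genuine obstacle here: the lemma is a purely structural consequence of the fact that $\Gamma$ admits weakening, contraction and exchange (recorded in \secref{sec:types}), and the only points requiring attention are the bookkeeping ones above, all discharged uniformly by the variable convention.
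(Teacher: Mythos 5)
Your proof is correct: the paper states this weakening lemma in Appendix A.1 as an auxiliary result without giving a proof, and the argument it implicitly relies on is exactly the standard mutual structural induction on process/value typing derivations that you carry out, with the variable convention discharging the freshness side conditions. Your treatment of \textsc{(Prom)}/\textsc{(EProm)} via exchange is the only case that needed spelling out, and you handle it properly.
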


\begin{lemma}[Shared environment strengthening]
	\label{lem:strength}
	\begin{itemize}
	\item	If $\Gamma;\Lambda;\Delta \proves P \hastype \Proc$ and
	$x \notin \fv{P}$ then $\Gamma \setminus x;\Lambda;\Delta \proves P \hastype \Proc$.
	\item If $\Gamma;\Lambda;\Delta \proves P \hastype \Proc$ and
	$u \notin \fn{P}$ then $\Gamma \setminus u;\Lambda;\Delta \proves P \hastype \Proc$.
	\end{itemize}
\end{lemma}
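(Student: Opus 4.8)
The plan is to prove both items by induction on the typing derivation $\Gamma;\Lambda;\Delta \proves P \hastype \Proc$ (structural induction on $P$ alone would not suffice, since rules \textsc{(Prom)} and \textsc{(EProm)} leave $P$ unchanged). It is convenient to carry, as a mutually inductive auxiliary statement, the analogous claim for values: if $\Gamma;\Lambda;\Delta \proves V \hastype U$ and $x \notin \fv{V}$, then $\Gamma \setminus x;\Lambda;\Delta \proves V \hastype U$ (and likewise for a name $u \notin \fn{V}$). Since removing a name $u \in \dom{\Gamma}$ follows the same pattern as removing a variable $x$, I would spell out the argument for item~1 and only indicate the differences for item~2.

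For the base case $P = \inact$, the inaction rule requires $\Lambda = \Delta = \es$ but leaves $\Gamma$ unconstrained, so $\Gamma \setminus x;\es;\es \proves \inact \hastype \Proc$ holds immediately. For prefixes and structural constructs the argument is uniform: in each such rule the shared environment $\Gamma$ is passed—possibly with contraction, and with splitting of $\Lambda$ and $\Delta$, but never of $\Gamma$—to every premise, and from $x \notin \fv{P}$, using Barendregt's convention to assume every bound name/variable distinct from $x$, we obtain that $x$ is absent from the free variables of every immediate subterm. Hence the induction hypothesis applies to each sub-derivation and re-applying the same rule (occasionally together with \lemref{lem:weaken}) reconstructs the judgment. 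This covers $\bout{u}{V}Q$, $\binp{u}{x'}Q$, $\bsel{u}{l}Q$, $\bbra{u}{l_i:P_i}_{i \in I}$, $\appl{V}{u}$, $Q_1 \Par Q_2$, and $\news{n}Q$. The only value requiring attention is an abstraction $V = \abs{x'}{R}$ typed by \textsc{(Abs)}: since $x \notin \fv{\abs{x'}{R}}$ and $x \neq x'$ we have $x \notin \fv{R}$, so the process induction hypothesis applied to $R$ discharges the premise and \textsc{(Abs)} rebuilds $V$.

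The genuinely $\Gamma$-sensitive rules are \textsc{(Prom)}, \textsc{(EProm)}, \textsc{(Req)}, \textsc{(Acc)}, and the instance of \textsc{(App)} in which the applied value is a variable of shared type. Here I would invoke the standard well-formedness invariant of the system of \secref{sec:types}: in any derivable judgment, $\dom{\Lambda} \subseteq \fv{P}$ and every shared name used as the subject of a prefix of $P$ occurs in $\fn{P}$. Concretely, for \textsc{(EProm)}---whose conclusion $\Gamma \cat y:\shot{C};\Lambda;\Delta \proves P$ comes from $\Gamma;\Lambda \cat y:\lhot{C};\Delta \proves P$---if $y \neq x$ one applies the induction hypothesis to the premise and re-applies \textsc{(EProm)}; if $y = x$ then the invariant forces $y \in \fv{P}$, contradicting $x \notin \fv{P}$, so this subcase is vacuous. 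The same $y \neq x$ / $y = x$ dichotomy settles \textsc{(Prom)}, \textsc{(Req)}, \textsc{(Acc)}, and the shared \textsc{(App)} instance. For item~2, the invariant shows directly that if $u$ is the subject of some prefix then $u \in \fn{P}$, so the hypothesis $u \notin \fn{P}$ already excludes \textsc{(Req)}/\textsc{(Acc)} with subject $u$ and the matching \textsc{(App)} instance; all remaining cases go through exactly as before.

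I do not expect any single case to be hard; the only real care goes into the formulation. One must run the value statement alongside the process statement so that \textsc{(Abs)}, \textsc{(Send)}, and \textsc{(App)} close, and one must make the free-variable/free-name invariant explicit in order to rule out the $y = x$ (resp. subject-equals-$u$) subcases of the environment-manipulating rules. With those two points in place, every step is a direct appeal to the induction hypothesis together with the admissibility of weakening (\lemref{lem:weaken}) and contraction for the shared environment $\Gamma$---the latter already built into the discipline of \secref{sec:types}.
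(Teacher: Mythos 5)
Your proof is correct: the paper states this lemma without proof as a routine auxiliary result, and your induction on the typing derivation (carried out mutually with the value judgment) is the standard argument one would give. The one point worth making explicit in a full write-up is the invariant $\dom{\Lambda}\subseteq\fv{P}$ you invoke to dismiss the $y=x$ subcase of \textsc{(EProm)}/\textsc{(Prom)}; it does hold here because $\Lambda$ admits no weakening, and with it every case closes as you describe.
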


\subsection{Proof of \thmref{t:typecore}}
\label{app:typecore}

\begin{proof}
  By mutual induction on the structure of $P$ and $V$. The proof of the output case in Part~(1) relies on Parts~(2) and (3), whereas the proof of Parts~(2) and (3) relies on Part~(1).
  We analyze each part of the theorem separately, following the definition of $\B{k}{\tilde x}{\cdot}$ in \tabref{t:bdowncore}:

  \begin{enumerate}
  \item By assumption, $\Gamma;\Lambda;\Delta \proves P$. We consider six cases, depending on the shape of $P$:
  \begin{enumerate}
    \item Case $P=\inact$. The only rule that can be applied here is \textsc{Nil}.
    By inversion of this rule, we have:
    $\Gamma;\es;\es \proves \inact$.
    We shall then prove the following judgment:
    \begin{align}
    	\Gt{\Gamma};\es;\Theta \proves \B{k}{\tilde x}{\inact} \hastype \Proc
    \end{align}

    \noindent where $\widetilde x = \fv{\inact}=\es$ and $\Theta =
    \{\prop_k:\btinp{\lhot{\tinact}} \tinact\}$. By \tabref{t:bdowncore}:
    $\B{k}{\epsilon}{\inact} =
    \propinp{k}{} \inact$. By convention we know that
    $\propinp{k}{} \inact$ stands for $\propinp{k}{y} \inact$ with
    $\prop_k : \btinp{\shot{\tinact}}\tinact$.
    The following tree proves this case:
    \def\proofSkipAmount{\vskip 1.2ex plus.8ex minus.4ex}
    \begin{prooftree}
    \AxiomC{}
    \LeftLabel{\scriptsize Nil}
      \UnaryInfC{$\Gamma';\es;\es \proves \inact \hastype \Proc$}
      \AxiomC{$\prop_k \notin \dom{\Gamma}$}
      \LeftLabel{\scriptsize End}
      \BinaryInfC{$\Gamma';\es;\prop_k : \tinact \proves \inact \hastype \Proc$}
      \AxiomC{}
      \LeftLabel{\scriptsize LVar}
      \UnaryInfC{$\Gt{\Gamma};y \hastype \lhot{\tinact};\es \proves y \hastype \lhot{\tinact}$}
      \LeftLabel{\scriptsize EProm}
      \UnaryInfC{$\Gamma';\es;\es \proves y \hastype \lhot{\tinact}$}
      \LeftLabel{\scriptsize Prom}
      \UnaryInfC{$\Gamma';\es;\es \proves y \hastype \shot{\tinact}$}
      \LeftLabel{\scriptsize Rcv}
      \BinaryInfC{$\Gt{\Gamma};\es;\Theta \proves \propinp{k}{} \inact \hastype \Proc$}
    \end{prooftree}
    \noindent where $\Gamma' = \Gt{\Gamma} \cat y : \shot{\tinact}$. 
	We know $\prop_k \notin \dom{\Gamma}$ since we use reserved names for 	propagators channels. 
  \item Case $P = \binp{u_i}{y}P'$. We distinguish two sub-cases: \rom{1}
  $u_i \in \dom{\Delta}$ and \rom{2} $u_i \in \dom{\Gamma}$.
  We consider
  sub-case \rom{1} first. For this case Rule \textsc{Rcv} can be applied:
    \begin{align}
      \label{pt:inputInv}
      \AxiomC{$\Gamma; \Lambda_1; \Delta' \cat u_i : S \proves P' \hastype \Proc$}
      \AxiomC{$\Gamma; \Lambda_2; \es \proves y \hastype U$}
      \LeftLabel{\scriptsize Rcv}
      \BinaryInfC{$\Gamma \setminus y;\Lambda_1 \setminus \Lambda_2;
      				\Delta' \cat u_i:
                    \btinp{U}S \proves \binp{u_i}{y}P' \hastype \Proc$}
      \DisplayProof
    \end{align}

Let $\widetilde x = \fv{P}$ and
$\widetilde x' = \fv{P'}$.  Also, let
   $\Gamma_1'=\Gamma \setminus \widetilde x'$ and $\Theta_1$ be a
   balanced environment such that
   $$
   \dom{\Theta_1}=\{\prop_{k+1},\ldots,\prop_{k+\len{P'}}\}
   \cup \{\dual{\prop_{k+2}},\ldots,\dual{\prop_{k+\len{P'}}}\}
   $$
   and ${\Theta_1(\prop_{k+1})}=\btinp{\widetilde M'}\tinact$
   where $\widetilde M' = (\Gt{\Gamma},\Gt{\Lambda_1})(\widetilde x')$.
  Then, by IH on the first assumption of \eqref{pt:inputInv} we know:
  \begin{align}
    \label{eq:input-ih}
    \Gt{\Gamma'_1};\es;\Gt{\Delta' \cat u_i:S} \cat \Theta_1 \proves
    \B{k+1}{\tilde x'}{P'} \hastype \Proc
  \end{align}

 By \defref{def:typesdecomp} and \defref{def:typesdecompenv} and the second
 assumption of \eqref{pt:inputInv} we have:
  \begin{align}
  	\label{eq:input-ih2}
  	\Gt{\Gamma};\Gt{\Lambda_2};\es \proves y \hastype \Gt{U}
  \end{align}

  We define $\Theta = \Theta_1 \cat \Theta'$, where
  \begin{align*}
  \Theta' = \prop_k:\btinp{\widetilde M} \tinact \cat \dual {\prop_{k+1}}:\btout{\widetilde M'} \tinact
  \end{align*}
  with $\widetilde M = (\Gt{\Gamma},\Gt{\Lambda_1 \setminus \Lambda_2})(\widetilde x)$.
  By \defref{def:sizeproc},
  $\len{P} = \len{P'} + 1$ so
  $$\dom{\Theta} = \{\prop_k,\ldots,\prop_{k+\len{P}-1}\}
  	\cup \{\dual{\prop_{k+1}},\ldots,\dual{\prop_{k+\len{P}-1}}\}$$
  	and $\Theta$ is balanced since $\Theta(\prop_{k+1}) \dualof \Theta(\dual{\prop_{k+1}})$ and
  	 $\Theta_1$ is balanced.
  By \tabref{t:bdowncore}:
   \begin{align*}
   \B{k}{\tilde x}{\binp{u_i}{y}P'} = \propinp{k}{\widetilde
   x}\binp{u_i}{y}\propout{k+1}{\widetilde x'} \inact \Par \B{k+1}{\tilde
   x'}{P'\incrname{u}{i}}
 \end{align*}

  Let $\Gamma_1 = \Gamma \setminus \widetilde x$.
  We shall prove the following judgment:
  \begin{align*}
  \Gt{\Gamma_1 \setminus y};\es;\Gt{\Delta' \cat u_i:\btinp{U}S}
    \cat \Theta
    \proves
    	\B{k}{\tilde x}{\binp{u_i}{y}P'}
  \end{align*}


  The left-hand side component of $\B{k}{\tilde x}{\binp{u_i}{y}P'}$ is typed using
   some auxiliary derivations:
  \begin{align}
    \label{st:input1}
    \AxiomC{}
    \LeftLabel{\scriptsize Nil}
    \UnaryInfC{$\Gt{\Gamma};\es;\es \proves \inact \hastype \Proc$}
    \LeftLabel{\scriptsize End}
    \UnaryInfC{$\Gt{\Gamma};\es;\dual{\prop_{k+1}}:\tinact \proves \inact \hastype \Proc$}
    \AxiomC{}
    \LeftLabel{\scriptsize PolyVar}
    \UnaryInfC{$\Gt{\Gamma};\Gt{\Lambda_1};\es
                \proves \widetilde x' \hastype \widetilde M'$}
    \LeftLabel{\scriptsize PolySend}
    \BinaryInfC{$\Gt{\Gamma};\Gt{\Lambda_1}\cat\Gt{\Lambda_2};
                  \dual {\prop_{k+1}}:\btout{\widetilde M'} \tinact
                  \proves \propout{k+1}{\widetilde x'} \inact \hastype \Proc$}
    \LeftLabel{\scriptsize End}
    \UnaryInfC{$\Gt{\Gamma};\Gt{\Lambda_1}\cat\Gt{\Lambda_2};
                  \dual {\prop_{k+1}}:\btout{\widetilde M'} \tinact
                  \cat u_i:\tinact
                  \proves \propout{k+1}{\widetilde x'} \inact \hastype \Proc$}
    \DisplayProof
  \end{align}
  \begin{align}
    \label{st:input-rcv}
    \AxiomC{\eqref{st:input1}}
	  \AxiomC{\eqref{eq:input-ih2}}
    \LeftLabel{\scriptsize Rcv}
    \BinaryInfC{
    \begin{tabular}{c}
        $\Gt{\Gamma \setminus y};\Gt{\Lambda_1 \setminus \Lambda_2}; u_i:\btinp{\Gt{U}} \tinact \cat
                \dual {\prop_{k+1}}:\btout{\widetilde M'} \tinact \proves$ \\
                $\binp{u_i}{y}\propout{k+1}{\widetilde x'} \inact \hastype \Proc$
    \end{tabular}
}
     \LeftLabel{\scriptsize End}
    \UnaryInfC{ 
    \begin{tabular}{c}
    $\Gt{\Gamma \setminus y};\Gt{\Lambda_1 \setminus \Lambda_2}; u_i:\btinp{\Gt{U}} \tinact \cat
                \dual {\prop_{k+1}}:\btout{\widetilde M'} \tinact
                \cat \prop_k: \tinact \proves$ \\ 
                $\binp{u_i}{y}\propout{k+1}{\widetilde x'} \inact \hastype \Proc$
     \end{tabular}}
    \DisplayProof
  \end{align}

  \begin{align}
    \label{st:input3}
    \AxiomC{\eqref{st:input-rcv}}
    \AxiomC{}
    \LeftLabel{\scriptsize PolyVar}
    \UnaryInfC{$\Gt{\Gamma \setminus y};\Gt{\Lambda_1};\es
    \proves \widetilde x \hastype \widetilde M$}
    \LeftLabel{\scriptsize PolyRcv}
    \BinaryInfC{$\Gt{\Gamma_1 \setminus y};\es;u_i:\btinp{\Gt{U}} \tinact
    \cat \Theta' \proves \propinp{k}{\widetilde
    x}\binp{u_i}{y}\propout{k+1}{\widetilde x'} \inact \hastype \Proc$}
  \DisplayProof
  \end{align}

  The following tree
  proves this case:
  \begin{align}
  	\label{pt:input}
    \AxiomC{\eqref{st:input3}}
    \AxiomC{\eqref{eq:input-ih}}
     \LeftLabel{\scriptsize (\lemref{lem:subst}) with $\subst{\tilde n}{\tilde u}$}
    \UnaryInfC{
    \begin{tabular}{c}
    $\Gt{\Gamma_1 \setminus y};\es; \Gt{\Delta' \cat u_{i+1}:S} \cat \Theta_1 \proves$ \\
    $\B{k+1}{\tilde x'}{P'\incrname{u}{i}} \hastype \Proc$
    \end{tabular}}
    \LeftLabel{\scriptsize Par}
    \BinaryInfC{\begin{tabular}{c}
    			$\Gt{\Gamma_1 \backslash y};\es;\Gt{\Delta' \cat u_i:\btinp{U}S}
                  \cat \Theta \proves$ \\
                  $\propinp{k}{\widetilde x}
                  \binp{u_i}{y}\propout{k+1}{\widetilde x'} \inact \Par \B{k+1}
                  {\tilde x'}{P'\incrname{u}{i}} \hastype \Proc$
                  \end{tabular}}
    \DisplayProof
  \end{align}
  	\noindent
  	where $\widetilde n =
  	(u_{i+1},\ldots,u_{i+\len{\Gt{S}}})$ and $\widetilde u =
  	(u_i,\ldots,u_{i+\len{\Gt{S}}-1})$.
  	We may notice that
  	if $y \in \fv{P'}$ then $\Gamma'_1 = \Gamma_1 \setminus y$. On the other hand,
  	when $y \notin \fv{P'}$ then $\Gamma'_1 = \Gamma_1$ so
  	we need to apply \lemref{lem:strength} with $y$
  	 after \lemref{lem:subst} to \eqref{eq:input-ih}
  	 in \eqref{pt:input}.
  	Note that we have used the following for the right assumption of \eqref{pt:input}:
  	\begin{align*}
  		\Gt{\Delta' \cat u_i:S}\subst{\tilde n}{\tilde u} &=
  		\Gt{\Delta' \cat u_{i+1}:S} \\
  		\B{k+1}{\tilde x'}{P'}\subst{\tilde n}{\tilde u} &=
  		\B{k+1}{\tilde x'}{P'\incrname{u}{i}}
  	\end{align*}

   This concludes sub-case \rom{1}. We now consider sub-case \rom{2}, i.e.,
   $u_i \in \dom{\Gamma}$.
   Here Rule~\textsc{Acc} can be applied:
   \begin{align}
   	\label{pt:inputptr-subcase2}
   	\AxiomC{$\Gamma;\es;\es \proves u_i \hastype \chtype{U}$}
   	\AxiomC{$\Gamma;\Lambda_1;\Delta \proves P' \hastype \Proc$}
  	\AxiomC{$\Gamma;\Lambda_2;\es \proves y \hastype U$}
  	\LeftLabel{\scriptsize Acc}
   	\TrinaryInfC{$\Gamma \setminus y;\Lambda_1 \setminus \Lambda_2;\Delta
   				    \proves \binp{u_i}{y}P' \hastype \Proc$}
   	\DisplayProof
   \end{align}

   Let $\widetilde x = \fv{P}$ and $\widetilde x'=\fv{P'}$. Furthermore, let
   $\Theta_1$, $\Theta$, $\Gamma_1$, and $\Gamma_1'$ be defined as in sub-case
   \rom{1}. By IH on the second assumption of \eqref{pt:inputptr-subcase2} we have:
   \begin{align}
   \label{eq:input-ih-2}
    \Gt{\Gamma'_1};\es;\Gt{\Delta} \cat \Theta_1 \proves \B{k+1}{\tilde x'}{P'} \hastype \Proc
   \end{align}

	By \defref{def:typesdecomp} and \defref{def:typesdecompenv} and the first
   	assumption of \eqref{pt:inputptr-subcase2} we have:
  	\begin{align}
  		\label{eq:input-ih2-2}
  		\Gt{\Gamma};\es;\es \proves u_i \hastype \chtype{\Gt{U}}
  	\end{align}

	By \defref{def:typesdecomp}, \defref{def:typesdecompenv}, and the third
   	assumption of \eqref{pt:inputptr-subcase2} we have:
  	\begin{align}
  		\label{eq:input-ih3-2}
  		\Gt{\Gamma};\Gt{\Lambda_2};\es \proves y \hastype \Gt{U}
  	\end{align}

 	By \tabref{t:bdowncore}, we have:
 	\begin{align}
  		\B{k}{\tilde x}{\binp{u_i}{y}P'} = \propinp{k}{\widetilde x}
  		\binp{u_i}{y}\propout{k+1}{\widetilde x'} \inact \Par \B{k+1}{\tilde x'}
  		{P'\incrname{u}{i}}
 	\end{align}

 	We shall prove the following judgment:
 	\begin{align}
 		\Gt{\Gamma_1 \setminus y};\es;\Gt{\Delta} \cat \Theta \proves
 		\B{k}{\tilde x}{\binp{u_i}{y}P'} \hastype \Proc
 	\end{align}

	To this end, we use some auxiliary derivations:
  \begin{align}
    \label{st:input1-2}
    \AxiomC{}
    \LeftLabel{\scriptsize Nil}
    \UnaryInfC{$\Gt{\Gamma};\es;\es \proves \inact \hastype \Proc$}
    \LeftLabel{\scriptsize End}
    \UnaryInfC{$\Gt{\Gamma};\es;\dual{\prop_{k+1}}:\tinact \proves \inact \hastype \Proc$}
    \AxiomC{}
    \LeftLabel{\scriptsize PolyVar}
    \UnaryInfC{$\Gt{\Gamma};\Gt{\Lambda_1};\es
                \proves \widetilde x' \hastype \widetilde M'$}
    \LeftLabel{\scriptsize PolySend}
    \BinaryInfC{$\Gt{\Gamma};\Gt{\Lambda_1};
                  \dual {\prop_{k+1}}:\btout{\widetilde M'} \tinact
                  \proves \propout{k+1}{\widetilde x'} \inact \hastype \Proc$}
    \DisplayProof
  \end{align}

  \begin{align}
    \label{st:input2-2}
    \AxiomC{\eqref{eq:input-ih2-2}}
    \AxiomC{\eqref{st:input1-2}}
	\AxiomC{\eqref{eq:input-ih3-2}}
    \LeftLabel{\scriptsize Acc}
    \TrinaryInfC{$\Gt{\Gamma \setminus y};\Gt{\Lambda_1};
    			\dual{\prop_{k+1}}:\btout{\widetilde M'}
                 \tinact \proves \binp{u_i}{y}\propout{k+1}{\widetilde x'} \inact \hastype \Proc$}
     \LeftLabel{\scriptsize End}
    \UnaryInfC{$\Gt{\Gamma \setminus y};\Gt{\Lambda_1 \setminus \Lambda_2};
    			\dual{\prop_{k+1}}:\btout{\widetilde M'}
                 \tinact \cat \prop_k:\tinact \proves \binp{u_i}{y}\propout{k+1}{\widetilde x'} \inact \hastype \Proc$}
    \DisplayProof
  \end{align}

  \begin{align}
    \label{st:input3-2}
    \AxiomC{\eqref{st:input2-2}}
    \AxiomC{}
    \LeftLabel{\scriptsize PolyVar}
    \UnaryInfC{$\Gt{\Gamma \setminus y};\Gt{\Lambda_1 \setminus \Lambda_2};\es \proves
    			\widetilde x \hastype \widetilde M$}
    \LeftLabel{\scriptsize PolyRcv}
    \BinaryInfC{$\Gt{\Gamma_1 \setminus y};\es;\Theta' \proves
    				\propinp{k}{\widetilde x}
                  	\binp{u_i}{y}\propout{k+1}{\widetilde x'} \inact
                  	\hastype \Proc$}
    \DisplayProof
  \end{align}

  The following tree
  proves this sub-case:
  \begin{align}
  	\label{pt:input-2}
    \AxiomC{\eqref{st:input3-2}}
    \AxiomC{\eqref{eq:input-ih-2}}
    \LeftLabel{\scriptsize Par}
    \BinaryInfC{$\Gt{\Gamma_1 \setminus y};\es;\Gt{\Delta} \cat \Theta \proves
                	\propinp{k}{\widetilde x}\binp{u_i}{y}\propout{k+1}{\widetilde x'}
                	\inact \Par \B{k+1}{\tilde x'}{P'} \hastype \Proc$}
    \DisplayProof
  \end{align}
  As in sub-case \rom{1}, we may notice that if $y \in \fv{P'}$ then $\Gamma'_1 = \Gamma_1 \setminus y$.
  On the other hand, if $y \notin \fv{P'}$ then $\Gamma'_1 = \Gamma_1$ so we need to apply
  \lemref{lem:strength} with $y$ to \eqref{eq:input-ih-2} in \eqref{pt:input-2}.
  This concludes the analysis for the input case $P = \binp{u_i}{y}P'$.

  \item Case $P = \bout{u_i}{V}P'$. 
  We distinguish two
  sub-cases: \rom{1} $u_i \in \dom{\Delta}$ and \rom{2} $u_i \in \dom{\Gamma}$.
  We consider sub-case \rom{1} first. For this case Rule~\textsc{Send} can be applied:
  \begin{align}
    \label{pt:outputitr}
    \AxiomC{$\Gamma;\Lambda_1;\Delta_1 \proves P' \hastype \Proc$}
    \AxiomC{$\Gamma;\Lambda_2;\Delta_2 \proves V \hastype U$}
    \AxiomC{$u_i:S \in \Delta_1 \cat \Delta_2$}
    \LeftLabel{\scriptsize Send}
    \TrinaryInfC{$\Gamma;\Lambda_1 \cat \Lambda_2;((\Delta_1\cat\Delta_2)
    			  \setminus \{ u_i:S \})
                  \cat u_i:\btout{U}S
                  \proves \bout{u_i}{V}P' \hastype \Proc$}
    \DisplayProof
  \end{align}

	Let $\widetilde z = \fv{P'}$ and $\degree = \len{V}$. Also, let $\Gamma'_1 =
	\Gamma \setminus \widetilde z$ and $\Theta_1$ be a balanced environment such
	that
	$$
	\dom{\Theta_1}=\{\prop_{k+\degree+1},\ldots,\prop_{k+\degree+\len{P'}}\}\cup
	\{\dual{\prop_{k+\degree+2}},\ldots,\dual{\prop_{k+\degree+\len{P'}}}\}
	$$
	and $\Theta_1(\prop_{k+\degree+1})=\btinp{\widetilde M_1}\tinact$ where
	$\widetilde M_1 =
	(\Gt{\Gamma},\Gt{\Lambda_1})(\widetilde z)$.

  	Then, by IH on the first assumption of \eqref{pt:outputitr} we have:
  	\begin{align}
  		\label{eq:output-ih-1}
    	\Gt{\Gamma'_1};\es;\Gt{\Delta_1} \cat \Theta_1 \proves
    	\B{k+\degree+1}{\tilde z}{P'} \hastype \Proc
  	\end{align}

	Let $\widetilde y = \fv{V}$ and $\Gamma'_2 = \Gamma \setminus \widetilde y$.
	There are two cases:
	$U=\lhot{C}$ or $U=\shot{C}$. We consider the first case.
	If $U=\lhot{C}$ then
	let $\Theta_2$ be a balanced environment such that
	$$\dom{\Theta_2}=\{\prop_{k+1},\ldots,\prop_{k+\len{V}}\}\cup
	\{\dual{\prop_{k+1}},\ldots,\dual{\prop_{k+\len{V}}}\}$$
	$\Theta_2(\prop_{k+1})=\btinp{\widetilde M_2}\tinact$, and
	$\Theta_2(\dual{\prop_{k+1}})=\btout{\widetilde M_2}\tinact$ where $\widetilde
	M_2 = (\Gt{\Gamma},\Gt{\Lambda_2})(\widetilde y)$.
	Then, by IH (Part 2)
	on the second assumption of \eqref{pt:outputitr} we have:
  	\begin{align}
  		\label{eq:output-ih-2}
    	\Gt{\Gamma};\Gt{\Lambda_2};\Gt{\Delta_2} \cat \Theta_2 \proves
    	\V{k+1}{\tilde y}{V} \hastype \Gt{U}
  	\end{align}

  	We comment the second case for $U$.
  	We may notice that if $U = \shot{C}$ we can
  	 define $\Theta_2 = \es$ and apply the IH (Part 3) to the second assumption to reach
  	  \eqref{eq:output-ih-2} with $\Lambda_2 = \es$ and $\Delta_2 = \es$.

  	Let $\widetilde x = \fv{P}$.
  	We define $\Theta = \Theta_1 \cat \Theta_2 \cat \Theta'$, where:
  	\begin{align*}
    	\Theta' = \prop_{k}: \btinp{\widetilde M} \tinact \cat
    	\dual{\prop_{k+r+1}}:\btout{\widetilde M_2} \tinact
  	\end{align*}

  \noindent with $\widetilde M = (\Gt{\Gamma}\cat \Gt{\Lambda_1 \cat
  \Lambda_2})(\widetilde x)$. By \defref{def:sizeproc}, we know $\len{P} =
  \len{V} + \len{P'} + 1$, so
  $$
  \dom{\Theta}=\{\prop_k,\ldots,\prop_{k+\len{P}-1}\} \cup
  \{\dual{\prop_{k+1}},\ldots,\dual{\prop_{k+\len{P}-1}}\}
  $$
   By construction $\Theta$ is balanced since $\Theta(\prop_{k+\degree+1}) \dualof \Theta(\dual{\prop_{k+\degree+1}})$ and
  	 $\Theta_1$ and $\Theta_2$ are balanced.
  By \tabref{t:bdowncore}, we have:
  \begin{align*}
    \B{k}{\tilde x}{\bout{u_i}{V}P'} =
    \propinp{k}{\widetilde x}
    \bbout{u_i}{\V{k+1}{\tilde y}{V\incrname{u}{i}}} \propout{k+\degree+1}{\widetilde z}
    \inact \Par
    \B{k+\degree+1}{\tilde z}{P'\incrname{u}{i}}
  \end{align*}

\noindent We know
$\fv{P'} \subseteq \fv{P}$ and $\fv{V} \subseteq \fv{P}$ that is
$\widetilde z \subseteq \widetilde x$ and $\widetilde y \subseteq \widetilde x$.
Let $\Gamma_1 = \Gamma \setminus \widetilde x$. We shall prove the following judgment:
 \begin{align}
    \AxiomC{$\Gt{\Gamma_1};\es; \Gt{((\Delta_1\cat\Delta_2)\setminus \{ u_i:S \})
    \cat u_i:\btout{U}S} \cat \Theta \proves
    \B{k}{\tilde x}{\bout{u_i}{V}P'} \hastype \Proc$}
    \DisplayProof
  \end{align}

  Let $\sigma = \incrname{u}{i}$. To type the left-hand side component of
  $\B{k}{\tilde x}{\bout{u_i}{V}P'}$
  we use some auxiliary derivations:
  \begin{align}
  	\label{pt:zsend}
  	\AxiomC{}
  	\LeftLabel{\scriptsize Nil}
  	\UnaryInfC{$\Gt{\Gamma};\es;\es \proves \inact \hastype \Proc$}
  	\LeftLabel{\scriptsize End}
  	\UnaryInfC{$\Gt{\Gamma};\es; \dual{\prop_{k+\degree+1}} : \tinact \proves \inact \hastype \Proc$}
  	\AxiomC{}
  	\LeftLabel{\scriptsize PolyVar}
  	\UnaryInfC{$\Gt{\Gamma};\Gt{\Lambda_1};\es
  				\proves \widetilde z \hastype \widetilde M_2$}
  	\LeftLabel{\scriptsize PolySend}
  	\BinaryInfC{$\Gt{\Gamma};\Gt{\Lambda_1};
  				\dual{\prop_{k+\degree+1}}:\btout{\widetilde M_2} \tinact
  				\proves \bout{\dual {\prop_{k+\degree+1}}}{\widetilde z} \inact
  				\hastype \Proc$}
  	\LeftLabel{\scriptsize End}
  	\UnaryInfC{$\Gt{\Gamma};\Gt{\Lambda_1};
  				\dual{\prop_{k+\degree+1}}:\btout{\widetilde M_2} \tinact
  				\cat u_i:\tinact
  				\proves \bout{\dual {\prop_{k+\degree+1}}}{\widetilde z} \inact
  				\hastype \Proc$}
  	\DisplayProof
  \end{align}

  	\begin{align}
  		\label{pt:output-weaken}
  	\AxiomC{\eqref{eq:output-ih-2}}
  	\LeftLabel{\scriptsize (\lemref{lem:subst}) with $\subst{\tilde n}{\tilde u}$}
  	\UnaryInfC{$\Gt{\Gamma};\Gt{\Lambda_2};\Gt{\Delta_2\sigma} \cat
  	             \Theta_2 \proves
  	            \V{k+1}{\tilde y}{V\sigma} \hastype \Gt{U}$}
     \DisplayProof
  	\end{align}
%

  \begin{align}
  	\label{pt:uisend}
  	\AxiomC{\eqref{pt:zsend}}
  	\AxiomC{\eqref{pt:output-weaken}}
  	\LeftLabel{\scriptsize Send}
  	\BinaryInfC{\begin{tabular}{c}
  				$\Gt{\Gamma};\Gt{\Lambda_1\cat\Lambda_2};
  				\Gt{\Delta_2\sigma}
  			      	\cat u_i:\btout{\Gt{U}}\tinact \cat \Theta_2 \cat
  			       	\dual{\prop_{k+\degree+1}}:\btout{\widetilde M_2}\tinact
  			       	\proves$ \\
  			       	$\bbout{u_i}{\V{k+1}{\tilde y}{V\sigma}}
                 	\propout{k+\degree+1}{\widetilde z} \inact \hastype \Proc$
                 	\end{tabular}}
    	\LeftLabel{\scriptsize End}
  	\UnaryInfC{\begin{tabular}{c}
  				$\Gt{\Gamma};\Gt{\Lambda_1\cat\Lambda_2};
  				\Gt{\Delta_2\sigma}
  			      	\cat u_i:\btout{\Gt{U}}\tinact \cat \Theta_2 \cat
  			       	\dual{\prop_{k+\degree+1}}:\btout{\widetilde M_2}\tinact \cat \prop_k:\tinact
  			       	\proves$ \\
  			       	$\bbout{u_i}{\V{k+1}{\tilde y}{V\sigma}}
                 	\propout{k+\degree+1}{\widetilde z} \inact \hastype \Proc$
                 	\end{tabular}}
  	\DisplayProof
  \end{align}

  \begin{align}
  \label{pt:stvalue}
  \AxiomC{\eqref{pt:uisend}}
  \AxiomC{}
  \LeftLabel{\scriptsize PolyVar}
  \UnaryInfC{$\Gt{\Gamma};\Gt{\Lambda_2};\es \proves
  				\widetilde x : \widetilde M$}
  \LeftLabel{\scriptsize PolyRcv}
  \BinaryInfC{\begin{tabular}{c}
  				$\Gt{\Gamma_1};\es;\Gt{\Delta_2\sigma} \cat
                u_i:\btout{\Gt{U}} \tinact \cat \Theta_2
                \cat \Theta'
     	        \proves$ \\
     	        $\propinp{k}{\widetilde x}
  		        \bbout{u_i}{\V{k+1}{\tilde y}{V\sigma}}
                \propout{k+\degree+1}{\widetilde z} \inact
                \hastype \Proc$
                \end{tabular}}
  \DisplayProof
  \end{align}
  The following tree proves this case:
  \begin{align}
    \label{pt:output1}
    \AxiomC{\eqref{pt:stvalue}}
    \AxiomC{\eqref{eq:output-ih-1}}
    \LeftLabel{\scriptsize (\lemref{lem:subst}) with $\subst{\tilde n}{\tilde u}$}
    \UnaryInfC{$\Gt{\Gamma'_1};\es;\Gt{\Delta_1\sigma} \cat \Theta_1
                \proves \B{k+r+1}{\tilde z}{P'\sigma} \hastype \Proc$}
    \LeftLabel{\scriptsize (\lemref{lem:strength}) with $\tilde x \setminus \tilde z$}
    \UnaryInfC{$\Gt{\Gamma_1};\es;\Gt{\Delta_1\sigma} \cat
    			\Theta_1 \proves
                \B{k+r+1}{\tilde z}{P'\sigma} \hastype \Proc$}
     \LeftLabel{\scriptsize Par}
    \BinaryInfC{$\Gt{\Gamma_1};\es; \Gt{((\Delta_1\cat\Delta_2)
    			\setminus \{ u_i:S \})
                  \cat u_i:\btout{U}S} \cat \Theta \proves
                  \B{k}{\tilde x}{\bout{u_i}{V}P'} \hastype \Proc$}
    \DisplayProof
  \end{align}
  \noindent where $\widetilde n =
  (u_{i+1},\ldots,u_{i+\len{\Gt{S}}})$ and $\widetilde u =
  (u_i,\ldots,u_{i+\len{\Gt{S}}-1})$.

  We consider sub-case \rom{2}. For this sub-case Rule~\textsc{Req}
		can be applied:

   \begin{align}
   	\label{pt:outputitr-2}
   	\AxiomC{$\Gamma;\es;\es \proves u \hastype \chtype{U}$}
   	\AxiomC{$\Gamma;\Lambda;\Delta_1 \hastype P' \hastype \Proc$}
  	\AxiomC{$\Gamma;\es;\Delta_2 \proves V \hastype U$}
  	\LeftLabel{\scriptsize Req}
   	\TrinaryInfC{$\Gamma;\Lambda;\Delta_1 \cat \Delta_2 \proves \bout{u}{V}P' \hastype \Proc$}
   	\DisplayProof
   \end{align}

   Let $\widetilde z = \fv{P'}$ and $\degree = \len{V}$. Further, let $\Gamma'_1 =
   \Gamma \setminus \widetilde z$ and let $\Theta_1$ and $\Theta_2$ be
     environments defined as in sub-case \rom{1}.

  By IH on the second assumption of \eqref{pt:outputitr-2} we have:
  \begin{align}
  	\label{eq:output-ih-1-2}
    \Gt{\Gamma_1'};\es;\Gt{\Delta_1} \cat \Theta_1 \proves
    \B{k+\degree+1}{\tilde z}{P'} \hastype \Proc
  \end{align}

	Let $\widetilde y = \fv{V}$. 
  By IH on the second assumption of \eqref{pt:outputitr} we have:
  \begin{align}
  	\label{eq:output-ih-2-2}
    \Gt{\Gamma};\es;\Gt{\Delta_2} \cat \Theta_2 \proves
    \V{k+1}{\tilde y}{V} \hastype \Gt{U}
  \end{align}

  Let $\widetilde x = \fv{P}$ and $\Gamma_1 = \Gamma \setminus \widetilde x$.
  We define $\Theta = \Theta_1 \cat \Theta_2 \cat \Theta'$, where:
  \begin{align*}
    \Theta' = \prop_{k}: \btinp{\widetilde M} \tinact \cat
    \dual{\prop_{k+r+1}}:\btout{\widetilde M_2} \tinact
  \end{align*}
    \noindent with
  $\widetilde M = (\Gt{\Gamma}\cat \Gt{\Lambda})(\widetilde x)$.
  By \defref{def:sizeproc}, we know $\len{P} =
  \len{V} + \len{P'} + 1$, so
  $$\dom{\Theta}=(\prop_k,\ldots,\prop_{k+\len{P}-1}) \cup
  (\dual{\prop_{k+1}},\ldots,\dual{\prop_{k+\len{P}-1}})$$
  By construction $\Theta$ is balanced since $\Theta(\prop_{k+\degree+1}) \dualof \Theta(\dual{\prop_{k+\degree+1}})$ and
  	 $\Theta_1$ and $\Theta_2$ are balanced.
  By \tabref{t:bdowncore}, we have:
\begin{align*}
  \B{k}{\tilde x}{\bout{u_i}{V}P'} =
  \propinp{k}{\widetilde x}
  \bbout{u_i}{\V{k+1}{\tilde y}{V}} \propout{k+\degree+1}{\widetilde z} \inact \Par
  \B{k+\degree+1}{\tilde z}{P'}
\end{align*}
\noindent We know
$\fv{P'} \subseteq \fv{P}$ and $\fv{V} \subseteq \fv{P}$ that is
$\widetilde z \subseteq \widetilde x$ and $\widetilde y \subseteq \widetilde x$. 

  To prove that
  $\Gt{\Gamma_1};\es; \Gt{\Delta_1\cat\Delta_2} \cat \Theta
                  \proves \B{k}{\tilde x}{\bout{u_i}{V}P'}$,
  we use some auxiliary derivations:
  \begin{align}
  	\label{pt:zsend-2}
  	\AxiomC{}
  	\LeftLabel{\scriptsize Nil}
  	\UnaryInfC{$\Gt{\Gamma};\es;\es \proves \inact \hastype \Proc$}
  	\LeftLabel{\scriptsize End}
  	\UnaryInfC{$\Gt{\Gamma};\es;\dual{\prop_{k+\degree+1}}:\tinact \proves \inact \hastype \Proc$}
  	\AxiomC{}
  	\LeftLabel{\scriptsize PolyVar}
  	\UnaryInfC{$\Gt{\Gamma};\Gt{\Lambda};\es
  				\proves \widetilde z : \widetilde M_2$}
  	\LeftLabel{\scriptsize PolySend}
  	\BinaryInfC{$\Gt{\Gamma};\Gt{\Lambda};
  				\dual{\prop_{k+\degree+1}}:\btout{\widetilde M_2} \tinact
  				\proves \bout{\dual {\prop_{k+\degree+1}}}{\widetilde z} \inact \hastype \Proc$}
  	\DisplayProof
  \end{align}

  \begin{align}
  	\label{pt:uisend-2}
  	\AxiomC{\eqref{pt:zsend-2}}
  	\AxiomC{\eqref{eq:output-ih-2-2}}
  	\LeftLabel{\scriptsize Req}
  	\BinaryInfC{$\Gt{\Gamma};\Gt{\Lambda};\Gt{\Delta_2}
  			\cat \Theta_2 \cat
  			\dual{\prop_{k+\degree+1}}:\btout{\widetilde M_2} \tinact
  			\proves \bbout{u_i}{\V{k+1}{\tilde y}{V}}
  			\propout{k+\degree+1}{\widetilde z} \inact \hastype \Proc$}
  	\DisplayProof
  \end{align}

  \begin{align}
  \label{pt:stvalue-2}
  \AxiomC{\eqref{pt:uisend-2}}
  \AxiomC{}
  \LeftLabel{\scriptsize PolyVar}
  \UnaryInfC{$\Gt{\Gamma};\Gt{\Lambda};\es \proves
  			\widetilde x \hastype \widetilde M$}
  \LeftLabel{\scriptsize PolyRcv}
  \BinaryInfC{$\Gt{\Gamma_1};\es;\Gt{\Delta_2} \cat \Theta_2
  		\cat \Theta'
     	\proves \propinp{k}{\widetilde x}
  		\bbout{u_i}{\V{k+1}{\tilde y}{V}} \propout{k+\degree+1}{\widetilde z} \inact \hastype \Proc$}
  \DisplayProof
  \end{align}

  The following tree proves this case:
  \begin{align}
    \label{pt:output1-2}
    \AxiomC{\eqref{pt:stvalue-2}}
    \AxiomC{\eqref{eq:output-ih-1-2}}
    \LeftLabel{\scriptsize (\lemref{lem:strength}) with $\tilde x \setminus \tilde z$}
    \UnaryInfC{$\Gt{\Gamma_1};\es;\Gt{\Delta_1} \cat \Theta_1 \proves
    \B{k+\degree+1}{\tilde z}{P'} \hastype \Proc$}
     \LeftLabel{\scriptsize Par}
    \BinaryInfC{$\Gt{\Gamma_1};\es; \Gt{\Delta_1\cat\Delta_2} \cat \Theta
                  \proves \B{k}{\tilde x}{\bout{u_i}{V}P'} \hastype \Proc$}
    \DisplayProof
  \end{align}
 This concludes the analysis for the output case $P = \bout{u_i}{V}P'$. 
 We remark that the proof for the case when $V = y$ is specialization 
 of above the proof where $\degree = \len{y} = 0$, $\tilde y = \fv{y} = y$, 
 $\V{k+1}{\tilde y}{y} = y$ and 
 it holds that $y \sigma = y$.

  \item Case $P = \appl{V}{u_i}$. For this case only
  Rule~\textsc{App} can be applied:
  \begin{align}
  \label{pt:applitr}
  	\AxiomC{$\leadsto \in \{\multimap,\rightarrow \}$}
  	\AxiomC{$\Gamma;\Lambda;\Delta_1 \proves V \hastype C \leadsto \diamond$}
  	\AxiomC{$\Gamma;\es;\Delta_2 \proves u_i \hastype C$}
  	\LeftLabel{\scriptsize App}
  	\TrinaryInfC{$\Gamma;\Lambda;\Delta_1 \cat \Delta_2 \proves
                  \appl{V}{u_i} \hastype \Proc$}
  	\DisplayProof
  \end{align}

We distinguish two sub-cases based on the higher-order type: \rom{1} $\leadsto = \multimap$
or \rom{2} $\leadsto = \rightarrow$. We consider sub-case \rom{1} first. Let $\widetilde x =
\fv{V}$ and let $\Theta_1$ be a balanced environment such that
$$
\dom{\Theta_1}=\{\prop_{k+1},\ldots,\prop_{k+\len{V}}\}
\cup \{\dual{\prop_{k+1}},\ldots,\dual{\prop_{k+\len{V}}}\}
$$
and $\Theta_1(\prop_{k+1}) = \btinp{\widetilde M} \tinact$
and
$\Theta_1(\dual{\prop_{k+1}}) = \btout{\widetilde M} \tinact$ where $\widetilde
M = (\Gt{\Gamma}\cat\Gt{\Lambda})(\widetilde x)$. Then, by IH (Part 2) on the second assumption
of \eqref{pt:applitr} we have:
\begin{align}
	\label{eq:appl-ih-1}
	\Gt{\Gamma};\Gt{\Lambda};\Gt{\Delta_1} \cat \Theta_1 \proves
  	\V{k+1}{\tilde x}{V} \hastype \Gt{U}
\end{align}

By \defref{def:typesdecomp} and \defref{def:typesdecompenv} and
\eqref{pt:applitr} we have:
\begin{align}
	\label{eq:appl-ih-2}
	\Gt{\Gamma};\es;\Gt{\Delta_2} \proves \widetilde m \hastype \Gt{C}
\end{align}
\noindent where $\widetilde m = (u_i,\ldots,u_{i+\len{\Gt{C}}-1})$.
We define $\Theta = \Theta_1 \cat \prop_k : \btinp{\widetilde M} \tinact$.
By \defref{def:sizeproc}, $\len{P} = \len{V}+1$ so
$\dom{\Theta}=\{\prop_k,\ldots,\prop_{k+\len{P}-1}\}
\cup \{\dual{\prop_{k+1}},\ldots,\dual{\prop_{k+\len{P}-1}}\}$.
 By construction $\Theta$ is balanced since $\Theta_1$ is balanced.

By \tabref{t:bdowncore}, we have:
\begin{align}
  \B{k}{\tilde x}{\appl{V}{u_i}} = \propinp{k}{\widetilde x} \appl{\V{k+1}{\tilde x}
  {V}}{\widetilde m}
\end{align}

We shall prove the following judgment:
\begin{align}
  \Gt{\Gamma_1};\es;\Gt{\Delta_1 \cat \Delta_2} \cat \Theta \proves
  \propinp{k}{\widetilde x} \appl{\V{k+1}{\tilde x}{V}}{\widetilde m} \hastype
  \Proc
\end{align}

We use auxiliary derivation: 

\begin{align}
	\label{pt:var-app}
	\AxiomC{\eqref{eq:appl-ih-1}}
	\AxiomC{\eqref{eq:appl-ih-2}}
	\LeftLabel{\scriptsize PolyApp}
	\BinaryInfC{\begin{tabular}{c}
	$\Gt{\Gamma};\Gt{\Lambda};\Gt{\Delta_1 \cat \Delta_2} \cat \Theta
				\proves$  $\appl{\V{k+1}{\tilde x}{V}}{\widetilde m} \hastype \Proc$
				\end{tabular}}
	\DisplayProof
\end{align}

The following tree proves this case:

\begin{align}
	\AxiomC{\eqref{pt:var-app}}
	\AxiomC{}
	\LeftLabel{\scriptsize PolyVar}
	\UnaryInfC{$\Gt{\Gamma};\Gt{\Lambda};\es
				\proves \widetilde x \hastype \widetilde M$}
	\LeftLabel{\scriptsize End}
	\UnaryInfC{$\Gt{\Gamma};\Gt{\Lambda};\prop_k:\tinact
				\proves \widetilde x \hastype \widetilde M$}
	\LeftLabel{\scriptsize PolyRcv}
	\BinaryInfC{$\Gt{\Gamma_1};\es;\Gt{\Delta_1 \cat \Delta_2} \cat \Theta
              \proves \binp{\prop_k}{\widetilde x}
              \appl{\V{k+1}{\tilde x}{V}}{\widetilde m} \hastype \Proc$}
	\DisplayProof
\end{align}
\noindent In applying \eqref{eq:appl-ih-1} and \eqref{eq:appl-ih-2} we may notice
that by \defref{def:typesdecomp}: $\Gt{U} = \lhot{\Gt{C}}$.
This concludes sub-case \rom{1}.

We now consider sub-case \rom{2}. For this case we know that $\Lambda = \es$ and
$\Delta_1 = \es$ in \eqref{pt:applitr}. Let $\widetilde x = \fv{V}$.
Then, by  IH (Part 3) on the second assumption of \eqref{pt:applitr} we have:
\begin{align}
	\label{eq:appl-ih-1-2}
	\Gt{\Gamma};\es;\es \proves \V{k+1}{\tilde x}{V} \hastype \Gt{U}
\end{align}

By \defref{def:typesdecomp}, \defref{def:typesdecompenv}, and
\eqref{pt:applitr} we have:
\begin{align}
	\label{eq:appl-ih-2-2}
	\Gt{\Gamma};\es;\Gt{\Delta_2} \proves \widetilde m \hastype \Gt{C}
\end{align}
\noindent where $\widetilde m = (u_i,\ldots,u_{i+\len{\Gt{C}}-1})$. By \tabref{t:bdowncore}, we have:
\begin{align}
  \B{k}{\tilde x}{\appl{V}{u_i}} = \propinp{k}{\widetilde x}
  \appl{\V{k+1}{\tilde x}{V}}{\widetilde m}
\end{align}

We define $\Theta = \prop_k : \btinp{\widetilde M} \tinact$
with $\widetilde M = \Gt{\Gamma}(\widetilde x)$.

We shall prove the following judgment:
\begin{align}
  \Gt{\Gamma_1};\es;\Gt{\Delta_2} \cat \Theta \proves \propinp{k}{\widetilde x}
  \appl{\V{k+1}{\tilde x}{V}}{\widetilde m} \hastype \Proc
\end{align}

The following tree proves this case:

\begin{align}
	\AxiomC{\eqref{eq:appl-ih-1-2}}
	\AxiomC{\eqref{eq:appl-ih-2}}
	\LeftLabel{\scriptsize PolyApp}
	\BinaryInfC{$\Gt{\Gamma};\Gt{\Lambda};\Gt{\Delta_2} \cat \Theta
				 \proves \appl{\V{k+1}{\tilde x}{V}}{\widetilde m}
				 \hastype \Proc$}
	\AxiomC{}
	\LeftLabel{\scriptsize PolyVar}
	\UnaryInfC{$\Gt{\Gamma};\es;\es
				      \proves \widetilde x \hastype \widetilde M$}
\LeftLabel{\scriptsize End}
	\UnaryInfC{$\Gt{\Gamma};\es;\prop_k:\tinact
				      \proves \widetilde x \hastype \widetilde M$}
	\LeftLabel{\scriptsize PolyRcv}
	\BinaryInfC{$\Gt{\Gamma_1};\es;\Gt{\Delta_2} \cat \Theta \proves
                \propinp{k}{\widetilde x} \appl{\V{k+1}{\tilde x}{V}}
                {\widetilde m} \hastype \Proc$}
	\DisplayProof
\end{align}

\noindent In applying \eqref{eq:appl-ih-1-2} and \eqref{eq:appl-ih-2} we may
notice that by \defref{def:typesdecomp}: $\Gt{U} = \shot{\Gt{C}}$.
 We remark that the proof for the case when $V = y$ is specialization 
 of above the proof where $r = \len{y} = 0$, $\widetilde x = \fv{y} = y$, 
 $\V{k+1}{\tilde x}{y} = y$ and 
 it holds that $y \sigma = y$. 

\item Case $P = \news{s:C}P'$. We distinguish two sub-cases:
\rom{1} $C = S$ and \rom{2} $C = \chtype{U}$. Firstly, we $\alpha$-convert $P$ as follows:
\begin{align*}
	P \equiv_{\alpha} \news{s_1:C}P'\subst{s_1 \dual{s_1}}{s \dual{s}}
\end{align*}

We consider sub-case \rom{1} first. For this case Rule~\textsc{ResS} can be applied:
\begin{align}
	\label{pt:restritr}
	\AxiomC{$\Gamma;\Lambda;\Delta \cat s_1:S \cat \dual {s_1}:\dual S
				\proves P'\subst{s_1 \dual{s_1}}{s \dual{s}} \hastype \Proc$}
	\LeftLabel{\scriptsize ResS}
	\UnaryInfC{$\Gamma;\Lambda;\Delta \proves
              \news{s_1:S}P'\subst{s_1 \dual{s_1}}{s \dual{s}} \hastype \Proc$}
	\DisplayProof
\end{align}
Let $\widetilde x = \fv{P'}$ and let $\Theta_1$ be
a balanced environment such that
$$
\dom{\Theta_1}=\{\prop_k,\ldots,\prop_{k+\len{P'}-1}\}
\cup \{\dual{\prop_{k+1}},\ldots,\dual{\prop_{k+\len{P'}-1}}\}
$$
and $\Theta_1(\prop_k) =
\btinp{\widetilde M} \tinact$ with $\widetilde M =
(\Gt{\Gamma}\cat\Gt{\Lambda})(\widetilde x)$.
Then, by IH on the assumption of \eqref{pt:restritr} we have:
\begin{align}
	\label{eq:restr-ih}
	\Gt{\Gamma};\es;\Gt{\Delta \cat s_1:S \cat \dual {s_1}:\dual S}
				\cat \Theta_1 \proves
				\B{k}{\tilde x}{P'\subst{s_1 \dual{s_1}}{s \dual{s}}} \hastype \Proc
\end{align}

Note that we take $\Theta = \Theta_1$ since $\fv{P}=\fv{P'}$ and $\len{P}=\len{P'}$. By \defref{def:typesdecomp} and \defref{def:typesdecompenv} and~\eqref{eq:restr-ih}, we know that:
\begin{align}
	\label{eq:restr-ih1}
	\Gt{\Gamma};\es;\Gt{\Delta} \cat \widetilde s:\Gt{S} \cat	\dual{\widetilde s}:
  \Gt{\dual S} \proves \B{k}{\tilde x}{P'\subst{s_1 \dual{s_1}}{s \dual{s}}}
	\hastype \Proc
\end{align}
\noindent where $\widetilde s = (s_1,\ldots,s_{\len{\Gt{S}}})$ and
$\dual{\widetilde s} = (\dual{s_1},\ldots,\dual{s_{\len{\Gt{S}}}})$.
By \tabref{t:bdowncore}, we have:
\begin{align*}
	\B{k}{\tilde x}{\news{s}P'} = \news{\widetilde s : \Gt{S}}
								\B{k}{\tilde x}{P'\subst{s_1 \dual{s_1}}{s \dual{s}}}
\end{align*}

The following tree proves this sub-case:
\begin{align}
	\label{pt:restr1}
	\AxiomC{\eqref{eq:restr-ih1}}
	\LeftLabel{\scriptsize PolyResS}
	\UnaryInfC{$\Gt{\Gamma};\es;\Gt{\Delta} \proves
				\news{\widetilde s : \Gt{S}}
				\B{k}{\tilde x}{P'\subst{s_1 \dual{s_1}}{s \dual{s}}} \hastype \Proc$}
	\DisplayProof
\end{align}

We now consider sub-case \rom{2}. Similarly to sub-case \rom{1} we first $\alpha$-convert $P$ as
follows:
\begin{align*}
	P \equiv_\alpha \news{s_1}{P'\subst{s_1}{s}}
\end{align*}

For this sub-case Rule~\textsc{Res} can be applied:
\begin{align}
	\label{eq:restritr2}
	\AxiomC{$\Gamma \cat s_1:\chtype{U};\Lambda;\Delta \proves
            P'\subst{s_1}{s} \hastype \Proc$}
	\LeftLabel{\scriptsize Res}
	\UnaryInfC{$\Gamma;\Lambda;\Delta \proves \news{s_1}P'\subst{s_1}{s}
					\hastype \Proc$}
	\DisplayProof
\end{align}
Let $\widetilde x = \fv{P'}$ and let $\Theta_1$ be defined as in sub-case
\rom{1}.

By IH on the
first assumption of \eqref{eq:restritr2} we have:
\begin{align}
	\label{eq:restr-ih2}
	\Gt{\Gamma \cat s_1:\chtype{U}};\es;\Gt{\Delta}
		\cat \Theta_1
		\proves \B{k}{\tilde x}{P'\subst{s_1}{s}} \hastype \Proc
\end{align}

Here we also take $\Theta = \Theta_1$ since $\fv{P} = \fv{P'}$ and $\len{P}=\len{P'}$.
We notice that by \defref{def:typesdecomp} and \defref{def:typesdecompenv} and
\eqref{eq:restr-ih2}:
\begin{align}
	\label{eq:restr-ih3}
	\Gt{\Gamma} \cat s_1:\Gt{\chtype{U}};\es;\Gt{\Delta}
		\cat \Theta_1
		\proves \B{k}{\tilde x}{P'\subst{s_1}{s}} \hastype \Proc
\end{align}

By \tabref{t:bdowncore}, we have:
\begin{align*}
	\B{k}{\tilde x}{\news{s}P'} = \news{s_1 : \Gt{\chtype{U}}}
								\B{k}{\tilde x}{P'\subst{s_1}{s}}
\end{align*}
\noindent where $\widetilde s = s_1$ since $\len{\Gt{\chtype{U}}}=1$.
The following tree proves this sub-case:
\begin{align}
	\AxiomC{\eqref{eq:restr-ih3}}
	\LeftLabel{\scriptsize Res}
	\UnaryInfC{$\Gt{\Gamma};\es;\Gt{\Delta}\cat \Theta
					\proves \news{s_1 : \Gt{\chtype{U}}}
								\B{k}{\tilde x}{P'\subst{s_1}{s}} \hastype \Proc$}
	\DisplayProof
\end{align}

\item Case $P = Q \Par R$. For this case only Rule~\textsc{Par} can be applied:
\begin{align}
	\label{pt:compitr}
	\AxiomC{$\Gamma;\Lambda_1;\Delta_1 \proves Q \hastype \Proc$}
	\AxiomC{$\Gamma;\Lambda_2;\Delta_2 \proves R \hastype \Proc$}
	\LeftLabel{\scriptsize Par}
	\BinaryInfC{$\Gamma;\Lambda_1 \cat \Lambda_2;\Delta_1 \cat \Delta_2
					     \proves Q \Par R \hastype \Proc$}
	\DisplayProof
\end{align}
Let $\widetilde y = \fv{Q}$ and let $\Theta_1$ be a balanced
environment such that
$$\dom{\Theta_1}=\{\prop_{k+1},\ldots,\prop_{k+\len{Q}}\}
\cup \{\dual{\prop_{k+2}},\ldots,\dual{\prop_{k+\len{Q}}}\}
$$
and
$\Theta_1(\prop_{k+1}) = \btinp{\widetilde M_1} \tinact$ with $\widetilde M_1 =
(\Gt{\Gamma}\cat\Gt{\Lambda_1})(\widetilde y)$.
Let $\Gamma'_1 = \Gamma \setminus \widetilde y$.
Then, by IH on the first assumption of \eqref{pt:compitr} we have:
\begin{align}
	\label{eq:comp-ih-1}
	\Gt{\Gamma'_1};\es;\Gt{\Delta_1} \cat \Theta_1 \proves
  \B{k+1}{\tilde y}{Q} \hastype \Proc
\end{align}

Let $\widetilde z = \fv{R}$ and $\degree = |Q|$. Also, let $\Theta_2$ be a balanced
environment such that
$$\dom{\Theta_2}=\{\prop_{k+\degree+1},\ldots,\prop_{k+\degree+\len{R}}\}
\cup \{\dual{\prop_{k+\degree+2}},\ldots,\dual{\prop_{k+\degree+\len{R}}}\}$$ and
$\Theta_2(\prop_{k+\degree+1}) = \btinp{\widetilde M_2} \tinact$ with $\widetilde M_2 =
(\Gt{\Gamma}\cat\Gt{\Lambda_2})(\widetilde z)$.

Let $\Gamma'_2 = \Gamma \setminus \widetilde z$.
Then, by IH on the second assumption of \eqref{pt:compitr} we have:
\begin{align}
	\label{eq:comp-ih-2}
	\Gt{\Gamma'_2};\es;\Gt{\Delta_2} \cat \Theta_2 \proves
  \B{k+\degree+1}{\tilde z}{R} \hastype \Proc
\end{align}

Let $\widetilde x = \fv{P}$ and $\widetilde M = (\Gt{\Gamma},\Gt{\Lambda_1\cat\Lambda_2})(\widetilde x)$.
We may notice that $\widetilde M_i \subseteq \widetilde M$ for $i \in \{1,2\}$.
We define $\Theta = \Theta_1 \cat \Theta_2 \cat \Theta'$ where:

 \begin{align*}
	\Theta' = \prop_k:\btinp{\widetilde M} \tinact \cat
				\dual{\prop_{k+1}}:\btout{\widetilde M_1} \tinact
				\cat
				\dual{\prop_{k+\degree+1}}:\btout{\widetilde M_2} \tinact
\end{align*}

 By construction $\Theta$ is balanced since $\Theta(\prop_{k+1}) \dualof \Theta(\dual{\prop_{k+1}})$, $\Theta(\prop_{k+\degree+1}) \dualof \Theta(\dual{\prop_{k+\degree+1}})$, and
  	 $\Theta_1$ and $\Theta_2$ are balanced.

By \tabref{t:bdowncore} we have:
\begin{align*}
  	\B{k}{\tilde x}{Q \Par R} =\propinp{k}{\widetilde x} \propout{k+1}{\widetilde y}
    \propout{k+\degree+1}{\widetilde z} \inact \Par
    \B{k+1}{\tilde y}{Q} \Par \B{k+\degree+1}{\tilde z}{R}
\end{align*}

Let $\Gamma_1 = \Gamma \setminus \widetilde x$. We shall prove the following judgment:
\begin{align*}
\Gt{\Gamma_1};\es;\Gt{\Lambda_1 \cat \Lambda_2}
                \cat \Theta \proves
                \propinp{k}{\widetilde x} \bout{\dual {\prop_{k+1}}}{\widetilde y}
   						  \bout{\dual {\prop_{k+\degree+1}}}{\widetilde z} \inact \Par
    					  \B{k+1}{\tilde y}{Q} \Par \B{k+\degree+1}{\tilde z}{R} \hastype \Proc
\end{align*}

To type the trio, which is the left-hand side component, we use some auxiliary derivations:
\begin{align}
	\label{pt:compsend2}
	\AxiomC{}
\LeftLabel{\scriptsize Nil}
	\UnaryInfC{$\Gt{\Gamma};\es;\es \proves \inact$}
	\LeftLabel{\scriptsize End}
	\UnaryInfC{$\Gt{\Gamma};\es;\dual{\prop_{k+\degree+1}}:\tinact \proves \inact$}
	\AxiomC{}
	\LeftLabel{\scriptsize PolyVar}
	\UnaryInfC{$\Gt{\Gamma};\Gt{\Lambda_2};\es
				\proves \widetilde z \hastype \widetilde M_2$}
	\LeftLabel{\scriptsize PolySend}
	\BinaryInfC{$\Gt{\Gamma};\Gt{\Lambda_2};
				\dual{\prop_{k+\degree+1}}:\btout{\widetilde M_2}\tinact
				\proves \propout{k+\degree+1}{\widetilde z} \inact \hastype \Proc$}
	\LeftLabel{\scriptsize End}
	\UnaryInfC{$\Gt{\Gamma};\Gt{\Lambda_2};
				\dual{\prop_{k+\degree+1}}:\btout{\widetilde M_2}\tinact
				\cat \dual{\prop_{k+1}}:\tinact
				\proves \propout{k+\degree+1}{\widetilde z} \inact \hastype \Proc$}
	\DisplayProof
\end{align}

\begin{align}
	\label{pt:compsend1}
	\AxiomC{\eqref{pt:compsend2}}
	\AxiomC{}
	\LeftLabel{\scriptsize PolyVar}
	\UnaryInfC{$\Gt{\Gamma};\Gt{\Lambda_1};\es
				\proves \widetilde y \hastype \widetilde M_1$}
	\LeftLabel{\scriptsize PolySend}
	\BinaryInfC{\begin{tabular}{c}
	$\Gt{\Gamma};\Gt{\Lambda_1\cat\Lambda_2};
                \dual{\prop_{k+1}}:\btout{\widetilde M_1}\tinact \cat
				 \dual{\prop_{k+\degree+1}}:\btout{\widetilde M_2}\tinact
			     \proves$ \\
			     $\propout{k+1}{\widetilde y}
   			    \propout{k+\degree+1}{\widetilde z} \inact \hastype \Proc$
   			    \end{tabular}}
   	\LeftLabel{\scriptsize End}
   \UnaryInfC{\begin{tabular}{c}
	$\Gt{\Gamma};\Gt{\Lambda_1\cat\Lambda_2};
                \dual{\prop_{k+1}}:\btout{\widetilde M_1}\tinact \cat
				 \dual{\prop_{k+\degree+1}}:\btout{\widetilde M_2}\tinact
				 \cat \prop_k:\tinact
			     \proves$ \\
			     $\propout{k+1}{\widetilde y}
   			    \propout{k+\degree+1}{\widetilde z} \inact \hastype \Proc$
   			    \end{tabular}}
	\DisplayProof
\end{align}

\begin{align}
	\label{pt:compinput1}
	\AxiomC{\eqref{pt:compsend1}}
	\AxiomC{}
	\LeftLabel{\scriptsize PolyVar}
	\UnaryInfC{$\Gt{\Gamma};\Gt{\Lambda_1\cat\Lambda_2};\es
				      \proves \widetilde x \hastype \widetilde M$}
	\LeftLabel{\scriptsize PolyRcv}
	\BinaryInfC{$\Gt{\Gamma_1};\emptyset;\Theta' \proves
					     \propinp{k}{\widetilde x}
					     \propout{k+1}{\widetilde y}
   						 \propout{k+\degree+1}{\widetilde z} \inact \hastype \Proc$}
	\DisplayProof
\end{align}

\begin{align}
	\label{pt:comp-par-1}
	\AxiomC{\eqref{eq:comp-ih-1}}
	\LeftLabel{\scriptsize (\lemref{lem:strength}) with $\tilde x \setminus \tilde y$}
	\UnaryInfC{$\Gt{\Gamma_1};\es;\Gt{\Delta_1} \cat \Theta_1 \proves
  					\B{k+1}{\tilde y}{Q} \hastype \Proc$}
  	\DisplayProof
\end{align}

\begin{align}
	\label{pt:comp-par-2}
	\AxiomC{\eqref{eq:comp-ih-2}}
	\LeftLabel{\scriptsize (\lemref{lem:strength}) with $\tilde x \setminus \tilde z$}
	\UnaryInfC{$\Gt{\Gamma_1};\es;\Gt{\Delta_2} \cat \Theta_2 \proves
  					\B{k+\degree+1}{\tilde z}{R} \hastype \Proc$}
  	\DisplayProof
\end{align}

\begin{align}
	\label{pt:par}
	\AxiomC{\eqref{pt:comp-par-1}}
	\AxiomC{\eqref{pt:comp-par-2}}
	\LeftLabel{\scriptsize Par}
	\BinaryInfC{$\Gt{\Gamma_1};\es;\Gt{\Delta_1 \cat \Delta_2} \cat
                \Theta_1 \cat \Theta_2 \proves
                \B{k+1}{\tilde y}{Q} \Par \B{k+\degree+1}{\tilde z}{R} \hastype \Proc$}
     \DisplayProof
\end{align}

The following tree proves this case:

\begin{align*}
	\AxiomC{\eqref{pt:compinput1}}
	\AxiomC{\eqref{pt:par}}
	\LeftLabel{\scriptsize Par}
	\BinaryInfC{
	\begin{tabular}{c}
	$\Gt{\Gamma_1};\es;\Gt{\Delta_1 \cat \Delta_2}
                \cat \Theta \proves$ \\
                $\propinp{k}{\widetilde x}
                \bout{\dual {\prop_{k+1}}}{\widetilde y}
   						  \bout{\dual {\prop_{k+\degree+1}}}{\widetilde z} \inact \Par
    					  \B{k+1}{\tilde y}{Q} \Par \B{k+\degree+1}{\tilde z}{R} \hastype \Proc$
    					  \end{tabular}
    					  }
	\DisplayProof
\end{align*}
\end{enumerate}

\item This is the first case concerning values when $V = y$. 
        By assumption $\Gamma;\Lambda;\Delta \proves y \hastype \lhot{C}$. 
        For this case only Rule~\textsc{LVar} can be applied and by inversion  
        $\Delta = \{y \hastype \lhot{C}\}$ and $\Delta = \es$. 
        By \tabref{t:bdowncore} we have $\V{k}{\tilde x}{y} =y$ and by \defref{def:typesdecomp} and \defref{def:typesdecompenv} we have $\Gt{\Delta} = \{ \Gt{\lhot{C}}\}$. Hence, we prove the following judgment by applying Rule~\textsc{LVar}:
        \begin{align*} 
         \Gt{\Gamma};\Gt{\Delta};\es \proves \V{k}{\tilde x}{y} \hastype 
         \Gt{\lhot{C}}
         \LeftLabel{\scriptsize LVar}
        \end{align*}

\item
This is the second case concerning values when 
$V = \abs{y}P$. By assumption we have  
$\Gamma;\Lambda;\Delta \proves V \hastype \slhot{C}$. Here 
we distinguish two sub-cases (1) $\leadsto = \multimap$ and (2) 
$\leadsto = \rightarrow$: 
\begin{itemize} 
    \item $\leadsto = \multimap$. 
By assumption, $\Gamma;\Lambda;\Delta \proves V \hastype \lhot{C}$.
In this case Rule~\textsc{Abs}
can be applied. Firstly, we $\alpha$-convert value $V$ as follows:
\begin{align}
	V \equiv_{\alpha} \abs{y_1}P\subst{y_1}{y}
\end{align}
	For this case only Rule~\textsc{Abs} can be applied:
	\begin{align}
		\label{pt:absitr}
		\AxiomC{$\Gamma;\Lambda;\Delta_1 \proves P\subst{y_1}{y}
		         \hastype \Proc$}
		\AxiomC{$\Gamma;\es;\Delta_2 \proves y_1 \hastype C$}
		\LeftLabel{\scriptsize Abs}
		\BinaryInfC{$\Gamma \setminus y_1;\Lambda;\Delta_1 \setminus \Delta_2 \proves
					       \abs{y_1}{P\subst{y_1}{y}} \hastype \lhot{C}$}
		\DisplayProof
	\end{align}
Let $\widetilde x = \fv{P}$ and $\Gamma_1 = \Gamma \setminus \widetilde x$. Also, let $\Theta_1$ be a balanced environment such that
$$\dom{\Theta_1}=\{\prop_k,\ldots,\prop_{k+\len{P}-1}\} \cup
\{\dual{\prop_{k+1}},\ldots,\dual{\prop_{k+\len{P}-1}}\}$$
and $\Theta_1(\prop_k) = \widetilde M$ with
$\widetilde M = (\Gt{\Gamma \setminus y_1}, \Gt{\Lambda})(\widetilde x)$.
 Then, by IH (Part 1) on the first assumption of \eqref{pt:absitr} we have:
 \begin{align}
 		\label{eq:abs-ih-1}
 		\Gt{\Gamma_1};\es;\Gt{\Delta_1} \cat \Theta_1
 		\proves \B{k}{\tilde x}{P\subst{y_1}{y}} \hastype \Proc
 \end{align}


By \defref{def:typesdecomp} and \defref{def:typesdecompenv} and the
second assumption of
\eqref{pt:absitr} we have:
\begin{align}
	\label{eq:abs-ih-2}
	\Gt{\Gamma};\es;\Gt{\Delta_2}\proves \widetilde y \hastype \Gt{C}
\end{align}
\noindent where $\widetilde y = (y_1,\ldots,y_{\len{\Gt{C}}})$.

We define $\Theta = \Theta_1 \cat \dual {\prop_k}:
	\btout{\widetilde M} \tinact$. By \tabref{t:bdowncore}, we have:
\begin{align*}
	\V{k}{\tilde x}{\abs{y}{P}} = \abs{\widetilde{y}:\Gt{C}}{\propout{k}{\widetilde x} \inact
  	\Par \B{k}{\tilde x}{P \subst{y_1}{y}}}
\end{align*}

 By construction $\Theta$ is balanced since $\Theta(\prop_{k}) \dualof \Theta(\dual{\prop_{k}})$ and $\Theta_1$ is balanced.
 We use an auxiliary derivation:
 \begin{align}
 	\label{pt:abs-send}
 	\AxiomC{}
	\LeftLabel{\scriptsize Nil}
	\UnaryInfC{$\Gt{\Gamma};\es;\es
				       \proves \inact \hastype \Proc$}
	\LeftLabel{\scriptsize End}
	\UnaryInfC{$\Gt{\Gamma};\es;\prop_k:\tinact
				       \proves \inact \hastype \Proc$}
	\AxiomC{}
	\LeftLabel{\scriptsize PolyVar}
	\UnaryInfC{$\Gt{\Gamma};\Gt{\Lambda};\es
				        \proves \widetilde x \hastype \widetilde M$}
	\LeftLabel{\scriptsize Send}
	\BinaryInfC{$\Gt{\Gamma};\Gt{\Lambda}; \dual {\prop_k}:
				        \btout{\widetilde M} \tinact
				        \proves \propout{k}{\widetilde x} \inact \hastype \Proc$}
	\DisplayProof
 \end{align}

\begin{align}
	\label{pt:abspar}
	\AxiomC{\eqref{pt:abs-send}}
	\AxiomC{\eqref{eq:abs-ih-1}}
	\LeftLabel{\scriptsize (\lemref{lem:weaken}) with $\tilde x$}
	\UnaryInfC{$\Gt{\Gamma};\es;\Gt{\Delta_1} \cat \Theta_1
 		\proves \B{k}{\tilde x}{P\subst{y_1}{y}} \hastype \Proc$}
	\LeftLabel{\scriptsize Par}
	\BinaryInfC{$\Gt{\Gamma};\Gt{\Lambda};\Gt{\Delta_1} \cat \Theta
					     \proves \propout{k}{\widetilde x} \inact
               \Par \B{k}{\tilde x}{P \subst{y_1}{y}} \hastype \Proc$}
  \DisplayProof
\end{align}

The following tree proves this part:
\begin{align}
	\AxiomC{\eqref{pt:abspar}}
	\AxiomC{\eqref{eq:abs-ih-2}}
	\LeftLabel{\scriptsize Abs}
	\BinaryInfC{$\Gt{\Gamma \setminus y_1};\Gt{\Lambda};\Gt{\Delta_1 \setminus \Delta_2} \cat
	             \Theta \proves
	            \abs{\widetilde{y}:\Gt{C}}{\propout{k}{\widetilde x} \inact
              	 \Par \B{k}{\tilde x}{P \subst{y_1}{y}}} \hastype \Proc$}
	\DisplayProof
\end{align}
\item $\leadsto = \rightarrow$.
By assumption, $\Gamma;\Lambda;\Delta \proves V \hastype \shot{C}$.
In this case Rule~\textsc{Prom} can be applied:
\begin{align}
	\label{eq:abs-sub2-ih-1}
	\AxiomC{$\Gamma;\es;\Delta \proves P\subst{y_1}{y} \hastype \Proc$}
		\AxiomC{$\Gamma;\es;\Delta \proves y_1 \hastype C$}
		\LeftLabel{\scriptsize Abs}
	\BinaryInfC{$\Gamma \setminus y_1;\es;\es
					     \proves \abs{y_1}P \subst{y_1}{y} \hastype \lhot{C}$}
	\LeftLabel{\scriptsize Prom}
	\UnaryInfC{$\Gamma \setminus y_1;\es;\es
					     \proves \abs{y_1}P \subst{y_1}{y}
					     \hastype \shot{C}$}
	\DisplayProof
\end{align}

Let $\widetilde x = \fv{P}$, $\Gamma_1 = \Gamma \setminus \widetilde x$,
and let $\Theta_1$ be defined as in Part~2. Then, by IH (Part 1) on the first assumption of Rule~\textsc{Abs} in
\eqref{eq:abs-sub2-ih-1} we have:
\begin{align}
 		\label{eq:abs-ih-1-2}
 		\Gt{\Gamma_1};\es; \Gt{\Delta} \cat \Theta_1 \proves \B{k}{\tilde x}{P\subst{y_1}{y}}
    \hastype \Proc
 \end{align}
 By \defref{def:typesdecomp} and \defref{def:typesdecompenv} and
\eqref{eq:abs-sub2-ih-1} we have:
\begin{align}
	\label{eq:abs-ih-2-2}
	\Gt{\Gamma};\es;\Gt{\Delta}\proves \widetilde y \hastype \Gt{C}
\end{align}
\noindent where $\widetilde y = (y_1,\ldots,y_{\len{\Gt{C}}})$.

Let $\Theta$ be defined as in Part 2.
By \tabref{t:bdowncore} we have:
\begin{align*}
	\V{k}{\tilde x}{\abs{y}{P}} = \abs{\widetilde{y}:\Gt{C}}
				{\news{\widetilde \prop}\propout{k}{\widetilde x} \inact
  				\Par \B{k}{\tilde x}{P \subst{y_1}{y}}}
\end{align*}
\noindent where $\widetilde \prop = (\prop_k,\ldots,\prop_{k+\len{V}-1})$.

We use some auxiliary derivations:
\begin{align}
	\label{pt:abssend1}
	\AxiomC{}
	\LeftLabel{\scriptsize Nil}
	\UnaryInfC{$\Gt{\Gamma};\es;\es
				\proves \inact \hastype \Proc$}
	\LeftLabel{\scriptsize End}
	\UnaryInfC{$\Gt{\Gamma};\es;\prop_k:\tinact
				\proves \inact \hastype \Proc$}
	\AxiomC{}
	\LeftLabel{\scriptsize PolyVar}
	\UnaryInfC{$\Gt{\Gamma};\Gt{\Lambda};\es
				\proves \widetilde x \hastype \widetilde M$}
	\LeftLabel{\scriptsize Send}
	\BinaryInfC{$\Gt{\Gamma};\es; \dual {\prop_k}:
				\btout{\widetilde M} \tinact
				\proves \bout{\dual {\prop_{k}}}{\widetilde x} \inact \hastype \Proc$}
	\DisplayProof
\end{align}

\begin{align}
	\label{pt:abspar2}
	\AxiomC{\eqref{pt:abssend1}}
	\AxiomC{\eqref{eq:abs-ih-1}}
	\LeftLabel{\scriptsize (\lemref{lem:weaken}) with $\tilde x$}
	\UnaryInfC{$\Gt{\Gamma};\es;\Gt{\Delta} \cat \Theta_1
 		\proves \B{k}{\tilde x}{P\subst{y_1}{y}} \hastype \Proc$}
	\LeftLabel{\scriptsize Par}
	\BinaryInfC{$\Gt{\Gamma};\es;\Gt{\Delta} \cat
				\Theta
				\proves \bout{\dual {\prop_{k}}}{\widetilde x} \inact
  					\Par \B{k}{\tilde x}{P \subst{y_1}{y}} \hastype \Proc$}
	\LeftLabel{\scriptsize ResS}
	\UnaryInfC{$\Gt{\Gamma};\es;
					\Gt{\Delta}
					\proves \news{\widetilde \prop}\bout{\dual{\prop_{k}}}{\widetilde x} \inact
  \Par \B{k}{\tilde x}{P \subst{y_1}{y}} \hastype \Proc$}
  \DisplayProof
\end{align}

The following tree concludes this part (and the proof):
\begin{align}
	\AxiomC{\eqref{pt:abspar2}}
	\AxiomC{\eqref{eq:abs-ih-2-2}}
	\LeftLabel{\scriptsize Abs}
	\BinaryInfC{$\Gt{\Gamma \setminus y_1};\es; \es
	\proves \abs{\widetilde{y}:\Gt{C}}
		{\news{\widetilde \prop} \propout{k}{\widetilde x} \inact
  \Par \B{k}{\tilde x}{P \subst{y_1}{y}}} \hastype \Proc$}
	\DisplayProof
\end{align} 
\end{itemize}
\end{enumerate}
\end{proof}

\subsection{Proof of \thmref{t:decompcore}}
\label{app:decompcore}
\

\begin{proof}
By assumption, $\Gamma;\es;\Delta \proves P \hastype \Proc$. Then,
	by applying \lemref{lem:subst} we have:
	\begin{align}
		\label{eq:typedec-subst}
		\Gamma\sigma;\es;\Delta \sigma \proves P \sigma \hastype \Proc
	\end{align}
		By \defref{def:decomp}, we
	shall prove the following judgment:
	\begin{align}
		\Gt{\Gamma \sigma};\es;\Gt{\Delta \sigma} \proves \news{\widetilde c}{(\propout{k}{}\inact \Par \B{k}{\epsilon}{P\sigma})} \hastype \Proc
	\end{align}
		where $\widetilde \prop = (\prop_k,\ldots,\prop_{k+\len{P}-1})$
	and $k > 0$. Since $P\sigma$ is an initialized process, 
we apply \thmref{t:typecore} to \eqref{eq:typedec-subst} to get:
	\begin{align}
		\label{eq:typedec-bdown}
		\Gt{\Gamma\sigma};\es;\Gt{\Delta \sigma}\cat\Theta
		\proves \B{k}{\epsilon}{P\sigma} \hastype \Proc
	\end{align}
	\noindent where
	$\Theta$ is balanced with
	$\dom{\Theta} = \{\prop_k,\prop_{k+1},\ldots,\prop_{k+\len{P}-1} \}
	\cup \{\dual{\prop_{k+1}},\ldots,\dual{\prop_{k+\len{P}-1}} \}$ and
	$\Theta(\prop_k)=\btinp{\cdot}\tinact$.
	By assumption,  $\fv{P} = \es$.
%
%

	The following tree concludes the proof:
	\begin{align}
	\AxiomC{}
	\LeftLabel{\scriptsize Nil}
	\UnaryInfC{$\Gt{\Gamma\sigma};\es;\es \proves \inact \proves \Proc$}
	\AxiomC{}
	\LeftLabel{\scriptsize Send}
	\BinaryInfC{$\Gt{\Gamma\sigma};\es;\dual{\prop_k}:\btout{\cdot}\tinact
				\proves \propout{k}{}\inact \hastype \Proc$}
	\AxiomC{\eqref{eq:typedec-bdown}}
	\LeftLabel{\scriptsize Par}
		\BinaryInfC{$\Gt{\Gamma\sigma};\es;\Gt{\Delta\sigma}\cat
					\dual{\prop_k}:\btout{\cdot}\tinact
					\cat \Theta \proves
					\propout{k}{}\inact \Par \B{k}{\epsilon}{P\sigma}
					\hastype \Proc$}
		\LeftLabel{\scriptsize PolyResS}
		\UnaryInfC{$\Gt{\Gamma\sigma};\es;\Gt{\Delta \sigma}
						\proves
						\news{\widetilde c}{(\propout{k}{}\inact \Par \B{k}{\epsilon}{P\sigma})} \hastype \Proc$}
		\DisplayProof
	\end{align}
	 Rule PolyResS can be used: the environment obtained by extending $\Theta$ with $\dual{\prop_k}:\btout{\cdot}\tinact$ is balanced, and for each $\prop_j \in \widetilde c$ such that $\prop_j: S$ there is a $\dual{\prop_j}:T$, with $S \dualof T$.
\end{proof}

\section{Appendix to \secref{ss:exti}}
\label{app:exti}

\thmref{t:typecore} holds also for the extension with selection and branching.
We now present the additional cases of the proof.

\begin{proof}[Proof (Extension of \thmref{t:typecore} with Selection/Branching)]
The proof is as before.
We need to consider two additional cases for Part~(1) of the 
theorem:
\begin{enumerate}
	\item Case $P = \bbra{u_i}{l_j:P_j}_{j \in I}$. For this case
	Rule~\textsc{Bra} can be applied:
	\begin{align}
		\label{pt:braitr}
		\AxiomC{$\forall j \in I$}
		\AxiomC{$\Gamma;\Lambda;\Delta \cat u_i:S_j \proves P_j \hastype \Proc$}
		\LeftLabel{\scriptsize Bra}
		\BinaryInfC{$\Gamma;\Lambda;\Delta \cat
						u_i:\btbra{l_j:S_j}_{j \in I} \proves
						\bbra{u_i}{l_j:P_j}_{j \in I} \hastype \Proc$}
		\DisplayProof
	\end{align}

	Let $\widetilde x = \fv{P}$ and $\widetilde x_j = \fv{P_j}$ for all $j \in I$. 
	Notice that $\widetilde x_j \subseteq \widetilde x$, for all $j \in I$.
	Also,
	$\Gamma_{j,1}  = \Gamma \setminus \widetilde x_j$ and let $\Theta_j$ be a balanced environment
	such that $\dom{\Theta_j}=\{\prop_{k+1},\ldots,\prop_{k+\len{P_j}}\} \cup \{\dual{\prop_{k+2}},\ldots,\dual{\prop_{k+\len{P_j}}}\}$, $\Theta_j(\prop_{k+1})=\btinp{\widetilde M_j} \tinact$ with
	$\widetilde M_j = (\Gt{\Gamma},\Gt{\Lambda})(\widetilde x_j)$. Then,
	by IH on the assumption of \eqref{pt:braitr} we have for all $j \in I$:
	\begin{align}
	\label{eq:bra-ih}
			\Gt{\Gamma \setminus \widetilde x_j};\es;\Gt{\Delta \cat u_i:S_j} \cat
			\Theta_j \proves \B{k+1}{\tilde x_j}{P_j} \hastype \Proc
	\end{align}

	We define $\Theta = \prop_k:\btinp{\widetilde M}\tinact$ with
	$(\Gt{\Gamma},\Gt{\Lambda})(\widetilde x) = \widetilde M$.
	By construction, $\Theta$ is balanced.

By \defref{def:sizeproc-bra},
	$\len{P}=1$ and so $\text{dom}(\Theta)=\{c_k\}$.
		Let $\widetilde y^u_j = (y^u_1,\ldots,y^u_{\len{\Gt{S_j}}})$.
	By the function defined in \tabref{t:bdown-selbra} we have:
	\begin{align*}
	\B{k}{\tilde x}{\bbra{u_i}{l_j:P_j}_{j \in I}} =
		\propinp{k}{\widetilde x} \bbra{u_i}{l_j :
  		 \bbout{u_i}{N_{u,j}}\inact}_{j \in I}
	\end{align*}
	\noindent where $N_{u,j} = \abs{\widetilde y^u_j : \Gt{S_j}}{\news{\widetilde \prop_j}
 		 (\propout{k+1}{\widetilde x} \inact
  		\Par
  		\B{k+1}{\tilde x}{P_j \subst{y^u_1}{u_i}})}$
  		and
	$\widetilde \prop_j = (\prop_{k+1},\ldots,\prop_{k+\len{P_j}})$. Let $\Gamma_1 = \Gamma \setminus \widetilde x$. 
	We shall prove the following judgment:
	\begin{align*}
		\Gt{\Gamma_1};\es;\Gt{\Delta \cat
						u_i:\btbra{l_j:S_j}_{j \in I}} \cat \Theta \proves
						\B{k}{\tilde x}{\bbra{u_i}{l_j:P_j}_{j \in I}}
						\hastype \Proc
	\end{align*}
	We define $\Theta_{j,1} = \Theta_j \cat
	\dual{\prop_{k+1}}:\btout{\widetilde M}\tinact$.
	We use some auxiliary derivations for all $j \in I$:

	\begin{align}
		\label{pt:bra-psend1}
		\AxiomC{}
		\LeftLabel{\scriptsize Nil}
		\UnaryInfC{$\Gt{\Gamma};\es;\es \proves \inact \hastype \Proc$}
		\LeftLabel{\scriptsize End}
		\UnaryInfC{$\Gt{\Gamma};\es;\dual{\prop_{k+1}}:\tinact \proves \inact \hastype \Proc$}
		\AxiomC{}
		\LeftLabel{\scriptsize PolyVar}
		\UnaryInfC{$\Gt{\Gamma};\Gt{\Lambda};\es
						\proves \widetilde x \hastype \widetilde M$}
		\LeftLabel{\scriptsize PolySend}
		\BinaryInfC{$\Gt{\Gamma};\Gt{\Lambda};\dual{\prop_{k+1}}:
							\btout{\widetilde M} \tinact \proves
						\propout{k+1}{\widetilde x} \inact \hastype \Proc$}
		\DisplayProof
	\end{align}

	\begin{align}
		\label{pt:bra-par2}
		\AxiomC{\eqref{pt:bra-psend1}}
		\AxiomC{\eqref{eq:bra-ih}}
		\LeftLabel{\scriptsize (\lemref{lem:subst}) with $\sigma$}
		\UnaryInfC{$\Gt{\Gamma \setminus \widetilde x_j};\es;\Gt{\Delta} \cat \widetilde y^u_j : \Gt{S_j} \proves
						\B{k+1}{\tilde x}{P_j \subst{y^u_1}{u_i}} \hastype \Proc$}
		\LeftLabel{\scriptsize (\lemref{lem:weaken}) with $\tilde x_j$}
		\UnaryInfC{$\Gt{\Gamma};\es;\Gt{\Delta} \cat \widetilde y^u_j : \Gt{S_j} \cat \Theta_j \proves
						\B{k+1}{\tilde x}{P_j \subst{y^u_1}{u_i}} \hastype \Proc$}
		\LeftLabel{\scriptsize Par}
		\BinaryInfC{$\Gt{\Gamma};\Gt{\Lambda};
							\Gt{\Delta} \cat \widetilde y^u_j : \Gt{S_j}
							\cat \Theta_{j,1}
							\proves
							\propout{k+1}{\widetilde x} \inact
  			\Par
  			\B{k+1}{\tilde x}{P_j \subst{y^u_1}{u_i}} \hastype \Proc$}
  		\LeftLabel{\scriptsize PolyResS}
  		\UnaryInfC{$\Gt{\Gamma};\Gt{\Lambda};
							\Gt{\Delta} \cat \widetilde y^u_j : \Gt{S_j}
							\proves
							\news{\widetilde \prop_j}
							(\propout{k+1}{\widetilde x} \inact
  							\Par
  			\B{k+1}{\tilde x}{P_j \subst{y^u_1}{u_i}}) \hastype \Proc$}
  		\DisplayProof
	\end{align}
	\noindent where $\sigma = \subst{\widetilde y^u_j}{\widetilde u}$ with
	$\widetilde u = (u_i,\ldots,u_{i+\len{\Gt{S_j}}-1})$.

	\begin{align}
		\label{pt:braabs}
		\AxiomC{\eqref{pt:bra-par2}}
		\AxiomC{}
		\LeftLabel{\scriptsize PolySess}
		\UnaryInfC{$\Gt{\Gamma};\es;\widetilde y_j : \Gt{S_j}
						\proves \widetilde y_j \hastype \Gt{S_j}$}
		\LeftLabel{\scriptsize PolyAbs}
		\BinaryInfC{$\Gt{\Gamma};\Gt{\Lambda};\Gt{\Delta} 		\proves\abs{\widetilde y^u_j : \Gt{S_j}}
			\news{\widetilde \prop_j}(
 		 	{\propout{k+1}{\widetilde x} \inact
  			\Par
  			\B{k+1}{\tilde x}{P_j \subst{y^u_1}{u_i}})} \hastype \Proc$}
  			\DisplayProof
	\end{align}

	\begin{align}
	\label{pt:brares}
		\AxiomC{}
		\LeftLabel{\scriptsize Nil}
		\UnaryInfC{$\Gt{\Gamma};\es;\es \proves \inact$}
		\LeftLabel{\scriptsize End}
		\UnaryInfC{$\Gt{\Gamma};\es;u_i:\tinact \proves \inact$}
		\AxiomC{\eqref{pt:braabs}}
		\LeftLabel{\scriptsize Send}
		\BinaryInfC{\begin{tabular}{c}
					$\Gt{\Gamma};\Gt{\Lambda};\Gt{\Delta}
					\cat u_i:\btout{\lhot{\Gt{S_j}}} \tinact
					\proves$
					$\bbout{u_i}
					{N_{u,j}}\inact$
					\end{tabular}}
		\DisplayProof
	\end{align}

		Here we may notice that by \defref{def:typesdecompenv} and
	\defref{def:typesdecomp-bra} we have:
	$$\Gt{u_i:\btbra{l_j : S_j}_{j \in I}} = u_i:\btbra{l_i:\btout{\lhot{\Gt{S_j}}} \tinact}_{j \in I}$$

	\begin{align}
		\label{pt:bra-bra}
		\AxiomC{$\forall j \in I$}
		\AxiomC{\eqref{pt:brares}}
		\LeftLabel{\scriptsize Bra}
		\BinaryInfC{\begin{tabular}{c}
						$\Gt{\Gamma};\Gt{\Lambda};\Gt{\Delta \cat
						u_i:\btbra{l_j:S_j}_{j \in I}} \proves$
						$\bbra{u_i}{l_j :
  						\bbout{u_i}
						{N_{u,j}}\inact}_{j \in I} 						\hastype \Proc$
  						\end{tabular}}
		\LeftLabel{\scriptsize End}
		\UnaryInfC{\begin{tabular}{c}
	$\Gt{\Gamma};\Gt{\Lambda};\Gt{\Delta \cat
						u_i:\btbra{l_j:S_j}_{j \in I}} \cat \prop_k : \tinact \proves$ \\
				$\bbra{u_i}{l_j :
  						\bbout{u_i}
						{N_{u,j}}\inact}_{j \in I} 						\hastype \Proc$
						\end{tabular}}
  		\DisplayProof
	\end{align}

	The following tree proves this case:
		\begin{align*}
	\label{pt:selitr}
		\AxiomC{\eqref{pt:bra-bra}}
		\AxiomC{}
		\LeftLabel{\scriptsize PolyVar}
		\UnaryInfC{$\Gt{\Gamma_1};\Gt{\Lambda};\es \proves \widetilde x
					\hastype \widetilde M$}
		\LeftLabel{\scriptsize PolyRcv}
		\BinaryInfC{
$\Gt{\Gamma_1};\es;\Gt{\Delta \cat
						u_i:\btbra{l_j:S_j}_{j \in I}}
				\cat \Theta
				\proves
						\B{k}{\tilde x}{\bbra{u_i}{l_j:P_j}_{j \in I}} \hastype \Proc$
					}
		\DisplayProof
	\end{align*}


	\item Case $P=\bsel{u_i}{l_j}P'$. For this case Rule~\textsc{Sel} can
	be applied:
	\begin{align}
		\AxiomC{$\Gamma;\Lambda;\Delta \cat u_i: S_j \proves P' \hastype \Proc$}
		\AxiomC{$j \in I$}
		\LeftLabel{\scriptsize Sel}
		\BinaryInfC{$\Gamma;\Lambda;\Delta\cat u_i:\btsel{l_j:S_j}_{j \in I}
						\proves \bsel{u_i}{l_j}P' \hastype \Proc$}
		\DisplayProof
	\end{align}

	Let $\widetilde x = \fv{P'}$ and $\Gamma_1 = \Gamma \setminus
	\widetilde x$. Also, let $\Theta_1$ be a balanced environment such that
	$\dom{\Theta_1}=\{\prop_{k+2},\ldots,\prop_{k+1+\len{P'}}\}
	\cup \{\dual{\prop_{k+3}},\ldots,\dual{\prop_{k+1+\len{P'}}}\}$ and
	$\Theta_1(\prop_{k+2})=\btinp{\widetilde M} \tinact$ where
	$\widetilde M = (\Gt{\Gamma},\Gt{\Lambda})(\widetilde x)$. Then,
	by IH on the first assumption of \eqref{pt:selitr} we have:
	\begin{align}
		\label{eq:sel-ih}
		\Gt{\Gamma_1};\es;\Gt{\Delta \cat u_i:S_j} \cat \Theta_1 \proves
		\B{k+2}{\tilde x}{P'} \hastype \Proc
	\end{align}

	We define $\Theta = \Theta_1 \cat \Theta'$ where:
	\begin{align*}
		\Theta' = \prop_k:\btinp{\widetilde M} \tinact \cat
					\prop_{k+1}:\btinp{\lhot{\Gt{\dual{S_j}}}}\tinact \cat
					\dual{\prop_{k+1}}:\btout{\lhot{\Gt{\dual{S_j}}}}\tinact \cat
					\dual{\prop_{k+2}}:\btout{\widetilde M} \tinact
	\end{align*}
	By \defref{def:sizeproc-bra}, $\len{P}=\len{P'}+2$, so
	$$\dom{\Theta}=\{\prop_k,\prop_{k+1},\ldots,\prop_{k+\len{P}-1}\}
	\cup \{\dual{\prop_{k+1}},\ldots,\dual{\prop_{k+\len{P}-1}}\}$$
	By construction $\Theta$ is balanced since $\Theta(\prop_{k+2}) \dualof \Theta(\dual{\prop_{k+2}})$ and
  	 $\Theta_1$ and $\Theta'$ are balanced.
	Let $\widetilde y = (y_1,\ldots,y_{\len{\Gt{S_j}}})$.
	By \tabref{t:bdown-selbra}:
	\begin{align*}
	\B{k+2}{\tilde x}{\bsel{u_i}{l_j}P'}=
		\propinp{k}{\widetilde x}\propout{k+1}{\abs{\widetilde y}{\bsel{u_i}
		{l_j}{\binp{u_i}{z}\propout{k+2}{\widetilde x}\appl{z}{\widetilde y}}}}
		\inact
  		\Par \\
  		\news{\widetilde u:\Gt{S_j}}{(\propinp{k+1}{y}{\appl{y}
  		{\widetilde{\dual u}}} \Par
  		\B{k+2}{\tilde x}{P'\incrname{u}{i}})}
	\end{align*}
	\noindent where $\widetilde u=(u_{i+1},\ldots,u_{i+\len{\Gt{S_j}}})$ and
  $\widetilde {\dual u}=(\dual {u_{i+1}},\ldots,\dual {u_{i+\len{\Gt{S_j}}}})$.
	We shall prove the following judgment:
	\begin{align*}
		\Gt{\Gamma};\es;\Gt{\Delta\cat u_i:\btsel{l_j:S_j}_{j \in I}}
		\cat \Theta
		\proves
		\B{k+2}{\tilde x}{\bsel{u_i}{l_j}P'} \hastype \Proc
	\end{align*}

	For this, we will use the following auxiliary derivations for the left-hand side component:
	\begin{align}
	\label{pt:sel-polyapp}
	\AxiomC{}
		\LeftLabel{\scriptsize LVar}
		\UnaryInfC{$\Gt{\Gamma};z:\lhot{\Gt{\dual {S_j}}};\es \proves
						z \hastype \lhot{\Gt{\dual {S_j}}}$}
		\AxiomC{}
		\LeftLabel{\scriptsize PolySess}
		\UnaryInfC{
		\begin{tabular}{c}
		$\Gt{\Gamma};\es;\widetilde y : \Gt{\dual {S_j}}
					\proves$ \\ $\widetilde y \hastype \Gt{\dual {S_j}}$
					\end{tabular}}
		\LeftLabel{\scriptsize PolyApp}
		\BinaryInfC{$\Gt{\Gamma};z:\lhot{\Gt{\dual {S_j}}};
					\widetilde y : \Gt{\dual {S_j}}
					\proves
					\appl{z}{\widetilde y} \hastype \Proc$}
	\LeftLabel{\scriptsize End}
	\UnaryInfC{$\Gt{\Gamma};z:\lhot{\Gt{\dual {S_j}}};
					\widetilde y : \Gt{\dual {S_j}} \cat
					\dual{\prop_{k+2}}:\tinact
					\proves
					\appl{z}{\widetilde y} \hastype \Proc$}
		\DisplayProof
	\end{align}
	\begin{align}
		\label{pt:sel-send2}
		\AxiomC{\eqref{pt:sel-polyapp}}
		\AxiomC{$\Gt{\Gamma};\Gt{\Lambda};\es \proves
					\widetilde x \hastype \widetilde M$}
		\LeftLabel{\scriptsize PolySend}
		\BinaryInfC{$\Gt{\Gamma};\Gt{\Lambda} \cat z:\lhot{\Gt{\dual {S_j}}};
					\widetilde y : \Gt{\dual {S_j}} \cat
					\dual{\prop_{k+2}}:\btout{\widetilde M}\tinact \proves
					\propout{k+2}{\widetilde x}\appl{z}{\widetilde y} \hastype
					\Proc$}
		\LeftLabel{\scriptsize End}
		\UnaryInfC{$\Gt{\Gamma};\Gt{\Lambda} \cat z:\lhot{\Gt{\dual {S_j}}};
					\widetilde y : \Gt{\dual{S_j}} \cat
					\dual{\prop_{k+2}}:\btout{\widetilde M}\tinact \cat u_i: \tinact \proves
					\propout{k+2}{\widetilde x}\appl{z}{\widetilde y} \hastype
					\Proc$}
		\DisplayProof
	\end{align}
	\begin{align}
		\label{pt:sel-sel}
		\AxiomC{\eqref{pt:sel-send2}}
		\AxiomC{$\Gt{\Gamma};z:\lhot{\Gt{\dual {S_j}}};\es \proves
					z \hastype \lhot{\Gt{\dual {S_j}}}$}
		\LeftLabel{\scriptsize Rcv}
		\BinaryInfC{\begin{tabular}{c}
	$\Gt{\Gamma};\Gt{\Lambda};
					u_i:\btinp{\lhot{\Gt{\dual {S_j}}}}\tinact \cat
					\widetilde y : \Gt{\dual {S_j}} \cat
					\dual{\prop_{k+2}}:\btout{\widetilde M} \tinact
					\proves$ \\
					$\binp{u_i}{z}\propout{k+2}
						{\widetilde x}\appl{z}{\widetilde y} \hastype
					\Proc$
					\end{tabular}}
		\AxiomC{$j \in I$}
		\LeftLabel{\scriptsize Sel}
		\BinaryInfC{\begin{tabular}{c}
		$\Gt{\Gamma};\Gt{\Lambda};
					u_i:\btsel{\btinp{\lhot{\Gt{\dual {S_j}}}}\tinact} \cat
					\widetilde y : \Gt{\dual {S_j}} \cat
					\dual{\prop_{k+2}}:\btout{\widetilde M} \tinact
					\proves$ \\
					$\bsel{u_i}{l_j}{\binp{u_i}{z}\propout{k+2}
						{\widetilde x}\appl{z}{\widetilde y}} \hastype
					\Proc$
					\end{tabular}}
		\DisplayProof
	\end{align}

	\begin{align}
		\label{pt:sel-abs}
		\AxiomC{\eqref{pt:sel-sel}}
		\AxiomC{}
		\LeftLabel{\scriptsize PolyVar}
		\UnaryInfC{$\Gt{\Gamma};\es;\widetilde y : \Gt{\dual {S_j}} \proves
					\widetilde y \hastype \Gt{\dual {S_j}}$}
		\LeftLabel{\scriptsize Abs}
		\BinaryInfC{$\Gt{\Gamma};\Gt{\Lambda};
					\dual{\prop_{k+2}}:\btout{\widetilde M}\tinact
					\proves
					\abs{\widetilde y}															{\bsel{u_i}	{l_j}{\binp{u_i}{z}\propout{k+2}{\widetilde 					x}\appl{z}{\widetilde y}}} \hastype
					\lhot{\Gt{\dual {S_j}}}
					$}
		\DisplayProof
	\end{align}

	\begin{align}
		\label{pt:sel-psend1}
		\AxiomC{}
		\LeftLabel{\scriptsize Nil}
		\UnaryInfC{$\Gt{\Gamma};\es;\es \proves \inact$}
		\LeftLabel{\scriptsize End}
		\UnaryInfC{$\Gt{\Gamma};\es;u_i:\tinact \proves \inact$}
		\AxiomC{\eqref{pt:sel-abs}}
		\LeftLabel{\scriptsize Send}
		\BinaryInfC{\begin{tabular}{c}
					$\Gt{\Gamma};
					\Gt{\Lambda};
					u_i:\btsel{\btinp{\lhot{\Gt{\dual {S_j}}}}\tinact}\cat
					\dual{\prop_{k+1}}:\btout{\lhot{\Gt{\dual {S_j}}}}\tinact \cat$ \\
					$\dual{\prop_{k+2}}:\btout{\widetilde M}\tinact \proves$
					$\propout{k+1}{\abs{\widetilde y}{\bsel{u_i}
					{l_j}{\binp{u_i}{z}\propout{k+2}
					{\widetilde x}\appl{z}{\widetilde y}}}}\inact \hastype \Proc$
					\end{tabular}}
	\LeftLabel{\scriptsize End}
	\UnaryInfC{\begin{tabular}{c}
					$\Gt{\Gamma};
					\Gt{\Lambda};
					u_i:\btsel{\btinp{\lhot{\Gt{\dual {S_j}}}}\tinact}\cat
					\dual{\prop_{k+1}}:\btout{\lhot{\Gt{\dual {S_j}}}}\tinact \cat$ \\
					$\dual{\prop_{k+2}}:\btout{\widetilde M}\tinact
					\cat
					\prop_k:\tinact \proves$
					$\propout{k+1}{\abs{\widetilde y}{\bsel{u_i}
					{l_j}{\binp{u_i}{z}\propout{k+2}
					{\widetilde x}\appl{z}{\widetilde y}}}}\inact \hastype \Proc$
					\end{tabular}}
		\DisplayProof
	\end{align}

	\begin{align}
		\label{pt:sel-rcv}
		\AxiomC{\eqref{pt:sel-psend1}}
		\AxiomC{}
		\LeftLabel{\scriptsize PolyVar}
		\UnaryInfC{$\Gt{\Gamma};\Gt{\Lambda};\es \proves
					\widetilde x \hastype \widetilde M$}
		\LeftLabel{\scriptsize PolyRcv}
		\BinaryInfC{\begin{tabular}{c}$\Gt{\Gamma_1};\es;
					u_i:\btsel{\btinp{\lhot{\Gt{\dual {S_j}}}}\tinact} \cat
					(\Theta' \setminus \prop_{k+1})
					\proves$ \\
					$\propinp{k}{\widetilde x}\propout{k+1}{\abs{\widetilde y}
					{\bsel{u_i}{l_j}{\binp{u_i}{z}
					\propout{k+2}{\widetilde x}
					\appl{z}{\widetilde y}}}}\inact \hastype \Proc$
					\end{tabular}}
		\DisplayProof
	\end{align}
	
	The following auxiliary derivations are used to type the right-hand side component:

	\begin{align}
		\label{pt:sel-varapp2}
		\AxiomC{}
		\LeftLabel{\scriptsize LVar}
		\UnaryInfC{
		\begin{tabular}{c}
$\Gt{\Gamma_1};y:\lhot{\Gt{\dual {S_j}}};\es
					\proves$ \\ 
					$y \hastype \lhot{\Gt{\dual {S_j}}}$
					\end{tabular}}
		\AxiomC{}
		\LeftLabel{\scriptsize PolySess}
		\UnaryInfC{$\Gt{\Gamma_1};\es;\widetilde{\dual u}:\Gt{\dual {S_j}}
					\proves
						\widetilde{\dual u} \hastype \Gt{\dual {S_j}}$}
		\LeftLabel{\scriptsize App}
		\BinaryInfC{$\Gt{\Gamma_1};y:\lhot{\Gt{\dual {S_j}}};
						\widetilde{\dual u}:\Gt{\dual {S_j}} \proves
						\appl{y}{\widetilde{\dual u}} \hastype \Proc
						$}
		\DisplayProof
	\end{align}

	\begin{align}
		\label{pt:sel-rcv2}
		\AxiomC{\eqref{pt:sel-varapp2}}
		\AxiomC{}
		\LeftLabel{\scriptsize LVar}
		\UnaryInfC{$\Gt{\Gamma};y:\lhot{\Gt{\dual {S_j}}};\es \proves
						y \hastype \lhot{\Gt{\dual {S_j}}}$}
		\LeftLabel{\scriptsize Rcv}
		\BinaryInfC{$\Gt{\Gamma_1};\es;
					\prop_{k+1}:\btinp{\lhot{\Gt{\dual {S_j}}}}\tinact \cat
					\widetilde{\dual u}:\Gt{\dual {S_j}}
					\proves
					\propinp{k+1}{y}{\appl{y}{\widetilde{\dual u}}} \hastype \Proc$}
		\DisplayProof
	\end{align}

	\begin{align}
		\label{pt:sel-ress2}
		\AxiomC{\eqref{pt:sel-rcv2}}
		\AxiomC{\eqref{eq:sel-ih}}
		\LeftLabel{\scriptsize (\lemref{lem:subst}) with $\sigma$}
		\UnaryInfC{$\Gt{\Gamma};\es;\Gt{\Delta \cat u_{i+1}:S_j} \proves
					\B{k+2}{\tilde x}{P'\incrname{u}{i}} \hastype \Proc$}
		\LeftLabel{\scriptsize (\lemref{lem:strength}) with $\tilde x$}
		\UnaryInfC{$\Gt{\Gamma_1};\es;\Gt{\Delta \cat u_{i+1}:S_j}
					 \proves
					\B{k+2}{\tilde x}{P'\incrname{u}{i}} \hastype \Proc
					$}
		\LeftLabel{\scriptsize Par}
		\BinaryInfC{\begin{tabular}{c}
					$\Gt{\Gamma_1};\es;\Gt{\Delta \cat u_{i+1}:S_j}  \cat
					\prop_{k+1}:\btinp{\lhot{\Gt{\dual {S_j}}}}\tinact
					\proves$ \\
					$\propinp{k+1}{y}{\appl{y}
  					{\widetilde{\dual u}}} \Par
  					\B{k+2}{\tilde x}{P'\incrname{u}{i}} \hastype \Proc
					$
					\end{tabular}}
		\LeftLabel{\scriptsize PolyResS}
		\UnaryInfC{\begin{tabular}{c}
					$\Gt{\Gamma_1};\es;\Gt{\Delta} \cat
					\Theta_1 \cat
					\prop_{k+1}:\btinp{\lhot{\Gt{\dual {S_j}}}}\tinact
					\proves$ \\
					$\news{\widetilde u:\Gt{S_j}}{(\propinp{k+1}{y}{\appl{y}
  					{\widetilde{\dual u}}} \Par
  					\B{k+2}{\tilde x}{P'\incrname{u}{i}})} \hastype \Proc
  					$
  					\end{tabular}}
		\DisplayProof
	\end{align}
	\noindent where $\sigma = \subst{\widetilde u}{\widetilde n}$ with
	$\widetilde n = (u_i,\ldots,u_{i+\len{\Gt{S_j}}-1})$.
The following tree proves this case:
	\begin{align*}
		\AxiomC{\eqref{pt:sel-rcv}}
		\AxiomC{\eqref{pt:sel-ress2}}
		\LeftLabel{\scriptsize Par}
		\BinaryInfC{$\Gt{\Gamma_1};\es;\Gt{\Delta\cat
						u_i:\btsel{l_j:S_j}_{j \in I}} \cat \Theta
						\proves
					\B{k+2}{\tilde x}{\bsel{u_i}{l_j}P'} \hastype \Proc$}
		\DisplayProof
	\end{align*}
	\end{enumerate}
\end{proof}

\section{Appendix to \secref{ss:extii}}
\label{app:extii}

We use the following auxiliary lemma:

\begin{lemma}
	\label{lem:indexcor}
	Let $\widetilde z$ be tuple of channel names, $U$ a higher-order type, and $S$ a recursive 	session type.
	If $\widetilde z : \Rts{}{s}{\btout{U}S}$ and $k = f(\btout{U}S)$
	then $z_k:\trec{t}\btout{\Gt{U}}\vart{t}$.
\end{lemma}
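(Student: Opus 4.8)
The plan is to proceed by structural induction on the recursive session type $S$ appearing in $\btout{U}S$, following the definitions of $\Rts{}{s}{\cdot}$ (cf.\ \defref{def:recurdecomptypes}) and of the index function $f$ (cf.\ \defref{def:indexfunction}). Since $\btout{U}S$ is an unfolded recursive session type, its tail must eventually reach a type of the form $\trec{t}{S'}$; the function $\Rts{}{s}{\cdot}$ skips over the leading session prefixes (here $\btout{U}$ and whatever prefixes $S$ contributes) until it hits $\trec{t}{S'}$, at which point it returns $\Rt{S'}$. In parallel, $f(\btout{U}S) = f'_1(S)$ unwinds the same prefixes, counting them, until it too reaches $\trec{t}{S'}$ and returns $\len{\Rt{S'}} - l + 1$ where $l$ is the number of prefixes skipped.

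Concretely, first I would set up the auxiliary claim in a form amenable to induction: for every unfolded recursive session type $T$ and every $l \geq 0$, if $\widetilde z : \Rt{S'}$ where $\trec{t}{S'}$ is the ``core'' reached by stripping prefixes from $T$, and $k = f'_l(T)$, then $z_k : \trec{t}{\btout{\Gt{U_0}}}\vart{t}$ where $U_0$ is the payload of the $l$-th prefix counting from the position where stripping began. Then I would instantiate this with $T = \btout{U}S$ and $l = 1$ (so that the first prefix stripped, $\btout{U}$, is the one whose payload $U$ determines the type of $z_k$ with $k = f(\btout{U}S)$). The base case of the induction is $T = \trec{t}{S'}$: then $f'_l(T) = \len{\Rt{S'}} - l + 1$ and, reading off the definition of $\Rt{\cdot}$, the $(\len{\Rt{S'}} - l + 1)$-th entry of $\Rt{S'}$ is precisely $\trec{t}{\btout{\Gt{U_0}}}\vart{t}$ for the appropriate payload $U_0$, which for $l=1$ is $U$. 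The inductive cases $T = \btout{U'}S''$ and $T = \btinp{U'}S''$ follow by applying the IH with $l+1$ and observing that $\Rts{}{s}{T} = \Rts{}{s}{S''}$, so the tuple typing $\widetilde z$ is unchanged.

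The main obstacle I expect is bookkeeping the off-by-one relationship between the counting done by $f'_l$ and the indexing of the tuple $\widetilde z$ produced by $\Rt{\cdot}$ and $\Rts{}{s}{\cdot}$. One has to be careful that the index $k = f(\btout{U}S)$ really points at the entry of $\Rts{}{s}{\btout{U}S}$ that was generated \emph{from the output prefix $\btout{U}$} and not from some later prefix; this is exactly the content of \exref{ex:fs}, and the cleanest way to handle it is to verify that stripping the leading $\btout{U}$ advances the counter in $f'$ by one while leaving $\Rts{}{s}{\cdot}$ intact, then note that $\Rt{S'}$ lists its entries in the \emph{same order} as the prefixes of $S'$ appear, so that ``position within the body'' (what $f$ computes) coincides with ``position within the list $\Rt{S'}$''. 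Once this alignment lemma is in place, the statement is immediate. The remaining steps—unfolding $\Gt{\btout{U}S}$-style definitions and matching them against the minimal recursive type grammar of \defref{d:mtypesi-rec}—are routine.
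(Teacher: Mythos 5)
You should first note that the paper states \lemref{lem:indexcor} as an auxiliary lemma and gives no proof of it, so there is nothing to compare against directly; your proposal has to stand on its own. Your overall strategy is the natural (and essentially the only) one: $f'_l$ and $\Rts{}{s}{\cdot}$ strip the same leading prefixes, $\Rt{S'}$ lists one minimal recursive type per prefix of the body $S'$ \emph{in the same order}, and, because $\btout{U}S$ is a derived unfolding of its core $\trec{t}{S'}$, its $p+1$ leading prefixes are the last $p+1$ prefixes of $S'$, so the head $\btout{U}$ sits at position $n-p$ of the body, where $n = \len{\Rt{S'}}$ and $p$ is the number of prefixes of $S$ before the core. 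That last hypothesis should be made explicit: for an arbitrary type of shape $\btout{U}S$ the alignment, and hence the lemma, fails.

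The concrete problem is that your auxiliary claim does not survive its own induction, and your instantiation contradicts your own computation of $f$. You correctly observe that $f(\btout{U}S) = f'_0(\btout{U}S) = f'_1(S)$, but then instantiate the claim at $T = \btout{U}S$ with $l=1$, which computes $f'_1(\btout{U}S) = f'_2(S) \neq f(\btout{U}S)$; the instantiation must be $T = \btout{U}S$ with $l=0$ (or $T=S$ with $l=1$). More importantly, the invariant ``$z_k$ has the type of the $l$-th prefix counting from where stripping began'' is not preserved by the recursive step $f'_l(\btout{U'}S'') = f'_{l+1}(S'')$: the computed index $k$ is unchanged by this step, while ``the $l$-th prefix'' and ``the $(l+1)$-th prefix'' name different prefixes, so the IH at $l+1$ does not deliver what you need at $l$. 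Concretely, for $T = \btout{U}\trec{t}{S'}$ and $l=1$ your claim yields $k = f'_1(T) = f'_2(\trec{t}{S'}) = n-1$, yet $\btout{U}$ is the $n$-th prefix of the body. The invariant that does compose is purely arithmetic: if $T$ has $p$ leading prefixes before $\trec{t}{S'}$, then $f'_l(T) = n - p - l + 1$, i.e., $k$ points at the position $l$ steps \emph{above} the head of $T$ within the body; stripping one prefix decreases $p$ by one and increases $l$ by one, leaving $n-p-l+1$ fixed, and the base case is the clause $f'_l(\trec{t}{S'}) = n - l + 1$. Instantiating with $T = \btout{U}S$, $l=0$ and $p+1$ leading prefixes gives $k = n-p$, the position of $\btout{U}$ in the body, and the $(n-p)$-th entry of $\Rts{}{s}{\btout{U}S} = \Rt{S'}$ is $\trec{t}{\btout{\Gt{U}}}\vart{t}$ by the definition of $\Rt{\cdot}$. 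With that repair your argument goes through.
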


\subsection{Proof of \thmref{t:typerec}}
\label{app:typerec}

\begin{proof}
  By mutual induction on the structure of $P$ and $V$. 
   Here, we analyze only Part~(1) of the theorem, as
   Part~(2) and Part~(3) are proven similarly: 

 \begin{enumerate}
 \item By assumption $\Gamma;\Lambda;\Delta \cat \envR \proves P \hastype \Proc$. There are four cases, depending on the shape of $P$. We consider two representative cases. We omit other cases as they are similar. 
 		\begin{enumerate}
 		\item Case $P = \bout{r}{V}P'$, when $V$ is not a variable.
 We consider case: $r:S \in \envR$.  For this case Rule~\textsc{Send} can be applied:
 \begin{align}
    \label{pt:r-outputitr}
    \AxiomC{$\Gamma;\Lambda_1;\Delta_1 \cat {\envR}_1 \proves P' \hastype \Proc$}
    \AxiomC{$\Gamma;\Lambda_2;\Delta_2 \cat {\envR}_2 \proves V \hastype U$}
    \AxiomC{$r:S \in {\envR}_1 \cat {\envR}_2$}
    \LeftLabel{\scriptsize Send}
    \TrinaryInfC{$\Gamma;\Lambda_1 \cat \Lambda_2;
    				\Delta_1 \cat \Delta_2 \cat
    				(({\envR}_1\cat{\envR}_2)
    			  \setminus \{ r:S \})
                  \cat r:\btout{U}S
                  \proves \bout{r}{V}P' \hastype \Proc$}
    \DisplayProof
  \end{align}
  Let $\widetilde w = \fv{P'}$ and $\degree = \len{V}$. Let $\Theta_1$,
  $\Theta_2$, $\Theta'$ and $\Theta$ be defined as in the corresponding case of
  \thmref{t:typecore} proof. Also, let
  $\Gamma'_1 = \Gamma \setminus \widetilde w$.
  We define:
  \begin{align}
  	\label{eq:prod_env}
  	{\envPropR}_i = \prod_{r \in {\dom{{\envR}_i}}} \prop^r :
  	\chtype{\lhot{\Rts{}{s}{{\envR}_i(r)}}} \ \text{for} \ i \in \{1,2\}
  \end{align}
  Then, by IH on the
  first assumption of \eqref{pt:r-outputitr} we have:
 	\begin{align}
 	\label{eq:r-case1-ih}
 		\Gt{\Gamma'_1}\cat {\envPropR}_1;\es;\Gt{\Delta_1}\cat\Theta_1
 		\proves \B{k+\degree+1}{\tilde w}{P'} \hastype \Proc
 	\end{align}
 	Let $\Gamma'_2 = \Gamma \setminus \widetilde y$.
 	Then, by IH (Part 2) on the
  second assumption of \eqref{pt:r-outputitr} we have:
  \begin{align}
  	\Gt{\Gamma'_2}\cat{\envPropR}_2;\Gt{\Lambda_2};\Gt{\Delta_2}\cat \Theta_2
  	\proves \V{k+1}{\tilde y}{V} \hastype \Gt{U}
  \end{align}
 		
	By \tabref{t:bdownrecur} we have:
	\begin{align*}
		\B{k}{\tilde x}{P} = \propinp{k}{\widetilde x}
      		\bbout{\prop^r}{N_V} & \inact \Par
      \B{k+\degree+1}{\tilde w}{P'} \\
      & \text{where} \ N_V = \abs{\widetilde z}
      		{\bout{z_{f(S)}}{\V{k+1}{\tilde y}{V}}}\propout{k+\degree+1}
      		{\widetilde w} \binp{\prop^r}{b}
			(\appl{b}{\widetilde z})
	\end{align*}

	We notice that $r \in \rfn{V}\cat\rfn{P}$ since $r$ has recursive type $S$.
  	Hence, by \eqref{eq:prod_env} we know $(\envPropR_1,\envPropR_2)(c^r)=\chtype{\lhot{\Rts{}{s}{S}}}$.
  	Further, we know that $S = \btout{U}S'$ and by \defref{def:decomptyp-rec},
  	$\Rts{}{s}{S}=\Rts{}{s}{S'}$.
  	So we define
  	$\envPropR = {\envPropR}_1 \cat {\envPropR}_2$. Let $\Gamma_1 = \Gamma \setminus \widetilde x$ where $\widetilde x = \fv{P}$. 
	We shall prove the following judgment:
	\begin{align*}
		\Gt{\Gamma_1} \cat \envPropR;\es; \Gt{\Delta_1 \cat \Delta_2} 
		\cat \Theta
                  \proves \B{k}{\tilde x}{\bout{r}{V}P'} \hastype \Proc
	\end{align*}

	We use auxiliary derivations:
	\begin{align}
		\label{pt:r-send3}
		\AxiomC{}
		\LeftLabel{\scriptsize LVar}
		\UnaryInfC{\begin{tabular}{c}$\Gt{\Gamma_1} \cat \envPropR;b:\lhot{\Rts{}{s}{S}};\es \proves$ \\
						$b \hastype \lhot{\Rts{}{s}{S}}$
						\end{tabular}}
		\AxiomC{}
		\LeftLabel{\scriptsize PolySess}
		\UnaryInfC{
		\begin{tabular}{c}
		$\Gt{\Gamma_1} \cat \envPropR;\es;\widetilde z:\Rts{}{s}{S}
					\proves$ \\ $\widetilde z \hastype \Rts{}{s}{S}$
					\end{tabular}}
		\LeftLabel{\scriptsize PolyApp}
		\BinaryInfC{$\Gt{\Gamma_1} \cat \envPropR;b:\lhot{\Rts{}{s}{S}};
					\widetilde z: \Rts{}{s}{S} \proves
						\appl{b}{\widetilde z} \hastype \Proc$}
		\DisplayProof
	\end{align}
	
	\begin{align}
		\label{pt:r-send2}
	\AxiomC{\eqref{pt:r-send3}}
		\AxiomC{}
	\LeftLabel{\scriptsize Sh}
		\UnaryInfC{\begin{tabular}{c}$\Gt{\Gamma} \cat \envPropR;\es;\es \proves$  \\
					$\prop^r \hastype \chtype{\lhot{\Rts{}{s}{S}}}$
					\end{tabular}}
		\AxiomC{}
		\LeftLabel{\scriptsize LVar}
		\UnaryInfC{\begin{tabular}{c}$\Gt{\Gamma} \cat \envPropR;b:\lhot{\Rts{}{s}{S}};\es \proves$ \\
					$b \hastype \lhot{\Rts{}{s}{S}}$
					\end{tabular}
					}
		\LeftLabel{\scriptsize Acc}
		\TrinaryInfC{$\Gt{\Gamma} \cat \envPropR;\es;\Theta' \cat
					\widetilde z: \Rts{}{s}{S} \proves
					\binp{\prop^r}{b}
					(\appl{b}{\widetilde z}) \hastype \Proc$}
		\DisplayProof
	\end{align}

	\begin{align}
		\label{pt:r-send4}
		\AxiomC{\eqref{pt:r-send2}}
		\AxiomC{}
		\LeftLabel{\scriptsize PolyVar}
		\UnaryInfC{$\Gt{\Gamma} \cat \envPropR;\Gt{\Lambda_2};\es
					\proves \widetilde w \hastype \widetilde M_1$}
		\LeftLabel{\scriptsize PolySend}
		\BinaryInfC{$\Gt{\Gamma} \cat \envPropR;\Gt{\Lambda_1};\Theta' \cat
					\widetilde z : \Rts{}{s}{S}
					\proves \propout{k+\degree+1}
      				{\widetilde w} \binp{\prop^r}{b}
					(\appl{b}{\widetilde z}) \hastype \Proc$}
		\DisplayProof
	\end{align}

	\begin{align}
		\label{pt:r-rcv1}
		\AxiomC{\eqref{pt:r-send4}}
		\AxiomC{\eqref{eq:r-case1-ih}} 
		\LeftLabel{\scriptsize (\lemref{lem:weaken}) with ${\envPropR}_1$}
		\UnaryInfC{\begin{tabular}{c}$\Gt{\Gamma'_2} \cat \envPropR;\Gt{\Lambda_2};\Theta_2
					\proves$$\V{k+1}{\tilde y}{V} \hastype \Gt{U}$
					\end{tabular}}
	\LeftLabel{\scriptsize (\lemref{lem:weaken}) with $\tilde z$}
		\UnaryInfC{$\Gt{\Gamma} \cat \envPropR;\Gt{\Lambda_2};\Theta_2
					\proves \V{k+1}{\tilde y}{V} \hastype \Gt{U}$}
		\LeftLabel{\scriptsize Send}
			\BinaryInfC{
			\begin{tabular}{c}
$\Gt{\Gamma} \cat \envPropR;\Gt{\Lambda_1 \cat \Lambda_2};
					\Gt{\Delta_2} \cat \widetilde z:\Rts{}{s}{S} \cat
					\Theta_2 \cat \Theta' \proves$ \\
					${\bout{z_{f(S)}}{\V{k+1}{\tilde y}{V}}}\propout{k+\degree+1}
      				{\widetilde w} \binp{\prop^r}{b}
					(\appl{b}{\widetilde z})$
					\end{tabular}}
		\DisplayProof
	\end{align}
	By \lemref{lem:indexcor} we know
  	that if $\widetilde z:\Rts{}{s}{S}$ then
  	$z_{f(S)}:\trec{t}\btout{\Gt{U}}\vart{t}$.

	\begin{align}
		\label{pt:r-abs}
		\AxiomC{\eqref{pt:r-rcv1}}
		\AxiomC{}
		\LeftLabel{\scriptsize PolySess}
		\UnaryInfC{$\Gt{\Gamma} \cat \envPropR;\es;\widetilde z : \Rts{}{s}{S}
					\proves \widetilde z \hastype \Rts{}{s}{S}$}
		\LeftLabel{\scriptsize PolyAbs}
		\BinaryInfC{$\Gt{\Gamma} \cat \envPropR; \Gt{\Lambda_1 \cat \Lambda_2};
					\Gt{\Delta_2} \cat \Theta_2 \cat \Theta'
			\proves N_V \hastype \lhot{\Rts{}{s}{S}}$}
		\DisplayProof
	\end{align}

	\begin{align}
		\label{pt:r-rec}
		\AxiomC{}
		\LeftLabel{\scriptsize LVar}
		\UnaryInfC{$\Gt{\Gamma} \cat \envPropR;\es;\es
					\proves \prop^r \hastype \chtype{\lhot{\Rts{}{s}{S}}})$}
	\AxiomC{}
	\LeftLabel{\scriptsize Nil}
		\UnaryInfC{$\Gt{\Gamma} \cat \envPropR;\es;\es
					\proves \inact \hastype \Proc$}
		\AxiomC{\eqref{pt:r-abs}}
		\LeftLabel{\scriptsize Req}
		\TrinaryInfC{$\Gt{\Gamma} \cat \envPropR;\Gt{\Lambda_1 \cat \Lambda_2};
					\Gt{\Delta_2} \cat \Theta_2 \cat \Theta'
					\proves \bbout{\prop^r}{N_V} \inact \hastype \Proc$}
		\DisplayProof
	\end{align}

	\begin{align}
		\label{pt:r-rcv}
		\AxiomC{\eqref{pt:r-rec}}
		\AxiomC{}
		\LeftLabel{\scriptsize PolyVar}
		\UnaryInfC{$\Gt{\Gamma} \cat \envPropR; \Gt{\Lambda_1 \cat \Lambda_2};\es
					\proves \widetilde x \hastype \widetilde M$}
		\LeftLabel{\scriptsize PolyRcv}
		\BinaryInfC{$\Gt{\Gamma_1} \cat \envPropR;\es;
					\Gt{\Delta_2} \cat \Theta_2 \cat \Theta'
					\proves \propinp{k}{\widetilde x}
      				\bbout{\prop^r}{N_V} \inact \hastype \Proc$}
      	\DisplayProof
	\end{align}

	The following tree proves this case:
	\begin{align}
		\AxiomC{\eqref{pt:r-rcv}}
      	\AxiomC{\eqref{eq:r-case1-ih}}
      	\LeftLabel{\scriptsize (\lemref{lem:weaken}) with ${\envPropR}_2$}
      	\UnaryInfC{$\Gt{\Gamma'_1}\cat {\envPropR};\es;
				\Gt{\Delta_1}\cat\Theta_1
 				\proves \B{k+\degree+1}{\tilde w}{P'} \hastype \Proc$}
      	\LeftLabel{\scriptsize (\lemref{lem:strength}) with $\tilde y$}
		\UnaryInfC{$\Gt{\Gamma_1}\cat {\envPropR};\es;
				\Gt{\Delta_1}\cat\Theta_1
 				\proves \B{k+\degree+1}{\tilde w}{P'} \hastype \Proc$}
		\LeftLabel{\scriptsize Par}
		\BinaryInfC{$\Gt{\Gamma_1} \cat \envPropR;\es; \Gt{\Delta_1 \cat \Delta_2}
		 			\cat \Theta
                  \proves \B{k}{\tilde x}{\bout{r}{V}P'} \hastype \Proc$}
		\DisplayProof
	\end{align}
	
	\item Case $P = \appl{V}{(\widetilde r, u_i)}$. We assume a certain order in the tuple $(\widetilde r, u_i)$: names in $\widetilde r$ have 
	recursive session types $\widetilde r = (r_1,\ldots,r_n) : (S_1,\ldots,S_n)$, and $u_i$ has non-recursive session type $u_i : C$. 
	 We distinguish two sub-cases: \rom{1} $V : \lhot{\widetilde S C}$ 
	 and \rom{2} $V : \shot{\widetilde S C}$. We will consider only sub-case 
	 \rom{1} since the other is similar. 
	 For this case Rule~\textsc{PolyApp} can be applied: 
	 \begin{align}
	 	\label{pt:r-app}
	 	\AxiomC{$\Gamma;\Lambda;\Delta_1 \cat {\envR}_1 \proves 
	 				V \hastype \lhot{\widetilde S C}$}
	 	\AxiomC{$\Gamma;\es;\Delta_2 \cat {\envR}_2 \proves 
	 				(\widetilde r, u_i) \hastype \widetilde S C$}
	 	\LeftLabel{\scriptsize PolyApp}
	 	\BinaryInfC{$\Gamma;\Lambda;\Delta_1 \cat \Delta_2 \cat {\envR}_1 \cat 
	 	{\envR}_2 \proves \appl{V}{(\widetilde r, u_i)}$}
	 	\DisplayProof
	 \end{align}
	 
	 Let $\widetilde x = \fv{V}$ and $\Gamma_1 \setminus \widetilde x$. 
	 Let $\widetilde x =
\fv{V}$ and let $\Theta_1$ be a balanced environment such that
$$
\dom{\Theta_1}=\{\prop_{k+1},\ldots,\prop_{k+\len{V}}\}
\cup \{\dual{\prop_{k+1}},\ldots,\dual{\prop_{k+\len{V}}}\}
$$
and $\Theta_1(\prop_{k+1}) = \btinp{\widetilde M} \tinact$
and
$\Theta_1(\dual{\prop_{k+1}}) = \btout{\widetilde M} \tinact$ where $\widetilde
M = (\Gt{\Gamma}\cat\Gt{\Lambda})(\widetilde x)$.

	We define: 
	\begin{align}
		{\envPropR}_1 = \prod_{r \in {\dom{{\envR}_1}}} c^r:\chtype{\lhot{\Rts{}{s}{{\envR}_1(r)}}}
	\end{align}
	Then, by IH (Part 2) on the first assumption of \eqref{pt:r-app}
	we have: 
	\begin{align}
		\label{eq:r-app-ih}
		\Gt{\Gamma_1}\cat {\envPropR}_1;\es; \Gt{\Delta_1}  \cat \Theta_1
		\proves \V{k+1}{\tilde x}{V}  \hastype \lhot{\Gt{\widetilde S C}}
	\end{align}
	By \defref{def:typesdecompenv}  and \defref{def:decomptyp-rec} and 
	the second assumption of \eqref{pt:r-app} we have: 
	\begin{align}
		\label{eq:r-app-ih2}
		\Gt{\Gamma};\es;\Gt{\Delta_2} \cat \Gt{{\envR}_2} \proves 
		(\widetilde r_1,\ldots,\widetilde r_n, \widetilde m):\Gt{\widetilde S C}
	\end{align}
	\noindent where 
	$\widetilde r_i = (r^i_{i},\ldots,r^i_{i+\len{\Gt{S_i}}-1})$
	for $i \in \{1,\ldots,n\}$ and $\widetilde m = (u_i,\ldots,u_{i+\len{G(C)}-1})$. 
	
	We define $\envPropR = {\envPropR}_1 \cat {\envPropR}_2$ where:
	\begin{align*}
		 {\envPropR}_2 = \prod_{r \in {\dom{{\envR}_2}}} c^r:\chtype{\lhot{\Rts{}{s}{{\envR}_2(r)}}}
	\end{align*}
	
	We define $\Theta = \Theta_1 \cat \prop_k : \btinp{\widetilde M} \tinact$.
	
	We will first consider the case where $n=3$; the proof is then generalized for any $n \geq 1$: 

\begin{itemize} 
    \item 
	If $n=3$ then $P = \appl{V}{(r_1,r_2,r_3,u_i)}$. 
	By \tabref{t:bdownrecur} we have:
	\begin{align*}
		\B{k}{\tilde x}{\appl{V}{(r_1,r_2,r_3,u_i)}} = 
		\propinp{k}{\tilde x}\bout{\prop^{r_1}}{\abs{\widetilde z_1}
		{\bout{\prop^{r_2}}{\abs{\widetilde z_2}{
		\bout{\prop^{r_3}}{\abs{\widetilde z_3}{Q}}\inact
		}}}\inact}\inact
	\end{align*}
	\noindent where 
	$Q = \appl{\V{k+1}{\tilde x}{V}}{(\widetilde z_1,\ldots,\widetilde z_n, \widetilde m)}$;
	$\widetilde{z_i} = (z^i_1,\ldots,z^i_{\len{\Gt{S_i}}})$
	for $i = \{1,2,3\}$; $\widetilde m =(u_{i},\ldots,u_{i+\len{\Gt{C}}-1})$.

	We shall prove the following judgment: 
	\begin{align}
			\Gt{\Gamma}\cat \envPropR;\es;\Gt{\Delta_1 \Delta_2} \cat \Theta 
			\proves \B{k}{\tilde x}{\appl{V}{(\widetilde r, u_i)}} \hastype \Proc
	\end{align}
	
	We use auxiliary derivations: 
	
	\begin{align}
		\label{pt:r-app-ih1}
		\AxiomC{\eqref{eq:r-app-ih}}
		\LeftLabel{\scriptsize (\lemref{lem:weaken}) with $\envPropR_2$}
		\UnaryInfC{\begin{tabular}{c}
			$\Gt{\Gamma}\cat \envPropR;\Gt{\Lambda};
						\Gt{\Delta_1} 
						\cat \Theta_1  
						\proves$  $\V{k+1}{\tilde x}{V} 
						\hastype \lhot{\Gt{\widetilde S C}}$
						\end{tabular}
						}
		\DisplayProof
	\end{align}

	\begin{align}
	\label{pt:r-app-ih2}
		\AxiomC{\eqref{eq:r-app-ih2}} 
	\LeftLabel{\scriptsize (\lemref{lem:subst}) with $\sigma$}
		\UnaryInfC{\begin{tabular}{c}$\Gt{\Gamma}\cat \envPropR; \es; 
					\Gt{\Delta_2} \cat \Gt{{\envR}_2} 
					\proves$  $(\widetilde z_1,\widetilde z_2, \widetilde z_3, \widetilde m) 
					\hastype \Gt{\widetilde S C}$
					\end{tabular}}
		\DisplayProof
	\end{align}
	
	\begin{align}
	\label{pt:r-app-3-6}
		\AxiomC{\eqref{pt:r-app-ih1}}
	\AxiomC{\eqref{pt:r-app-ih2}} 
		\LeftLabel{\scriptsize PolyApp}
		\BinaryInfC{$\Gt{\Gamma}\cat \envPropR;\Gt{\Lambda};
						\Gt{\Delta_1 \cat \Delta_2} 
						\cat \Theta_1 \cat \Gt{{\envR}_2} 
						\proves 
						\appl{\V{k+1}{\tilde x}{V}}
						{(\widetilde z_1,\widetilde z_2, \widetilde z_3, \widetilde m)}
						$}	
		\DisplayProof
	\end{align}
	\noindent where $\sigma = \subst{\widetilde n_1}{\widetilde z_1} 
	 \cdot \subst{\widetilde n_2}{\widetilde z_2} \cdot 
	 \subst{\widetilde n_3}{\widetilde z_3}$
	with $\widetilde n_i = (r^i_{i},\ldots,r^i_{i+\len{\Gt{S_i}}-1})$
	for $i = \{1,2,3\}$. 
	
	\begin{align}
		\label{pt:r-app-3-5}
		\AxiomC{\eqref{pt:r-app-3-6}} 
		\AxiomC{}
		\LeftLabel{\scriptsize PolySess}
		\UnaryInfC{$\Gt{\Gamma}\cat \envPropR;\es;
						\widetilde z_3 : \Gt{S_3} \proves 
		 				\widetilde z_3 \hastype \Gt{S_3}$}
		\LeftLabel{\scriptsize PolyAbs}
		\BinaryInfC{$\Gt{\Gamma}\cat \envPropR;\Gt{\Lambda};
						\Gt{\Delta_1 \Delta_2} 
						\cat \Theta_1 \proves
						\abs{\widetilde z_3}{Q} \hastype 
						\lhot{\Gt{S_3}}$}
		\DisplayProof
	\end{align}
	
	\begin{align}
		\label{pt:r-app-3-4}
		\AxiomC{\eqref{pt:r-app-3-5}}
		\AxiomC{} 
		\LeftLabel{\scriptsize Nil}
		\UnaryInfC{$\Gt{\Gamma}\cat \envPropR;\es;\es \proves \inact$}
		\AxiomC{}
		\LeftLabel{\scriptsize LVar}
		\UnaryInfC{$\Gt{\Gamma}\cat \envPropR;\es;\es \proves
					\prop^{r_3} \hastype \chtype{\lhot{\Gt{S_3}}}$} 
		\LeftLabel{\scriptsize Req}
		\TrinaryInfC{$\Gt{\Gamma}\cat \envPropR;\Gt{\Lambda};
						\Gt{\Delta_1 \Delta_2} \cat 
						\Theta_1 \cat 
						\widetilde z_1 : \Gt{S_1}
						\cat 
						\widetilde z_2 : \Gt{S_2} 
						 \proves 
						\bout{\prop^{r_3}}{\abs{\widetilde z_3}{Q}}\inact
						 \hastype 
						\Proc$}
		\DisplayProof
	\end{align}

	
	\begin{align}
		\label{pt:r-app-3-3}
		\AxiomC{\eqref{pt:r-app-3-4}} 
		\AxiomC{}
		\LeftLabel{\scriptsize PolySess}
		\UnaryInfC{$\Gt{\Gamma}\cat \envPropR;\es;
						\widetilde z_2 : \Gt{S_2} \proves 
		 				\widetilde z_2 \hastype \Gt{S_2}$}
		\LeftLabel{\scriptsize PolyAbs}
		\BinaryInfC{$\Gt{\Gamma}\cat \envPropR;\Gt{\Lambda};
						\Gt{\Delta_1 \Delta_2} 
						\cat \Theta_1 \cat 
						\widetilde z_1 : \Gt{S_1}\proves
						\abs{\widetilde z_2}{
						\bout{\prop^{r_3}}{\abs{\widetilde z_3}{Q}}\inact
						} \hastype 
						\lhot{\Gt{S_2}}$}
		\DisplayProof
	\end{align}
	
	\begin{align}
		\label{pt:r-app-3-2}
		\AxiomC{\eqref{pt:r-app-3-3}} 
		\AxiomC{}
		\LeftLabel{\scriptsize Nil}
		\UnaryInfC{$\Gt{\Gamma}\cat \envPropR;\es;\es \proves \inact$}
		\AxiomC{}
		\LeftLabel{\scriptsize LVar}
		\UnaryInfC{$\Gt{\Gamma}\cat \envPropR;\es;\es \proves
					\prop^{r_2} \hastype \chtype{\lhot{\Gt{S_2}}}$} 
		\LeftLabel{\scriptsize Req}
		\TrinaryInfC{$\Gt{\Gamma}\cat \envPropR;\Gt{\Lambda};
						\Gt{\Delta_1 \Delta_2} \cat \Theta_1 \cat 
						\widetilde z_1 : \Gt{S_1}
						  \proves 
						{\bout{\prop^{r_2}}{\abs{\widetilde z_2}{
						\bout{\prop^{r_3}}{\abs{\widetilde z_3}{Q}}\inact
						}}}\inact \hastype 
						\Proc$}
		\DisplayProof
	\end{align}	

	\begin{align}
		\label{pt:r-app-3-1}
		\AxiomC{\eqref{pt:r-app-3-2}} 
		\AxiomC{}
		\LeftLabel{\scriptsize PolySess}
		\UnaryInfC{$\Gt{\Gamma}\cat \envPropR;\es;
						\widetilde z_1 : \Gt{S_1} \proves 
		 				\widetilde z_1 \hastype \Gt{S_1}$}
		\LeftLabel{\scriptsize PolyAbs}
		\BinaryInfC{$\Gt{\Gamma}\cat \envPropR;\Gt{\Lambda};
						\Gt{\Delta_1 \Delta_2} 
						\cat \Theta_1 \proves
						\abs{\widetilde z_1}
						{\bout{\prop^{r_2}}{\abs{\widetilde z_2}{
						\bout{\prop^{r_3}}{\abs{\widetilde z_3}{Q}}\inact
						}}}\inact \hastype 
						\lhot{\Gt{S_1}}$}
		\DisplayProof
	\end{align}
	
	\begin{align}
		\label{pt:r-app-3}
		\AxiomC{\eqref{pt:r-app-3-1}} 
		\AxiomC{}
		\LeftLabel{\scriptsize Nil}
		\UnaryInfC{$\Gt{\Gamma}\cat \envPropR;\es;\es \proves \inact$}
		\AxiomC{} 
		\LeftLabel{\scriptsize LVar}
		\UnaryInfC{$\Gt{\Gamma}\cat \envPropR;\es;\es \proves
					\prop^{r_1} \hastype \chtype{\lhot{\Gt{S_1}}}$} 
		\LeftLabel{\scriptsize Req}
		\TrinaryInfC{$\Gt{\Gamma}\cat \envPropR;\Gt{\Lambda};
						\Gt{\Delta_1 \Delta_2} 
						\cat \Theta_1 \proves 
						\bout{\prop^{r_1}}{\abs{\widetilde z_1}
						{\bout{\prop^{r_2}}{\abs{\widetilde z_2}{
						\bout{\prop^{r_3}}{\abs{\widetilde z_3}{Q}}\inact
						}}}\inact}\inact$}
		\DisplayProof
	\end{align}
	The following tree proves this case: 
	\begin{align*}
		\AxiomC{\eqref{pt:r-app-3}} 
		\AxiomC{} 
		\LeftLabel{\scriptsize PolyVar}
		\UnaryInfC{$\Gt{\Gamma_1}\cat \envPropR;\Gt{\Lambda};\es \proves 
					\widetilde x \hastype \widetilde M$}
		\LeftLabel{\scriptsize PolyRcv}
		\BinaryInfC{$\Gt{\Gamma_1}\cat \envPropR;\es;\Gt{\Delta_1 \cat \Delta_2} 
						\cat \Theta \proves 
						\B{k}{\tilde x}{\appl{V}{(\widetilde r, u_i)}} 
						\hastype \Proc$}
		\DisplayProof	
	\end{align*}
\item 
	
	Now we consider the general case for any $n \geq 1$: 
	
	By \tabref{t:bdownrecur} we have: 
	\begin{align*}
	 	\B{k}{\tilde x}{\appl{V}{\widetilde r}} = \propinp{k}{\widetilde x}\overbracket{\prop^{r_1}!\langle \lambda \widetilde z_1. \ldots 
	 	\prop^{r_n}!\langle \lambda \widetilde z_n.}^{n = \len{\widetilde r}}
	 	Q
	 \rangle \ldots \rangle
	\end{align*}
	\noindent where 
	$Q = \appl{\V{k+1}{\tilde x}{V}}{(\widetilde r_1,\ldots,\widetilde r_n, \widetilde m)}$ with:
	$\widetilde{z_i} = (z^i_1,\ldots,z^i_{\len{\Gt{S_i}}})$
	for $i = \{1,\ldots,n\}$; and 
	$\widetilde m =(u_{i},\ldots,u_{i+\len{\Gt{C}}-1})$.

	We shall prove the following judgment: 
	\begin{align}
			\Gt{\Gamma}\cat \envPropR;\es;\Gt{\Delta_1 \Delta_2} \cat \Theta 
			\proves \B{k}{\tilde x}{\appl{V}{(\widetilde r, u_i)}} \hastype \Proc
	\end{align}

	We construct auxiliary derivations parametrized by $k$ and denoted by $d(k)$. 
	 If $k=n$, derivation $d(n)$ is defined as: 
	
	\begin{align} 
		\label{pt:r-app-b2}
		\AxiomC{\eqref{eq:r-app-ih}}
			\LeftLabel{\scriptsize (\lemref{lem:weaken}) with ${\envPropR}_2$} 
			\UnaryInfC{$\Gt{\Gamma}\cat \envPropR;\Gt{\Lambda};
						\Gt{\Delta_1} \cat \Theta_1 \proves \V{k+1}{\tilde x}{V}
						\hastype \shot{\Gt{\widetilde S C}} 
						$}
			\DisplayProof
	\end{align}

	\begin{align}
			\label{pt:r-app-b}
			\AxiomC{\eqref{pt:r-app-b2}}
			\AxiomC{\eqref{eq:r-app-ih2}}
			\LeftLabel{\scriptsize (\lemref{lem:subst}) with $\sigma$}
			\UnaryInfC{$\Gt{\Gamma}\cat \envPropR;\es;\Gt{{\envR}_2} 
							\proves (\widetilde r_1,\ldots,\widetilde r_n, \widetilde m)
							\hastype \Gt{\widetilde S C}$} 
			\LeftLabel{\scriptsize PolyAbs}
			\BinaryInfC{$\Gt{\Gamma}\cat \envPropR;\Gt{\Lambda};
						\Gt{\Delta_1 \cat \Delta_2} 
						\cat$ 
						$\Theta_1 \cat \Gt{{\envR}_2} \proves
						\abs{\widetilde z_n}{Q} \hastype 
						\lhot{\Gt{S_n}}$}
			\DisplayProof
	\end{align}
	
	\noindent where $\sigma = \prod_{i \in \{1,\ldots,n\}}\subst{\widetilde n_i}{\widetilde z_i}$
	with $\widetilde n_i = (r^i_{i},\ldots,r^i_{i+\len{\Gt{S_i}}-1})$
	for $i = \{1,\ldots,n\}$. 
	
	\begin{align}
	\AxiomC{\eqref{pt:r-app-b}} 
	\AxiomC{}
		\LeftLabel{\scriptsize Nil}
		\UnaryInfC{$\Gt{\Gamma}\cat \envPropR;\es;\es \proves \inact$}
	\AxiomC{}
	\LeftLabel{\scriptsize LVar}
	\UnaryInfC{$\Gt{\Gamma}\cat \envPropR;\es;\es \proves 
				\prop^{r_n} \hastype \chtype{\lhot{\Gt{S_n}}}$} 
		\LeftLabel{\scriptsize Req}
			\TrinaryInfC{$\Gt{\Gamma}\cat \envPropR;\Gt{\Lambda};
						\Gt{\Delta_1 \cat \Delta_2} 
						\cat$ 
						$\Theta_1 \cat \Gt{{\envR}_2} \proves \bout{\prop^{r_n}}{\abs{\widetilde z_n}{Q}}\inact$}
			\DisplayProof
	\end{align}

	Otherwise, if $k \in \{1,...,n-1\}$, derivation $d(k)$ is as follows: 
	 
	\begin{align}
		\label{pt:r-abs-k}
		\AxiomC{$d(k+1)$} 
	\AxiomC{}
	\LeftLabel{\scriptsize PolyVar}
	\UnaryInfC{$\Gt{\Gamma}\cat \envPropR;\es;\widetilde z_k : \Gt{S_k} 
				\proves \widetilde z_k \hastype \Gt{S_k}$} 
	\LeftLabel{\scriptsize PolyAbs}
	\BinaryInfC{\begin{tabular}{c}$\Gt{\Gamma}\cat \envPropR;\Gt{\Lambda};
						\Gt{\Delta_1 \cat \Delta_2} 
						\cat \Theta_1 
						\cat 
						(\widetilde z_1,\ldots,\widetilde z_{k-1}) : 
						(\Gt{S_1},\ldots,\Gt{S_{k-1}})
						\proves$ \\
						$\abs{\widetilde z_k}\overbracket{\prop^{r_{k+1}}!\langle \lambda 
						\widetilde z_{k+1}. \ldots 
	 	\prop^{r_n}!\langle \lambda \widetilde z_n.}^{n-k} 
	 	Q
	 	\overbracket{\rangle.\inact \ldots \rangle.\inact}^{n-k}
						\hastype \lhot{\Gt{S_1}}$
			\end{tabular}} 
			\DisplayProof 
	\end{align}
	
	\begin{align}
	\label{pt:r-acc-k}
	\AxiomC{\eqref{pt:r-abs-k}}
		\AxiomC{$\Gt{\Gamma}\cat \envPropR;\es; \es \proves \prop^{r_1} \hastype 
						\chtype{\lhot{\Gt{S_1}}}$}
	\LeftLabel{\scriptsize Acc}
		\BinaryInfC{
		\begin{tabular}{c}
		$\Gt{\Gamma}\cat \envPropR;\Gt{\Lambda};
						\Gt{\Delta_1 \cat \Delta_2} 
						\cat \Theta_1 
						\cat (\widetilde z_1,\ldots,\widetilde z_{k-1}) : 
						(\Gt{S_1},\ldots,\Gt{S_{k-1}})
						\proves$  \\
						$\overbracket{\prop^{r_k}!\langle \lambda 
						\widetilde z_k. \ldots 
	 	\prop^{r_n}!\langle \lambda \widetilde z_r.}^{n-k+1} 
	 	Q
	 	\overbracket{\rangle.\inact \ldots \rangle.\inact}^{n-k+1}
						$
				\end{tabular}}
		\DisplayProof	
	\end{align}


	The following tree proves this case: 
	\begin{align*}
		\AxiomC{$d(1)$} 
		\AxiomC{} 
		\LeftLabel{\scriptsize PolyVar}
		\UnaryInfC{$\Gt{\Gamma_1}\cat \envPropR;\Gt{\Lambda};\es \proves 
					\widetilde x \hastype \widetilde M$}
		\LeftLabel{\scriptsize PolyRcv}
		\BinaryInfC{$\Gt{\Gamma_1}\cat \envPropR;\es;\Gt{\Delta_1 \cat \Delta_2} 
						\cat \Theta \proves 
						\B{k}{\tilde x}{\appl{V}{(\widetilde r, u_i)}} 
						\hastype \Proc$}
		\DisplayProof	
	\end{align*}
\end{itemize}

\end{enumerate}
 \end{enumerate}
\end{proof}

\subsection{Proof of \thmref{t:decomprec}}
\label{app:decomprec}

\begin{proof} 
By assumption $\Gamma;\es;\Delta \cat \envR \proves P \hastype \Proc$.  Then,
	by applying \lemref{lem:subst} we have:
	\begin{align}
		\label{eq:typerec-subst-ih}
		\Gamma\sigma;\es;\Delta \sigma \cat \envR \sigma \proves P \sigma \hastype \Proc
	\end{align}

  By \thmref{t:typerec} on \eqref{eq:typerec-subst-ih} we have: 
  \begin{align}
  \label{eq:typerec-ih}
  	\Gt{\Gamma_1 \sigma}\cat \envPropR;\es;\Gt{\Delta \sigma} \cat \Theta	\proves 
  	\B{k}{\epsilon}{P\sigma} \hastype \Proc 
  \end{align}
	\noindent where $\Theta$ is balanced with
	$\dom{\Theta} = \{\prop_k,\prop_{k+1},\ldots,\prop_{k+\len{P}-1} \}
	\cup \{\dual{\prop_{k+1}},\ldots,\dual{\prop_{k+\len{P}-1}} \}$, and
	$\Theta(\prop_k)=\btinp{\cdot}\tinact$, and 
	  $\envPropR = \prod_{r \in \dom{\envR}} c^r:\chtype{\lhot{\Rts{}{s}{\envR(r)}}}$.
	 By assumption,  $\fv{P} = \es$.
  
  By \defref{def:decomp-rec}, we
	shall prove the following judgment:
	\begin{align*}
		\Gt{\Gamma \sigma};\es;\Gt{\Delta \sigma} \cat \Gt{\envR \sigma} \proves \news{\widetilde \prop}
		\news{\widetilde c_r}\Big(
  		\prod_{r \in \tilde v} \binp{\prop^r}{b}\appl{b}{\widetilde r} \Par
  		\propout{k}{} \inact \Par \B{k}{\epsilon}{P\sigma}\Big)
	\end{align*}
  \noindent where: $k >0$;  
  $\widetilde v = \rfn{P}$; $\widetilde r = 
  (r_1,\ldots,r_{\len{\Gt{S}}})$ for each $r \in \widetilde v$.
  
We know $\dom{\envR} = \widetilde v$.  We assume that
recursive session types are unfolded. 
  By \defref{def:typesdecompenv} and \defref{def:decomptyp-rec}, for
  $r \in \dom{\envR}$ we have:
   $$\Gt{\envR}(r) = \Rt{\envR(r)} = \Rts{}{s}{\envR(r)}$$
 We use a family of auxiliary derivations parametrized by $r \in \widetilde v$. 
  
  \begin{align}
  \label{pt:r-d1}
  \AxiomC{}
  \LeftLabel{\scriptsize LVar}
  \UnaryInfC{\begin{tabular}{c}$\Gt{\Gamma \sigma}\cat \envPropR;b:\lhot{\Rts{}{s}{\envR(r)}};\es \proves$\\
  		$b \hastype \lhot{\Rts{}{s}{\envR(r)}}$\end{tabular}}
  	\AxiomC{}
  	\LeftLabel{\scriptsize PolySess}
  \UnaryInfC{\begin{tabular}{c}$\Gt{\Gamma \sigma}\cat \envPropR;\es;\widetilde r:\Rts{}{s}{\envR(r)} \proves$ \\ $\widetilde r \hastype \Rts{}{s}{\envR(r)}$
  \end{tabular}}
  \LeftLabel{\scriptsize PolyApp}
  	\BinaryInfC{$\Gt{\Gamma \sigma}\cat \envPropR;b:\lhot{\Rts{}{s}{\envR(r)}};
  	\widetilde r:\Rts{}{s}{\envR(r)} \proves
  				(\appl{b}{\widetilde r})$}
  			\DisplayProof
  \end{align}

  \begin{align}
  	\label{pt:r-d}
  	\AxiomC{\eqref{pt:r-d1}} 
  	\AxiomC{}
  	\LeftLabel{\scriptsize Sh} 
  	\UnaryInfC{\begin{tabular}{c}$\Gt{\Gamma \sigma}\cat \envPropR;\es;\es \proves \prop^r \hastype$
 \\  				$\chtype{\lhot{\Rts{}{s}{\envR(r)}}}$\end{tabular}}
  	\AxiomC{}
  	\LeftLabel{\scriptsize LVar}
  	\UnaryInfC{\begin{tabular}{c}$\Gt{\Gamma \sigma}\cat \envPropR;b:\lhot{\Rts{}{s}{\envR(r)}};\es$ \\ 
  				$\proves b \hastype \lhot{\Rts{}{s}{\envR(r)}}$
  				\end{tabular}} 
  	\LeftLabel{\scriptsize Acc}
  	\TrinaryInfC{$\Gt{\Gamma \sigma}\cat \envPropR;\es;\widetilde r:\Rts{}{s}{\envR(r)} \proves \binp{\prop^r}{b}(\appl{b}{\widetilde r})$}
  	\DisplayProof
  \end{align}
  
  We will then use:

  
  \begin{align}
  \label{pt:prod}
  \AxiomC{for $r \in \widetilde v$} 
  \AxiomC{\eqref{pt:r-d}}
  \LeftLabel{\scriptsize Par ($\len{\widetilde v}-1$ times)}
  \BinaryInfC{$\Gt{\Gamma \sigma}\cat \envPropR;\es;\Gt{\envR \sigma} \proves \prod_{r \in \tilde v} \binp{\prop^r}{b}(\appl{b}{\widetilde r})$}
  \DisplayProof
  \end{align}
  
  where we apply Rule~\textsc{Par} $\len{\widetilde v}-1$ times and for every $r \in \widetilde v$ we apply derivation \eqref{pt:r-d}. 
  Notice that by \defref{def:typesdecompenv} and \defref{def:decomptyp-rec} we have $\Gt{\envR \sigma} = 
  \prod_{r \in \tilde v}\widetilde r : \Rts{}{}{\envR(r)}$.

  \begin{align}
  \label{pt:r-par}
  \AxiomC{}
  \LeftLabel{\scriptsize Sess}
      \UnaryInfC{$\Gt{\Gamma \sigma}\cat \envPropR;\es;\dual{\prop_k}:\btout{\cdot}\tinact
  			\proves \propout{k}{} \inact$}
  \AxiomC{\eqref{eq:typerec-ih}}
  \LeftLabel{\scriptsize Par}
  \BinaryInfC{$\Gt{\Gamma \sigma}\cat \envPropR;\es;\Gt{\Delta \sigma} \cat 
  				\Theta \cat \dual{\prop_k}:\btout{\cdot}\tinact \proves \propout{k}{} \inact \Par \B{k}{\epsilon}{P\sigma}$}
  				\DisplayProof 
  \end{align}
  
  The following tree proves this case: 
  \begin{align}
  \AxiomC{\eqref{pt:prod}} 
  \AxiomC{\eqref{pt:r-par}}
  \LeftLabel{\scriptsize Par}
  	\BinaryInfC{
  	\begin{tabular}{c}
  	$\Gt{\Gamma \sigma}\cat \envPropR;\es;\Gt{\Delta \sigma} \cat 
  	\Gt{\envR \sigma} \cat
  				\Theta \cat \dual{\prop_k}:\btout{\cdot}\inact \proves$  \\
  		$\prod_{r \in \tilde v} \binp{\prop^r}{b}(\appl{b}{\widetilde r}) \Par
  		\propout{k}{} \inact \Par \B{k}{\epsilon}{P\sigma}$
  		\end{tabular}} 
  	\LeftLabel{\scriptsize PolyRes}
  	\UnaryInfC{
  	\begin{tabular}{c}
  	$\Gt{\Gamma \sigma};\es;\Gt{\Delta \sigma} \cat \Gt{\envR \sigma} \cat
  				\Theta \cat \dual{\prop_k}:\btout{\cdot}\tinact \proves$ \\ 
  				$\news{\widetilde c_r}(
  		\prod_{r \in \tilde v} \binp{\prop^r}{b}(\appl{b}{\widetilde r}) \Par
  		\propout{k}{} \inact \Par \B{k}{\epsilon}{P\sigma})$
  		\end{tabular}}
  	\LeftLabel{\scriptsize PolyResS}
  	\UnaryInfC{$\Gt{\Gamma \sigma};\es;\Gt{\Delta \sigma} \cat 
  	\Gt{\envR \sigma}
  	\proves \news{\widetilde \prop}\news{\widetilde c_r}(
  		\prod_{r \in \tilde v} \binp{\prop^r}{b}(\appl{b}{\widetilde r}) \Par
  		\propout{k}{} \inact \Par \B{k}{\epsilon}{P\sigma})$}	
  	\DisplayProof
  \end{align}

\end{proof}

\end{document}